\documentclass{llncs}

\usepackage{amsmath}
\usepackage[usenames]{color}
\usepackage{amssymb}
\usepackage{procalg}
\usepackage{tikz}
\usepackage{amscd}
\usepackage{txfonts}
\usepackage{amsfonts,semantic,colortbl,mathrsfs,stmaryrd,mathtools}
\usepackage{epsfig,graphicx,subfigure}
\usepackage{enumerate,enumitem}
\usepackage{multirow,multicol}
\usepackage{tabularx}
\usepackage{gastex}

\usepackage{color}
\makeatletter
\@ifundefined{ifAP}{%

\newcommand{\todo}[1]{\underline{\sf todo}: {\color{red}{#1}}}
}{
\ifAP
\else
\fi
\newcommand{\todo}[1]{}
}
\makeatother

\newenvironment{tarray}[2][c]{
  \settowidth{\dimen1}{${} = {}$}%
  \setlength{\arraycolsep}{0.125\dimen1}%
  \hspace{-0.125\dimen1}\array[#1]{#2}\relax
}
{
   \endarray\hspace{-0.125\dimen1}
}
\def\tbb{\begin{tarray}{llll}}
\def\tbbt{\begin{tarray}[t]{llll}}
\def\tee{\end{tarray}}

\newcommand{\T}{\ensuremath{\mathcal{T}} }
\newcommand{\A}{\ensuremath{\mathcal{A}}}
\newcommand{\Atau}{\ensuremath{\mathcal{A}_{\tau}}}
\newcommand{\Sta}{\ensuremath{\mathcal{S}}}
\newcommand{\Atom}{\ensuremath{\mathbb{A}}}
\newcommand{\R}{\mathcal{\mathop{R}}}
\newcommand{\M}{\mathcal{M}}
\newcommand{\D}{\mathcal{D}}
\newcommand{\N}{\mathcal{N}}
\newcommand{\Dbox}{\mathcal{D}_{\Box}}
\newcommand{\tphd}[1]{\check{#1}}
\newcommand{\tphdL}[1]{{#1 \!}^{\scriptscriptstyle <}}
\newcommand{\tphdR}[1]{\prescript{\scriptscriptstyle >}{}{\! #1}}
\newcommand{\Conf}[1][]{\mathcal{C}_{#1}}

\newcommand{\bbisimd}{\ensuremath{\mathrel{\bbisim^{\Delta}}}}
\newcommand{\nil}{\ensuremath{\mathalpha{\mathbf{0}}}}
\newcommand{\encode}[1]{\lceil #1\rceil}

\newcommand{\orb}{\ensuremath{\mathit{orb}}}
\newcommand{\num}{\ensuremath{\mathit{num}}}
\newcommand{\supp}{\ensuremath{\mathit{supp}}}
\newcommand{\atom}{\ensuremath{\mathit{atom}}}
\newcommand{\start}{\ensuremath{\mathit{start}}}
\newcommand{\finish}{\ensuremath{\mathit{finish}}}
\newcommand{\trans}{\ensuremath{\mathit{trans}}}
\newcommand{\cop}{\ensuremath{\mathit{copy}}}
\newcommand{\fresh}{\ensuremath{\mathit{fresh}}}
\newcommand{\refresh}{\ensuremath{\mathit{refresh}}}
\newcommand{\chec}{\ensuremath{\mathit{check}}}
\newcommand{\enu}{\ensuremath{\mathit{enumerate}}}
\newcommand{\gen}{\ensuremath{\mathit{generate}}}
\newcommand{\act}{\ensuremath{\mathit{action}}}
\newcommand{\tran}{\ensuremath{\mathit{transition}}}
\newcommand{\RTMI}{\ensuremath{\textrm{RTM}^\infty}}
\newcommand{\RTMA}{\ensuremath{\textrm{RTM}^\Atom}}

\newcommand{\outcap}[2]{\ensuremath{\mathalpha{\overline{#1}{#2}}}}
\newcommand{\incap}[2]{\ensuremath{\mathalpha{{#1}(#2)}}}
\newcommand{\taucap}{\ensuremath{\mathalpha{\tau}}}
\newcommand{\pref}[1]{\ensuremath{\mathalpha{#1}.}}
\renewcommand{\parc}{\ensuremath{\mathbin{\mid}}}
\newcommand{\restr}[2]{\ensuremath{(#1){#2}}}
\newcommand{\repl}[1]{\ensuremath{\mathalpha{!}#1}}
\newcommand{\aeq}{\ensuremath{\mathrel{=_{\alpha}}}}

\newcommand{\fn}[1]{\ensuremath{\mathsf{fn}(#1)}}
\newcommand{\bn}[1]{\ensuremath{\mathsf{bn}(#1)}}

\newcommand{\piact}[1][\alpha]{\ensuremath{\mathalpha{#1}}}
\newcommand{\inact}[2]{\ensuremath{\mathalpha{{#1}#2}}}
\newcommand{\outact}[2]{\ensuremath{\mathalpha{\overline{#1}{#2}}}}
\newcommand{\boutact}[2]{\ensuremath{\mathalpha{\overline{#1}(#2)}}}
\newcommand{\tauact}{\ensuremath{\mathalpha{\tau}}}
\newcommand{\setb}[1]{\ensuremath{\llbracket{#1}\rrbracket}}
\newcommand{\sete}[1]{\ensuremath{\mathalpha{e_{#1}}}}
\newcommand{\tup}[1]{\ensuremath{\mathalpha{\bar{a}_{#1}}}}

\newcommand{\delete}[1]{} 

\newcommand{\conf}[1]{}
\newcommand{\arx}[1]{#1}

\title{Reactive Turing Machines with Infinite Alphabets}
\author{Bas Luttik\and Fei Yang}
\institute{Eindhoven University of Technology, The Netherlands}

\begin{document}
\maketitle

\begin{abstract}
The notion of Reactive Turing machine (RTM) was proposed as an orthogonal extension of Turing machines with interaction. RTMs are used to define the notion of executable transition system in the same way as Turing machines are used to define the notion of computable function on natural numbers. RTMs inherited finiteness of all sets involved from Turing machines, and as a consequence, in a single step, an RTM can only communicate elements from a finite set of data. Some process calculi, such as the $\pi$-calculus, essentially depend on an infinite alphabet of actions, and hence it immediately follows that transition systems specified in these calculi are not executable. On closer inspection, however, the $\pi$-calculus does not appear to use the infinite data in a non-computable manner.

In this paper, we investigate several ways to relax the finiteness requirement. We start by considering a variant of RTMs in which all sets are allowed to be countable, and we get a notion of infinitary RTM. Infinitary RTMs are extremely expressive such that we can hardly use them as a expressiveness criterion. Then, we refine the model by adding extra restrictions. As a result, we define a notion of RTM with atoms. It is a more restricted variant of RTMs in which the sets of actions and data symbols are still allowed to be infinite. We propose a notion of of nominal executability based on RTMs with atoms, and show that every effective transition system with atoms is nominally executable. It will follow that processes definable in the $\pi$-calculus are nominally executable. In contrast, in the process specification language mCRL2 it is possible to specify processes that are not nominally executable. Thus, nominal executability provides a new expressiveness criterion for process calculi.
\end{abstract}

\section{Introduction}~\label{sec:intro}

\delete{We should not literally copy (this much) text from another paper. Moreover, this is another paper, written with another goal and for another venue, so it should have another introduction.

I think we should rewrite as follows:
\begin{enumerate}
\item Briefly introduce RTMs as a model that serves to define which behaviours (labelled transition systems) are executable. We can refer to \cite{BLT2013} for more elaborate explantations and motivations. It is probably necessary to explain that the notion of executability is parameterised by the choice of a behavioural equivalence.
\item Say that RTMs can be used to characterise the absolute expressiveness of process calculi. There are two interesting questions one might ask about a process calculus: (1) Is it possible to specify every executable behaviour in the process calculus?; and (2) Is every behaviour specified in the calculus executable. In \cite{BLT2013}, a one-to-one correspondence was established between the executable behaviours and the behaviours finitely definable (with a guarded recursive specification) in a process calculus with deadlock, a constant denoting successful termination, action prefix, non-deterministic choice, and parallel composition with handshaking communication; the result is up to divergence-preserving branching bisimilarity. In \cite{LY15}, it was established that every executable behaviour can be specified in the $\pi$-calculus.
\item But it was also observed that $\pi$-calculus processes are generally not executable: The $\pi$-calculus presupposes an infinite set of names, which gives rise to an infinite set of action labels. It is straightforward to define $\pi$-calculus processes that, in fact, execute an unbounded number of distinct actions. The infinity of the set of names is essential in the $\pi$-calculus both for the mechanism by which input of data is modelled, and for the mechanism by which the notion of private link between processes is modelled. But, one may also argue that these reasons are more syntactic than semantic; the mechanisms themselves are not essentially infinitary. In \cite{LY15} it was already argued that if one abstracts, to some extent, from the two aspects for which the infiniteness is needed, then behaviour defined in the $\pi$-calculus are executable, at least up to branching bisimilarity.
\item Then we can explain the goal of this paper: We want to explore a generalised notion of executability that allows an infinite alphabet of actions.
\item First, we shall observe that allowing an infinite alphabet only makes sense if we also allow the set of data symbols (or, equivalently, the set of states) to be infinite. Putting no restrictions at all yields a notion of executability that is not discriminating at all: every countable transition system is executable by an infinitary RTM. The result has two immediate corollaries: Every effective transition system is executable up to divergence-preserving branching bisimilarity by an infinitary RTM with an effective transition relation, and every computable transition system is executable up to divergence-preserving branching bisimilarity by an infinitary RTM with a computable transition relation.
\item Then, we shall consider a more restricted notion of infinitary executability. We define RTMs with atoms as an extension of Turing machines with atoms~\cite{BKLT13}. RTMs with atoms allow the sets involved in the definition to be infinite, but in a limited way; intuitively, the infinity can only be exploited to implement the notions of freshness. We characterise the associated notion of executability.
\item Finally (and some of the conclusions should still be drawn), we apply the results to draw conclusions about the executability of process calculi. We shall prove that all $\pi$-calculus processes are executable by RTMs with atoms up to branching bisimilarity. On the other hand, in mCRL2 it is possible to define behaviours that are not executable by RTMs with atoms. (It is important to contrast the two, because it shows the usefulness of the notion of RTM with atoms.)
\end{enumerate}
}

The Turing machine~\cite{Turing1936} is a machine model that formalizes which functions from natural numbers to natural numbers are effectively computable. For a long time, computing functions in a stand-alone fashion was the primary task of computers, but nowadays modern computing systems continuously interact with their environment, and their operations are not supposed to terminate. However, Turing machines lack facilities to adequately deal with the above two important ingredients of modern computing: \emph{interaction} and \emph{non-termination}. In recent decades, quite a number of extended models of computation have been proposed to study the combination of computation and interaction (see, e.g., the collection~\cite{GSW2006}).

The notion of Reactive Turing machine~\cite{BLT2013} was proposed as an orthogonal extension of classical Turing machines with a facility to model interaction in the style of concurrency theory. It subsumes some other concurrent computation models such as the interactive Turing machines~\cite{LY16}.

Reactive Turing machines serve to define which behaviours (labelled transition systems) can be executed by a computing system. We say that a transition system is executable if it is behaviourally equivalent to the transition system of a reactive Turing machine. Note that the notion of executability is parameterised by the choice of a behavioural equivalence: if a behaviour specified in a transition system is not executable up to some fine notion of behavioural equivalence (e.g., divergence-preserving branching bisimilarity), it may still be executable up to some coarser notion of behavioural equivalence (e.g., the divergence-insensitive variant of branching bisimilarity). The entire spectrum of behavioural equivalences~\cite{Glabbeek1993} is at our disposal to draw precise conclusions.

RTMs can be used to characterise the absolute expressiveness of process calculi. In the theory of executability, we ask two interesting questions about a process calculus.
\begin{enumerate}
\item Is it possible to specify every executable behaviour in the process calculus?
\item Is every behaviour specified in the calculus executable?
\end{enumerate}
A one-to-one correspondence was established between the executable behaviours and the behaviours finitely definable (with a guarded recursive specification) in a process calculus with deadlock, a constant denoting successful termination, action prefix, non-deterministic choice, and parallel composition with handshaking communication; the result is up to divergence-preserving branching bisimilarity~\cite{BLT2013}. It was established that every executable behaviour can be specified in the $\pi$-calculus up to divergence-preserving branching bisimilarity~\cite{LY15}.

But it was also observed that $\pi$-calculus processes are generally not executable: The $\pi$-calculus presupposes an infinite set of names, which gives rise to an infinite set of action labels. It is straightforward to define $\pi$-calculus processes that, in fact, execute an unbounded number of distinct actions. The infinity of the set of names is essential in the $\pi$-calculus, both for the mechanism by which input of data is modelled, and for the mechanism by which the notion of private link between processes is modelled. But, one may also argue that these reasons are more syntactic than semantic; the mechanisms themselves are not essentially infinitary. It was already argued that if one abstracts, to some extent, from the two aspects for which the infiniteness is needed, then behaviour defined in the $\pi$-calculus are executable, at least up to branching bisimilarity~\cite{LY15}.

Moreover, a notable number of process calculi leading to transition systems with infinite sets of labels were proposed for various purposes, for instance, the psi-calculus~\cite{bengtson2009psi}, the value-passing calculus~\cite{fu2013value} and mCRL2~\cite{groote_et_al:DSP:2007:862}. We extend the formalism of reactive Turing machines to adapt to the behaviour with infinite sets of labels such as the transition systems of the models mentioned above.

In this paper, we shall first explore a generalised notion of executability based on Reactive Turing Machines that allows an infinite alphabet of actions. First, we shall observe that allowing an infinite alphabet only makes sense if we also allow the set of data symbols (or, equivalently, the set of states) to be infinite. Putting no restrictions at all yields a notion of executability that is not discriminating at all: every countable transition system is executable by an infinitary RTM.\arx{ The result has two immediate corollaries: Every effective transition system is executable modulo divergence-preserving branching bisimilarity by an infinitary RTM with an effective transition relation, and every computable transition system is executable modulo divergence-preserving branching bisimilarity by an infinitary RTM with a computable transition relation.}

Then, we shall consider a more restricted notion of infinitary executability. Following the research about nominal sets for variable binding with infinite alphabets~\cite{GP2002}, the notion of Turing machine with atoms was introduced~\cite{BKLT13}. We define RTMs with atoms as an extension of Turing machines with atoms. RTMs with atoms allow the sets involved in the definition to be infinite, but in a limited way; intuitively, the infinity can only be exploited to generate fresh names in an execution. By using the notion of legal and orbit-finite set, the Turing machines with atoms are allowed to have infinite alphabets, and while keeping the transition relation finitely definable and finite up to atom automorphism. We say a transition system is nominally executable if it is branching bisimilar to a transition system associated with an RTM with atoms.

To characterise the notion of nominal executability, we propose a notion of transition system with atoms as a restricted version of transition systems. We show that every effective transition system with atoms is nominally executable. Moreover, we also conclude that the transition systems associated with an \RTMA{} is an effective transition system with atoms. Therefore, nominally executable transition systems exactly equals to the effective transition systems with atoms.

Finally, we apply the results to draw conclusions about the executability of process calculi. We shall prove that all $\pi$-calculus processes are nominally executable. On the other hand, in mCRL2 it is possible to define behaviours that are not nominally executable.\delete{We also show that behaviours definable in mCRL2 are executable by infinitary RTMs, but the RTMs do not necessarily have an effective transition relation.} Therefore, nominal executability provides a new expressiveness criterion for process calculi involving infinite alphabets.

The paper is organized as follows. In Section~\ref{sec:pre}, the basic definitions of executability are recapitulated, and we also recall some theorems in~\cite{BLT2013,LY15}. In Section~\ref{sec:rtm-infi}, we investigate reactive Turing machines with infinite sets of labels, data symbols and transitions. In Section~\ref{sec:atoms}, we review the notion of set with atoms from~\cite{BLKT11,BKLT13}, propose the reactive Turing machines with atoms, and characterise the class of the transition systems that are nominally executable. In Section~\ref{sec:related work}, the nominal executability of some process calculi involving infinite alphabets is discussed.
The paper concludes in Section~\ref{sec:conclusion}, in which a hierarchy of executable transition systems with infinite sets is proposed. 

\conf{
We provide an Appendix with the definitions from other works, detailed examples and proofs. A full version of this submission is available on arxiv~\cite{LY16a}.
} 
\section{Preliminaries}~\label{sec:pre}

In this section, we briefly recap the theory of executability~\cite{BLT2013}, which is based on RTMs in which all sets involved are finite. We shall generalise the finiteness condition for the sets in later sections.

\paragraph*{The behaviour of discrete-event systems}

We use the notion of transition system to represent the behaviour of discrete-event systems. It is parameterised by a set $\A$ of \emph{action symbols}, denoting the observable events of a system. We shall later impose extra restrictions on $\A$, e.g., requiring that it be finite or have a particular structure, but for now we let $\A$ be just an arbitrary abstract set. We extend $\A$ with a special symbol $\tau$, which intuitively denotes unobservable internal activity of the system. We shall abbreviate $\A \cup\{\tau\}$ by $\Atau$.

\begin{definition}
[Labelled Transition System]\label{def:lts}
An \emph{$\Atau$-labelled transition system} is a triple $(\Sta,\step{},\uparrow)$, where,
\begin{enumerate}
    \item $\Sta$ is a set of \emph{states},
    \item ${\step{}}\subseteq\Sta\times\Atau\times\Sta$ is an $\Atau$-labelled \emph{transition relation} (we write $s\step{a} t$ for$(s,a,t)\in{\step{}}$), and
    \item ${\uparrow}\in\Sta$ is the initial state.
\end{enumerate}
\end{definition}

In this paper, we shall use the notion of (divergence-preserving) branching bisimilarity~\cite{Glabbeek1996,Glabbeek2009}, which is the finest behavioural equivalence in van Glabbeek's linear time - branching time spectrum~\cite{Glabbeek1993}. \conf{(We put the definition in Appendix~\ref{app:bbisim}.) We denote branching bisimilarity by $\bbisim$ and divergence-preserving branching bisimilarity by $\bbisimd$.}
\arx{
In the definition of (divergence-preserving) branching bisimilarity we need the following notation: let $\step{}$ be an $\Atau$-labelled transition relation on a set $\Sta$, and let $a\in\Atau$; we write $s\step{(a)}t$ for ``$s\step{a}t$ or $a=\tau$ and $s=t$''. Furthermore, we denote the transitive closure of $\step{\tau}$ by $\step{}^{+}$ and the reflexive-transitive closure of $\step{\tau}$ by $\step{}^{*}$.

\begin{definition}
[Branching Bisimilarity]\label{def:bbisim}
Let $T_1=(\Sta_1,\step{}_1,\uparrow_1)$ and $T_2=(\Sta_2,\step{}_2,\uparrow_2)$ be $\Atau$-labelled transition systems. A \emph{branching bisimulation} from $T_1$ to $T_2$ is a binary relation $\R\subseteq\Sta_1\times\Sta_2$ such that for all states $s_1$ and $s_2$, $s_1\R s_2$ implies
\begin{enumerate}
    \item if $s_1\step{a}_1s_1'$, then there exist $s_2',s_2''\in\Sta_2$, such that $s_2\step{}_2^{*}s_2''\step{(a)}s_2'$, $s_1\R s_2''$ and $s_1'\R s_2'$;
    \item if $s_2\step{a}_2s_2'$, then there exist $s_1',s_1''\in\Sta_1$, such that $s_1\step{}_1^{*}s_1''\step{(a)}s_1'$, $s_1''\R s_2$ and $s_1'\R s_2'$.
\end{enumerate}
The transition systems $T_1$ and $T_2$ are \emph{branching bisimilar} (notation: $T_1\bbisim T_2$) if there exists a branching bisimulation $\R$ from $T_1$ to $T_2$ s.t. $\uparrow_1\R\uparrow_2$.

A branching bisimulation $\R$ from $T_1$ to $T_2$ is \emph{divergence-preserving} if, for all states $s_1$ and $s_2$, $s_1\R s_2$ implies
\begin{enumerate}
\setcounter{enumi}{2}
    \item if there exists an infinite sequence $(s_{1,i})_{i\in\mathbb{N}}$ such that $s_1=s_{1,0},\,s_{1,i}\step{\tau}s_{1,i+1}$ and $s_{1,i}\R s_2$ for all $i\in\mathbb{N}$, then there exists a state $s_2'$ such that $s_2\step{}^{+}s_2'$ and $s_{1,i}\R s_2'$ for some $i\in\mathbb{N}$; and
    \item if there exists an infinite sequence $(s_{2,i})_{i\in\mathbb{N}}$ such that $s_2=s_{2,0},\,s_{2,i}\step{\tau}s_{2,i+1}$ and $s_1\R s_{2,i}$ for all $i\in\mathbb{N}$, then there exists a state $s_1'$ such that $s_1\step{}^{+}s_1'$ and $s_1'\R s_{2,i}$ for some $i\in\mathbb{N}$.
\end{enumerate}
The transition systems $T_1$ and $T_2$ are \emph{divergence-preserving branching bisimilar} (notation: $T_1\bbisimd T_2$) if there exists a divergence-preserving branching bisimulation $\R$ from $T_1$ to $T_2$ such that $\uparrow_1\R\uparrow_2$.
\end{definition}
}
\paragraph*{A theory of executability}

The notion of reactive Turing machine (RTM)~\cite{BCLT2009,BLT2013} was put forward to mathematically characterise which behaviours are executable by a conventional computing system. In this section, we recall the definition of RTMs and the ensued notion of executable transition system. The definition of RTMs is parameterised with the set $\Atau$, which we  now assume to be a finite set. Furthermore, the definition is parameterised with another finite set $\D$ of \emph{data symbols}. We extend $\D$ with a special symbol $\Box\notin\D$ to denote a blank tape cell, and denote the set $\D\cup\{\Box\}$ of \emph{tape symbols} by $\Dbox$.
\begin{definition}
[Reactive Turing Machine]\label{def:rtm}
A \emph{reactive Turing machine} (RTM) is a triple $(\Sta,\step{},\uparrow)$, where
\begin{enumerate}
    \item $\Sta$ is a finite set of \emph{states},
    \item ${\step{}}\subseteq \Sta\times\Dbox\times\Atau\times\Dbox\times\{L,R\}\times\Sta$ is a finite collection of $(\Dbox\times\Atau\times\Dbox\times\{L,R\})$-labelled \emph{transition relation} (we write $s\step{a[d/e]M}t$ for $(s,d,a,e,M,t)\in{\step{}}$),
    \item ${\uparrow}\in\Sta$ is a distinguished \emph{initial state}.
\end{enumerate}
\end{definition}

Intuitively, the meaning of  a transition $s\step{a[d/e]M}t$ is that whenever the RTM is in state $s$, and $d$ is the symbol currently read by the tape head, then it may execute the action $a$, write symbol $e$ on the tape (replacing $d$), move the read/write head one position to the left or to the right on the tape (depending on whether $M=L$ or $M=R$), and then end up in state $t$. To formalise this intuitive understanding of the operational behaviour of RTMs, we associate with every RTM $\M$ an $\Atau$-labelled transition system  $\T(\M)$. The states of $\T(\M)$ are the
configurations of $\M$, which consist of a state from $\Sta$, its tape contents, and the position of the read/write head.
We denote by $\check{\Dbox}=\{\check{d}\mid d\in\Dbox\}$ the set of \emph{marked} symbols; a \emph{tape instance} is a sequence $\delta\in(\Dbox\cup\check{\Dbox})^{*}$ such that $\delta$ contains exactly one element of the set of marked symbols $\check{\Dbox}$, indicating the position of the read/write head.
We adopt a convention to concisely denote an update of the placement of the tape head marker. Let $\delta$ be an element of $\Dbox^{*}$. Then by $\tphdL{\delta\,}$ we denote the element of $(\Dbox\cup\check{\Dbox})^{*}$ obtained by placing the tape head marker on the right-most symbol of $\delta$ (if it exists), and $\check{\Box}$ otherwise.
Similarly $\tphdR{\,\delta}$ is obtained by placing the tape head marker on the left-most symbol of $\delta$ (if it exists), and $\check{\Box}$ otherwise.

\begin{definition}\label{def:lts-tm}
Let $\M=(\Sta,\step{},\uparrow)$ be an RTM. The \emph{transition system} $\T(\M)$ \emph{associated with} $\M$ is defined as follows:
\begin{enumerate}
\item its set of states is the set $\Conf[\M]=\{(s,\delta)\mid s\in\Sta,\ \text{$\delta$ a tape instance}\}$ of all configurations of $\M$;
    \item its transition relation ${\step{}}\subseteq{\Conf[\M]\times\Atau\times\Conf[\M]}$ is the least relation satisfying, for all $a\in\Atau,\,d,e\in\Dbox$ and $\delta_L,\delta_R\in\Dbox^{*}$:
    \begin{itemize}
        \item $(s,\delta_L\check{d}\delta_R)\step{a}(t,\tphdL{\delta_L}e\delta_R)$ iff $s\step{a[d/e]L}t$, and
        \item $(s,\delta_L\check{d}\delta_R)\step{a}(t,\delta_L e{}\tphdR{\delta_R})$ iff $s\step{a[d/e]R}t$, and
    \end{itemize}
    \item its initial state is the configuration $(\uparrow,\check{\Box})$.
\end{enumerate}
\end{definition}

Turing introduced his machines to define the notion of \emph{effectively computable function}~\cite{Turing1936}. By analogy, the notion of RTM can be used to define a notion of \emph{executable behaviour}. Usually, we shall be interested in the executability modulo (divergence-preserving) branching bisimilarity.

\begin{definition}
[Executability]~\label{def:exe}
A transition system is \emph{executable} modulo (divergence-preserving) branching bisimilarity if it is (divergence-preserving) branching bisimilarity to a transition system associated with some RTM.
\end{definition}

A characterisation of executability modulo (divergence-preserving) branching bisimilarity is given that is independent of the notion of RTM~\cite{GP2002}.

In order to be able to recapitulate some results from our previous work~\cite{BLT2013,LY15}, we need the following definitions, pertaining to the recursive complexity and branching degree of transition systems.
Let $T=(\Sta,\step{},\uparrow)$ be a transition system. We say that $T$ is \emph{effective} if ${\step{}}$ is a recursively enumerable set with respect to some suitable encoding (G\"{o}del numbering). The mapping $\mathalpha{out}:\Sta\rightarrow 2^{\A_{\tau}\times\Sta}$ associates with every state its set of outgoing transitions, i.e., for all $s\in\Sta$, $\mathalpha{out}(s)=\{(a,t)\mid s\step{a}t\}$. We say that $T$ is \emph{computable} if $\mathalpha{out}$ is a recursive function, again with respect to some suitable encoding. We call a transition system \emph{finitely branching} if $\mathalpha{out}(s)$ is finite for every state $s$, and \emph{boundedly branching} if there exists $B\in\mathbb{N}$ such that $|\mathalpha{out}(s)|\leq B$ for all $s\in \Sta$.

The following results were established to characterise the notion of executability~\cite{BLT2013}.
\begin{theorem}\label{thm-blt}
\begin{enumerate}
\item
  For every finite set $\Atau$ and every boundedly branching computable $\Atau$-labelled transition system $T$, there exists an RTM $\M$ such that $T\bbisimd \T(\M)$.
\item
  For every finite set  $\Atau$ and every effective $\Atau$-labelled transition system $T$ there exists an RTM $\M$ such that $T\bbisim \T(\M)$.
\end{enumerate}
\end{theorem}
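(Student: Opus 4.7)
The plan is to construct, for each transition system $T$, an RTM $\M$ that simulates transitions of $T$ by storing on its tape an encoding $\encode{s}$ of the current state $s$ of $T$, and repeatedly rewriting this encoding according to the outgoing transitions of $s$. Fix a G\"odel numbering of the states $\Sta$ of $T$ so that every $s\in\Sta$ has a string encoding $\encode{s}\in\Dbox^{*}$ over a suitable finite tape alphabet (bits plus a few separators suffice).

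For part~(1), exploit the computable function $\mathalpha{out}$ and the bound $B$ on branching as follows. Starting from a configuration with $\encode{s}$ on the tape, the machine performs a deterministic $\tau$-computation that, given $\encode{s}$, writes the finite list $\mathalpha{out}(s) = \{(a_1,t_1),\ldots,(a_k,t_k)\}$ with $k \le B$ in a workspace region of the tape. Because $B$ is fixed, the transition relation of $\M$ can offer, at the end of this computation, a non-deterministic choice among at most $B$ branches: each branch executes the corresponding action $a_i\in\Atau$, cleans the workspace, and installs $\encode{t_i}$ in place of $\encode{s}$. A state of $T$ with $\mathalpha{out}(s)=\emptyset$ yields an $\M$-configuration with no outgoing transitions. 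The required relation $\R$ relates every configuration reachable during the simulation of state $s$ to $s$ itself; branching bisimilarity follows because the only visible actions emitted by $\M$ are exactly the labels of outgoing transitions of the currently simulated $T$-state, and $\R$ is clearly a branching bisimulation. For \emph{divergence preservation}, the key observation is that the deterministic $\tau$-computation of $\mathalpha{out}(s)$ always terminates (since $\mathalpha{out}$ is total recursive); hence no infinite $\tau$-path of $\M$ can stay within the $\R$-class of a single state $s$, so every divergence of $\M$ corresponds to an actual infinite $\tau$-trace through infinitely many distinct $T$-states (and conversely, a $\tau$-divergence in $T$ is reproduced by $\M$).

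For part~(2), only recursive enumerability of $\step{}$ is available, so $\mathalpha{out}(s)$ cannot be computed in bounded time, nor is it necessarily finite. Instead, let $f:\mathbb{N}\to{\step{}}$ be a recursive enumeration of the transitions and have $\M$ perform, in a second workspace, an interleaved computation that successively generates $f(0), f(1), f(2),\ldots$. Between the computations of $f(n)$ and $f(n{+}1)$, whenever a generated triple $(s',a,t)$ matches the currently stored state (i.e. $s' = s$), $\M$ non-deterministically \emph{either} continues the enumeration (a $\tau$-step) \emph{or} branches off to a subroutine that performs $a$ and rewrites $\encode{s}$ into $\encode{t}$. The bisimulation relates all configurations encountered while enumerating from state $s$ to $s$ itself; each $T$-transition $s\step{a}t$ is eventually offered, so the matching clauses of branching bisimilarity are satisfied. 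Divergence preservation fails in this construction precisely because enumeration adds infinite $\tau$-loops that do not correspond to divergence in $T$, which is why only $\bbisim$ (not $\bbisimd$) is claimed.

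The main obstacle is the divergence-preservation requirement of part~(1): one must show that the $\tau$-overhead of simulating one step never accumulates into spurious divergence, and that genuine $\tau$-divergences of $T$ are faithfully reproduced. The former is handled by ensuring that the computation of $\mathalpha{out}$ is encoded as a \emph{deterministic} $\tau$-path whose length depends only on $s$ (so there is no infinite $\tau$-behaviour within a single $\R$-class), and the latter follows because an infinite $\tau$-trace in $T$ through $s_0\step{\tau}s_1\step{\tau}\cdots$ yields an infinite $\tau$-trace in $\T(\M)$ via finitely many simulation phases each culminating in the $\tau$-labelled transition that rewrites $\encode{s_i}$ to $\encode{s_{i+1}}$. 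The standard stuttering-closure argument for branching bisimulations then completes the verification.
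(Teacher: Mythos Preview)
The paper does not actually prove this theorem; it is quoted from the cited work~\cite{BLT2013} as background, so there is no in-paper proof to compare against. Your sketch follows essentially the standard construction from that reference, and the overall architecture---store $\encode{s}$ on the tape, compute or enumerate outgoing transitions, then perform one---is the right one. Part~(1) is fine: the crucial point, which you identify correctly, is that computing $\mathalpha{out}(s)$ terminates, so no spurious divergence is introduced, and the bounded branching degree $B$ lets a finite-control machine offer all options simultaneously.

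There is, however, a genuine gap in your construction for part~(2). You relate every intermediate enumeration configuration $C_n$ (reached after generating $f(0),\ldots,f(n)$ while simulating state $s$) to $s$, and argue that ``each $T$-transition $s\step{a}t$ is eventually offered.'' Eventually offered from $C_0$ is not enough: branching bisimilarity requires that \emph{every} $C_n$ related to $s$ can match every transition $s\step{a}t$. If $(s,a,t)=f(m)$ with $m<n$ and your enumeration proceeds monotonically, then from $C_n$ the machine can never again reach a configuration that offers this particular $a$-step, and the transfer condition fails. The standard fix (used in~\cite{BLT2013} and echoed in this paper's proof of Lemma~\ref{lemma:RTMA-effective}) is to add, at each choice point, a $\tau$-branch that \emph{restarts} the enumeration from scratch, returning to the canonical configuration $(\enu,\encode{s})$; alternatively one may dovetail so that $f$ visits every transition infinitely often, or keep all previously matched transitions available on the tape. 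Any of these repairs is easy, but one of them must be stated explicitly for the bisimulation to go through.
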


Moreover, if a transition system without divergence is executable modulo $\bbisimd$, then it is necessarily boundedly branching~\cite{LY15}.
\begin{theorem}~\label{thm-ly15}
If a transition system $T$ has no divergence up to $\bbisimd$ and is unboundedly branching up to $\bbisimd$, then it is not executable modulo $\bbisimd$.
\end{theorem}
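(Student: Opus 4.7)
I prove the contrapositive: if $T$ is executable modulo $\bbisimd$, then $T$ is boundedly branching up to $\bbisimd$. Assume $T \bbisimd \T(\M)$ for some RTM $\M = (\Sta, \step{}_\M, \uparrow)$. The main observation is that $\T(\M)$ itself is \emph{boundedly branching} in the sense of Section~\ref{sec:pre}: for any configuration $(s, \delta)$ its outgoing transitions correspond bijectively to the transitions of $\M$ of the form $s \step{a[d/e]M}_\M t$ with $d$ the unique tape symbol currently scanned in $\delta$, and by the finiteness clauses of Definition~\ref{def:rtm} their number is uniformly bounded by $B := |\Sta| \cdot |\Atau| \cdot |\Dbox| \cdot 2$, independent of $\delta$. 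Hence $\T(\M)$ itself is a boundedly branching transition system that is $\bbisimd$-equivalent to $T$, which is precisely the statement that $T$ is boundedly branching up to $\bbisimd$.

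The no-divergence hypothesis enters only when ``boundedly branching up to $\bbisimd$'' is read in the sharper sense that counts $\bbisimd$-inequivalent observable successors rather than direct ones. In that case one matches a state $s$ of $T$ with its configuration $c$ in $\T(\M)$ and analyses, for each transition $s \step{a} s'$, the corresponding bisimulation witness $c \step{\tau}^{*} c'' \step{(a)} c'$ with $c'' \bbisimd c$; one must therefore bound the number of $\bbisimd$-classes of pairs $(a, c')$ obtainable this way. Because $\bbisimd$ preserves divergence, absence of divergence at $T$ transfers to $\T(\M)$, so the $\tau$-reachability tree out of $c$ is finitely branching (by the bound $B$) with no infinite branch, and K\"onig's lemma makes it finite; each of its nodes contributes at most $B$ action-labelled transitions, whence $c$ has only finitely many $\bbisimd$-classes of observable successors, matching the branching at $s$.

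The main obstacle is conceptual rather than technical: fixing which of the two natural readings of ``boundedly branching up to $\bbisimd$'' is the intended one. Once fixed, the proof is essentially a one-line consequence of the uniform finiteness of the one-step transition relation of an RTM, supplemented by K\"onig's lemma (powered by the no-divergence hypothesis) to move from direct to observable branching when required.
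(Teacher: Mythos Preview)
The paper does not actually contain a proof of this theorem: it is quoted from~\cite{LY15} (see the sentence immediately preceding the theorem statement). So there is no in-paper argument to compare your attempt against.

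On the merits of your attempt: your first paragraph is correct and is in fact the standard argument. The transition system $\T(\M)$ associated with any RTM is boundedly branching in the literal sense of Section~\ref{sec:pre}, with bound $|{\step{}_{\M}}|$, since from a configuration $(s,\delta)$ only those RTM-transitions with source $s$ and read-symbol equal to the currently scanned symbol apply. Hence if $T\bbisimd\T(\M)$, then $T$ is $\bbisimd$-equivalent to a boundedly branching system, which under the natural reading of ``(un)boundedly branching up to $\bbisimd$'' already contradicts the hypothesis. You are right to flag that, under this reading, the no-divergence hypothesis is not used; that is an accurate observation about the statement as recapitulated here, and the original paper~\cite{LY15} should be consulted for the precise phrasing that makes the hypothesis do work.

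Your second paragraph, offered for the alternative reading that counts $\bbisimd$-inequivalent successors, has a gap: K\"onig's lemma together with the absence of divergence gives you only that the $\tau$-tree out of each individual configuration $c$ is \emph{finite}, and hence that each state of $T$ has only \emph{finitely many} $\bbisimd$-classes of observable successors. It does not yield a \emph{uniform} bound across all states, which is what ``boundedly branching'' requires. Since under this reading the $\tau$-trees may grow without bound along a computation, closing this gap needs an additional argument (or a reformulation of the claim). Given that the first reading already settles the theorem as stated, this is a secondary issue, but you should not present the K\"onig step as delivering bounded branching.
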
 
\section{Infinitary Reactive Turing Machines}~\label{sec:rtm-infi}

In this section, we shall investigate the effect of lifting one or more of the finiteness conditions imposed on RTMs on the ensued notion of executability. We start with lifting the finiteness condition on the alphabet of actions and the transition relation only. We shall argue by means of an example that this extension is hardly useful, because it is not possible to associate a different effect with each action. The next step is, therefore, to also allow an infinite set of data symbols. This, in turn, yields a notion of executability that is too expressive. Finally, we provide two intermediate notions of executability by restricting the transition relations associated with infinitary reactive Turing machines to be effective or computable. In this section, we allow $\A$ to be a countably infinite set of action labels.

\paragraph*{Infinitely many states or data symbols}

Recall from Definition~\ref{def:rtm} that an RTM has a finite set of states $\Sta$ and a finite transition relation. If we allow RTMs to have infinitely many actions, then, inevitably, we should at least also allow them to have an infinite transition relation. The following example illustrates that we then also either need infinitely many states or infinitely many data symbols.

\begin{example}~\label{exp:RTM-infi}
Consider an $\Atau$-labelled transition system $T=(\Sta_T,\step{}_T,\uparrow_T)$, where
\begin{enumerate}
    \item ${\Sta_T}=\{\uparrow_T,\downarrow_T\}\cup\{s_x\mid x\in\A\}$, and
    \item ${\step{}_T}=\{(\uparrow_T,x,s_x)\mid x\in\A\}\cup\{(s_x,x,\downarrow_T)\mid x\in\A\}$.
\end{enumerate}
There does not exist an RTM with finitely many states and data symbols that simulates $T$ modulo branching bisimilarity.

Suppose $\M=(\Sta,\step{},\uparrow)$ is an RTM such that $\T(\M)\bbisim T$, and we let $\A=\{x_1,x_2,\ldots\}$.

The transitions $\uparrow_T\step{x_1}s_{x_1},\,\uparrow_T\step{x_2}s_{x_2},\ldots$ lead to infinitely many states $s_{x_1},s_{x_2},\ldots$, which are all mutually distinct modulo branching bisimilarity.

  Let $C=(\uparrow,\tphd{\Box})$ be the initial configuration of $\M$. Assume that we have $C\bbisim \uparrow_T$, so $C$ admits the following transition sequences: $C\step{}^{*}\step{x_1}\step{}^{*}C_1\step{x_1},\,C\step{}^{*}\step{x_2}\step{}^{*}C_2\step{x_2},\ldots$, where $C_1\bbisim s_{x_1},\,C_2\bbisim s_{x_2},\ldots$.

The transitions of an RTM are of the form $(s,a,d,e,M,t)$, where $s,\,t\in\Sta$, and $d,\,e\in\Dbox$; we call the pair $(s,d)$ the trigger of the transition. A configuration $(s',\delta_L\tphd{d'}\delta_R)$ satisfies the trigger $(s,d)$ if $s=s'$ and $d=d'$. Now we observe that a transition $(s,a,d,e,M,t)$ gives
rise to an $a$-transition from every configuration satisfying its trigger $(s,d)$. Since $\Sta$ and $\Dbox$ are finite sets, there are finitely many triggers.

So, in the infinite list of configurations $C_1,C_2,\ldots$, there are at least two configurations $C_i$ and $C_j$, satisfying the same trigger $(s,d)$; these configurations must have the same outgoing transitions.

Now we argue that we cannot have $C_i\bbisim s_{x_i}$ and $C_j\bbisim s_{x_j}$. Since $C_j\step{x_j}$, a transition labelled by $x_j$ is triggered by $(s,d)$. As $C_i$ also satisfies the trigger $(s,d)$, we have the transition $C_i\step{x_j}$. Hence $C_i\not\bbisim s_{x_i}$, and we get a contradiction to $\T(\M)\bbisim T$.
\end{example}

\paragraph*{Infinitary reactive Turing machines}

If we allow the set of control states or the set of data symbols to be infinite too, the expressiveness of RTMs is greatly enhanced. We introduce a notion of infinitary RTM as follows.

\begin{definition}~\label{def:rtmi}
An \emph{infinitary reactive Turing machine} (\RTMI) is a triple $(\Sta,\step{},\uparrow)$, where
\begin{enumerate}
    \item $\Sta$ is a countable set of \emph{states},
    \item ${\step{}}\subseteq \Sta\times\Dbox\times\Atau\times\Dbox\times\{L,R\}\times\Sta$ is a countable collection of $(\Dbox\times\Atau\times\Dbox\times\{L,R\})$-labelled \emph{transition relation} (we write $s\step{a[d/e]M}t$ for $(s,d,a,e,M,t)\in{\step{}}$),
    \item ${\uparrow}\in\Sta$ is a distinguished \emph{initial state}.
\end{enumerate}
\end{definition}

\paragraph*{Executability by an \RTMI}

By analogy to Definition~\ref{def:exe}, we define the executability with respect to \RTMI s.

\begin{definition}~\label{def:rtmi-exe}
A transition system is \emph{executable by an \RTMI{} modulo (divergence-preserving) branching bisimilarity} if it is (divergence-preserving) branching bisimilar to a transition system associated with some \RTMI.
\end{definition}

The following theorem illustrates the expressiveness of \RTMI s, showing that every countable transition system is executable by an \RTMI{} modulo $\bbisimd$. \conf{(We put the proof in Appendix~\ref{app:rtmi}.)}

\begin{theorem}~\label{thm:RTM-infi}
For every countable set $\Atau$ and every countable $\Atau$-labelled transition system $T$, there exists an \RTMI{} $\M$ such that $T\bbisimd \T(\M)$.
\end{theorem}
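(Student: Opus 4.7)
The plan is to construct the \RTMI{} $\M$ directly from $T$, using essentially the state space of $T$ itself as the control states. Since we are permitted a countable state set and a countable transition relation, this requires no cleverness beyond picking a trivial tape alphabet and ignoring the tape.

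\textbf{Construction.} Take $\D=\emptyset$ (so $\Dbox=\{\Box\}$) and let the \RTMI{} be $\M=(\Sta_T,\step{}_\M,\uparrow_T)$, where
\[
{\step{}_\M} \;=\; \{\,(s,\Box,a,\Box,R,t)\mid s\step{a}_T t\,\}.
\]
Both $\Sta_T$ and $\step{}_\M$ are countable, so $\M$ is a legitimate \RTMI{}. Since the only tape symbol available is $\Box$, every tape instance in $\Conf[\M]$ has the form $\Box^{m}\check{\Box}\Box^{n}$ for some $m,n\in\mathbb{N}$, and every RTM transition rewrites $\Box$ to $\Box$.

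\textbf{Bisimulation.} Define
\[
\R \;=\; \{\,((s,\delta),s)\mid s\in\Sta_T,\ \delta\ \text{a tape instance of }\M\,\}.
\]
Clearly $(\uparrow_T,\check{\Box})\,\R\,\uparrow_T$. I would then verify the two transfer conditions: first, every $\M$-transition $(s,\delta)\step{a}(t,\delta')$ comes from a rule $s\step{a[\Box/\Box]R}t$ in $\step{}_\M$, which by construction witnesses $s\step{a}_T t$, and $(t,\delta')\,\R\,t$; conversely, every $T$-transition $s\step{a}_T t$ induces the rule $s\step{a[\Box/\Box]R}t$ in $\M$, so from any configuration $(s,\delta)$ we obtain $(s,\delta)\step{a}(t,\delta'')$ for the appropriate $\delta''$, and $(t,\delta'')\,\R\,t$. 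In both cases the intermediate state in the clauses of Definition~\ref{def:bbisim} can be taken equal to the source state (no $\tau$-prefix is needed), so $\R$ is a branching bisimulation.

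\textbf{Divergence preservation.} For the two divergence clauses: an infinite $\tau$-sequence $(s,\delta_0)\step{\tau}(s_1,\delta_1)\step{\tau}\cdots$ in $\T(\M)$ projects along $\R$ to an infinite $\tau$-sequence $s\step{\tau}_T s_1\step{\tau}_T\cdots$ in $T$, because each $\M$-step is in bijection with a $T$-step sharing its label; conversely, any infinite $\tau$-sequence out of $s$ in $T$ can be mirrored step-for-step from $(s,\delta)$ in $\M$, again because $\step{}_\M$ contains a rule for every transition of $T$. The intermediate states required by Definition~\ref{def:bbisim}(3)--(4) exist because $\R$ already relates corresponding points along these sequences, so we may take the witness after one $\tau$-step.

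\textbf{Expected difficulty.} The construction and the bisimulation are essentially forced, so I do not foresee a real obstacle; the only subtlety is bookkeeping when $T$ itself has $\tau$-transitions and one must check that divergence preservation goes through in both directions. Since every $T$-transition (including $\tau$) corresponds bijectively to an $\M$-rule that fires from every configuration over state $s$, the mirroring is immediate and no stuttering arguments are needed.
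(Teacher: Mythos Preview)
Your construction is correct and yields a genuine strong bisimulation between $T$ and $\T(\M)$, which is of course also a divergence-preserving branching bisimulation; no subtlety with $\tau$-steps arises because you introduce none.

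The paper proceeds differently: it keeps the set of control states \emph{finite} (three states $\{\uparrow,s,t\}$) and instead pushes all the infinity into the data alphabet, taking $\D=\{\encode{q}\mid q\in\Sta_T\}$. The current $T$-state is stored as a single symbol on the tape, and a $T$-transition $s_1\step{a}_T s_2$ is simulated by the two-step fragment $(t,\tphd{\encode{s_1}}\Box)\step{a}(s,\encode{s_2}\tphd{\Box})\step{\tau}(t,\tphd{\encode{s_2}}\Box)$, so an auxiliary $\tau$-move is inserted after every visible step. Your approach puts the infinity in $\Sta$ and uses a trivial $\Dbox$; the paper puts the infinity in $\Dbox$ and uses a trivial $\Sta$. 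Your version is shorter and gives an exact (strong) correspondence, while the paper's version makes explicit that an infinite tape alphabet alone already suffices---which ties in with the surrounding discussion (Example~\ref{exp:RTM-infi}) that motivated enlarging either the state set or the data set. Both routes yield the corollaries about effective and computable transition relations in the same way.
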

\arx{
\begin{proof}
 Let $T=(\Sta_T,\step{}_T,\uparrow_T)$ be an $\Atau$-labelled countable transition system, and let $\encode{\_}: \Sta_T\rightarrow \mathbb{N}$ be an injective function encoding its states as natural numbers. Then, an RTM with infinite sets of action symbols and data symbols $\M(T)=(\Sta,\step{},\uparrow)$ is defined as follows.
\begin{enumerate}
    \item $\Sta=\{s,t,\uparrow\}$ is the set of control states.
    \item $\step{}$ is a finite $(\Dbox\times\A\times\Dbox\times\{L,R\})$-labelled \emph{transition relation}, and it consists of the following transitions:
        \begin{enumerate}
            \item $(\uparrow,\tau,\Box,\encode{\uparrow_T},R,s)$,
            \item $(s,\tau,\Box,\Box,L,t)$, and
            \item $(t,a,\encode{s_1},\encode{s_2},R,s)$ for every transition $s_1\step{a}_T s_2$.
        \end{enumerate}
    \item ${\uparrow}\in\Sta$ is the \emph{initial state}.
\end{enumerate}

Note that a transition step $s_1\step{a}s_2$ is simulated by a sequence
\begin{equation*}
(t,\tphd{\encode{s_1}}\Box)\step{a}(s,\encode{s_2}\tphd{\Box})\step{\tau}(t,\tphd{\encode{s_2}}\Box)
\enskip.
\end{equation*}
Then one can verify that $\T(\M(T))\bbisimd T$.
\end{proof}
}

So \RTMI s are very expressive, and they certainly do not yield a useful model to distinguish between processes that can and cannot be executed. Note that the transition relation used to define \RTMI s need not even be computable or effective. As a compromise, we provide two intermediate models by making requirements on the transition relation of \RTMI s.

 We say that a transition relation is effective, if for every pair of a control state and a data symbol $(s,d)$, the set of subsequent transitions is recursively enumerable, i.e., the set $\{(a,e,M,t)\mid s\step{a[d/e]M}t\}$ is recursively enumerable with respect to some encoding. By the proof of Theorem~\ref{thm:RTM-infi}, if the transition system is effective, then the set of transitions $\{(t,a,\encode{s_1},\encode{s_2},R,s)\mid s_1\step{a} s_2\}$ is recursively enumerable. One may trivially verify that all the other transitions are also recursively enumerable. Hence, we get an effective transition relation. We derive the following corollary for the executability of effective transition systems from Theorem~\ref{thm:RTM-infi}.

\begin{corollary}~\label{cor:RTM-infi-effective}
For every countable set $\Atau$ and every effective $\Atau$-labelled transition system $T$, there exists an \RTMI{} $\M$ with an effective transition relation such that $T\bbisimd \T(\M)$.
\end{corollary}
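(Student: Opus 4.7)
The plan is to reuse the construction $\M(T)$ from the proof of Theorem~\ref{thm:RTM-infi} and then merely verify that the transition relation it produces is effective whenever $\step{}_T$ is. So the mathematical content of the corollary is a routine check of effectiveness, but the choice of encoding needs care.

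First I would fix the setup. Since $T$ is effective, there is a Gödel numbering of $\Sta_T$ (and of $\Atau$) such that $\step{}_T$ is a recursively enumerable subset of $\mathbb{N}^{3}$; I would take the injection $\encode{\_}\colon \Sta_T\to\mathbb{N}$ to be precisely this numbering, and use $\mathbb{N}$ itself (extended with $\Box$) as the set of data symbols of $\M(T)$. With this choice the encoding is computable in the only sense we need: given $s\in\Sta_T$ one can compute $\encode{s}$, and given $d\in\Dbox$ one can decide whether $d\in\mathrm{range}(\encode{\_})$ and, if so, recover the unique $s_1$ with $\encode{s_1}=d$.

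Next I would verify effectiveness trigger by trigger. The control states of $\M(T)$ are $\{\uparrow,s,t\}$, so there are only three ``shapes'' of triggers to consider. For the trigger $(\uparrow,\Box)$ there is exactly one outgoing transition, namely the initialisation transition $(\uparrow,\tau,\Box,\encode{\uparrow_T},R,s)$; for $(\uparrow,d)$ with $d\neq\Box$ the outgoing set is empty. For $(s,\Box)$ there is exactly one outgoing transition $(s,\tau,\Box,\Box,L,t)$, and for $(s,d)$ with $d\neq\Box$ the outgoing set is again empty. Both classes are trivially recursively enumerable. The only non-trivial case is the trigger $(t,d)$: if $d=\encode{s_1}$ for some $s_1\in\Sta_T$, the outgoing transitions are exactly the transitions $(t,a,\encode{s_1},\encode{s_2},R,s)$ indexed by the pairs $(a,s_2)$ with $s_1\step{a}_T s_2$; otherwise the set is empty.

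The main obstacle is therefore to show that for each fixed $d$, the set $\{(a,\encode{s_2},R,s)\mid s_1\step{a}_T s_2,\ \encode{s_1}=d\}$ is r.e. Here I would use the semi-algorithm that enumerates $\step{}_T$ (which exists by effectiveness of $T$) and filters, for each enumerated triple $(s_1,a,s_2)$, whether $\encode{s_1}=d$; whenever the test succeeds the algorithm outputs $(a,\encode{s_2},R,s)$. Since $\encode{\_}$ is a computable injection this test is decidable, so the filtered stream is recursively enumerable. This establishes effectiveness of the transition relation of $\M(T)$, and combined with the bisimilarity $\T(\M(T))\bbisimd T$ already proved in Theorem~\ref{thm:RTM-infi} it yields the corollary.
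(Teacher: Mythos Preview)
Your proposal is correct and follows essentially the same route as the paper: reuse the machine $\M(T)$ from Theorem~\ref{thm:RTM-infi} and observe that its transition relation is effective because the family of transitions $\{(t,a,\encode{s_1},\encode{s_2},R,s)\mid s_1\step{a}_T s_2\}$ inherits recursive enumerability from $\step{}_T$, while the remaining two transitions are trivially r.e. Your write-up is in fact more careful than the paper's, since you verify the per-trigger condition explicitly via the enumerate-and-filter argument; the only superfluous remark is the claimed decidability of $d\in\mathrm{range}(\encode{\_})$, which your filtering step does not actually need.
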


 We say that a transition relation is computable if for every pair of a control state and a data symbol $(s,d)$ the set of subsequent transitions is computable, i.e., the set $\{(a,e,M,t)\mid s\step{a[d/e]M}t\}$ is recursive with respect to some encoding. By analogy to Corollary~\ref{cor:RTM-infi-effective}, we derive the following result.
\delete{
By the proof of Theorem~\ref{thm:RTM-infi}, if the transition system is computable, then the set of transitions $\{(t,a,\encode{s_1},\encode{s_2},R,s)\mid s_1\step{a} s_2\}$ becomes recursive. For the other transitions, they are also trivially recursive. Hence, we get a computable transition relation. Then from Theorem~\ref{thm:RTM-infi} we also derive a corollary for the executability of computable transition systems.}

\begin{corollary}~\label{cor:RTM-infi-computable}
For every countable set $\Atau$ and every computable $\Atau$-labelled transition system $T$, there exists an \RTMI{} $\M$ with a computable transition relation such that $T\bbisimd \T(\M)$.
\end{corollary}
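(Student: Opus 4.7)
The plan is to reuse verbatim the construction $\M(T)$ from the proof of Theorem~\ref{thm:RTM-infi}; that construction already delivers $T \bbisimd \T(\M(T))$ via the simulation $(t, \tphd{\encode{s_1}}\Box) \step{a} (s, \encode{s_2}\tphd{\Box}) \step{\tau} (t, \tphd{\encode{s_2}}\Box)$ of each $T$-step, and that bisimilarity argument is indifferent to whether $T$ is merely effective or actually computable. The only new obligation, therefore, is to verify that, whenever $T$ is computable, the transition relation of $\M(T)$ is computable in the sense defined just before the statement: for every pair $(q,d)$ of a control state and a tape symbol, the set $\{(a,e,M,r) \mid q \step{a[d/e]M} r\}$ must be recursive.

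To discharge this, I would fix a G\"odel numbering $\encode{\_} : \Sta_T \to \mathbb{N}$ whose image is a decidable subset of $\mathbb{N}$ (a choice that is always available for a computable transition system) and run a short case analysis on the control state $q \in \{\uparrow, s, t\}$ of $\M(T)$. For $q = \uparrow$ and $d = \Box$ the set is the singleton $\{(\tau, \encode{\uparrow_T}, R, s)\}$; for $q = s$ and $d = \Box$ it is $\{(\tau, \Box, L, t)\}$; for $q = t$ the decidable image of $\encode{\_}$ lets one test whether $d = \encode{s_1}$ for some $s_1 \in \Sta_T$, and in that case the set of outgoing transitions is the image of $\mathalpha{out}(s_1)$ under the computable map $(a, s_2) \mapsto (a, \encode{s_2}, R, s)$, which is recursive because $\mathalpha{out}$ is recursive by the computability hypothesis on $T$; in every remaining pair $(q,d)$ the set is empty. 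Each case is effectively distinguishable and each resulting set is recursive, so the transition relation as a whole is computable.

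I do not expect a substantial obstacle: this corollary is essentially a bookkeeping refinement of Theorem~\ref{thm:RTM-infi}, paralleling Corollary~\ref{cor:RTM-infi-effective} for the effective case. The only point that deserves care is that strengthening ``recursively enumerable'' to ``recursive'' requires not just recursiveness of $\mathalpha{out}$ on the set of encoded states but also decidability of the predicate ``$d$ lies in the image of $\encode{\_}$'', which is why an encoding with decidable range is selected upfront. With that choice the case analysis above reduces the computability of the transition relation of $\M(T)$ directly to the assumed computability of $T$.
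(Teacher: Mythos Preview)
Your proposal is correct and follows essentially the same approach as the paper: reuse the construction $\M(T)$ from Theorem~\ref{thm:RTM-infi} and then observe that computability of $T$ yields computability of the transition relation of $\M(T)$. The paper in fact dispatches this corollary in a single sentence (``By analogy to Corollary~\ref{cor:RTM-infi-effective}, we derive the following result''), so your explicit case analysis on the three control states, and your care about choosing an encoding with decidable range, are more detailed than what the paper itself supplies, but entirely in the same spirit.
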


\section{Reactive Turing Machines with Atoms}~\label{sec:atoms}

 In this section, we introduce a notion of reactive Turing machine with atoms (\RTMA) as a natural intermediate between RTMs and \RTMI s. On the one hand, \RTMA s will be more expressive than RTMs, since they will admit infinite alphabets, whereas RTMs do not. On the other hand, \RTMA s will be less expressive than \RTMI s, because there will be restrictions imposed that, intuitively, make the alphabets finitely representable. We introduce a notion of effective transition system with atoms to characterise the transition systems associated with \RTMA s modulo branching bisimilarity. We then have a proper model to investigate the executability of process calculi with infinite alphabets (such as the $\pi$-calculus).

\paragraph*{Sets with atoms}

We adopt the definition of sets with atoms from Boja\'{n}czyk et al.~\cite{BLKT11,BKLT13}. We fix for the remainder of this section a countable infinite set $\Atom$; we call its elements \emph{atoms}. An \emph{atom automorphism} is a bijection (permutation) on $\Atom$. A \emph{set with atoms} is any set that contains atoms or other sets with atoms, in a well-founded way. Every set in the traditional sense thus is a set with atoms. The atoms will allow us to formulate certain finiteness restrictions that are slightly more liberal than simply requiring that sets are finite. To this end we proceed to introduce \emph{legal} and \emph{orbit-finite} sets with atoms.

 For a set with atoms $X$ and an atom automorphism $\pi$, by $\pi(X)$ we denote the set obtained by application of $\pi$ to every atom in $X$, in elements of $X$, in elements of elements of $X$, etc., recursively. For a set of atoms $S\subseteq\Atom$, if an atom automorphism $\pi$ is the identity on $S$, then we call it an \emph{$S$-automorphism}. We say that $S$ supports a set with atoms $X$ if $X=\pi(X)$ for every $S$-automorphism $\pi$. A set with atoms is called \emph{legal} if it has a finite support, each of its elements has a finite support, and so on recursively. A set with atoms may contain infinitely many atoms, but legality restricts the extent. For instance, a finite set is legal (with itself as support), and also a co-finite set is legal (with its finite complement as support). On the other hand, the set of all odd natural numbers is not legal (its support necessarily includes all odd numbers, or all even numbers).

 Now we proceed to introduce the notion of \emph{orbit-finite} set.  Let $x$ be an element in a set with atoms $X$, the \emph{orbit of $x$} is the set \begin{equation*}
 \{y\in X\mid y=\pi(x)\mbox{ for some atom automorphism }\pi\}
 \enskip.
 \end{equation*}
A set with atoms $X$ is partitioned into disjoint \emph{orbits}: elements $x$ and $y$ are in the same orbit iff $\pi(x)=y$ for some atom automorphism $\pi$. For example $\Atom^{2}$ decomposes into two orbits, the diagonal and its complement; and $\Atom^{*}$ has infinitely many orbits as the elements from $\Atom,\,\Atom^{2},\ldots$ all fall into disjoint orbits. A set with atoms that is partitioned into finitely many orbits is called an \emph{orbit-finite} set. Orbit-finiteness restricts the number of partitions of a set with atoms with respect to atom automorphism.

\paragraph*{Reactive Turing machines with atoms}

Boja\'{n}czyk et al.~\cite{BKLT13} defined a notion of Turing machine with atoms based on sets with atoms. Now we generalize this notion by defining a notion of reactive Turing machine with atoms. We assume that the sets of action symbols $\Atau$ and data symbols $\Dbox$ are legal and orbit-finite sets with atoms.

\begin{definition}
[Reactive Turing Machine with atoms]\label{def:rtm-atoms}
A \emph{reactive Turing machine with atoms} (\RTMA) $\M$ is a triple $(\Sta,\step{},\uparrow)$, where
\begin{enumerate}
    \item $\Sta$ is a legal and orbit-finite set of \emph{states},
    \item ${\step{}}\subseteq \Sta\times\Dbox\times\Atau\times\Dbox\times\{L,R\}\times\Sta$ is a legal and orbit-finite $(\Dbox\times\Atau\times\Dbox\times\{L,R\})$-labelled transition relation (we write $s\step{a[d/e]M}t$ for $(s,d,a,e,M,t)\in{\step{}}$),
    \item ${\uparrow}\in\Sta$ is a distinguished \emph{initial state}.
\end{enumerate}
\end{definition}

By analogy to Definition~\ref{def:exe}, we associate with every \RTMA{} a labelled transition system, and define a notion of executability with respect to \RTMA. In this paper, we shall only consider executability modulo branching bisimilarity.

\begin{definition}~\label{executability-RTMA}
A transition system is \emph{nominally executable} if it is branching bisimilar to a transition system associated with some \RTMA.
\end{definition}

\RTMA s give rise to a less liberal notion of executability compared to the one induced by \RTMI s. The following example give us an insight in the effect of the legality restriction.

\begin{example}~\label{example:legality}
Assume that the set of atoms is the set of natural numbers. We consider a transition system only with the following transitions: $s_1\step{1}s_3\step{3}\ldots s_{2n+1}\step{2n+1}\ldots$. It can be simulated by an \RTMI{} with the following transition relation:
\begin{equation*}
\{(s,n,n,n+2,R,t)\mid n=2i+1\,\mbox{and}\, i\in \mathbb{N}\}
\enskip.
\end{equation*}
Either the set of odd numbers or the set of even numbers supports the above set, however, neither of them is finite. The above transition relation therefore cannot define an \RTMA. Moreover, \RTMA s cannot define any transition system with an illegal set of labels, since it is inevitable to introduce an illegal transition relation.
\end{example}

Besides legality, orbit-finiteness also restricts the notion of executability. The transitions are restricted to finitely many different orbits up to atom automorphism. As a result, \RTMA s cannot make transitions labelled with tuples of atoms of arbitrary lengths, nevertheless, such transitions can be realized by an \RTMI.
 \begin{example}~\label{example:orbit-finiteness}
 Consider an \RTMI{} with the following transition relation:
\begin{equation*}
\{(s,\bar{a},\Box,\bar{a},R,t)\mid \bar{a}\mbox{ is a tuple of atoms of arbitrary length}\}
\enskip.
\end{equation*}
The labels of the above transition relation are not orbit-finite, so it does not defines an \RTMA.
\end{example}

\paragraph*{Transition systems with atoms}

Next, we investigate the class of transition systems that are nominally executable. By Example~\ref{example:legality} and Example~\ref{example:orbit-finiteness}, we can easily exclude the transition systems with an illegal or a non-orbit-finite set of labels. We let $\Atau$ be a legal and orbit-finite set of labels for the remaining of this section. We define the notion of transition system with atoms as follows:

\begin{definition}~\label{def:ltsa}
An $\Atau$-labelled transition system $T=(\Sta_T,\step{}_T,\uparrow_T)$ is a \emph{transition system with atoms} if $\Sta_T$ and $\step{}_T$ are legal sets with atoms. We say that a transition system with atoms is $K$-supported if $K\subset\Atom$ and $K$ is a support of the sets $\Sta_T$ and $\step{}_T$.
\end{definition}

We observe that a transition system with atoms $T=(\Sta_T,\step{}_T,\uparrow_T)$ is $K$-supported iff for every $(s,a,t)\in{\step{}_T}$ and for every $K$-automorphism $\pi_K$ we have $\pi_K(s,a,t)\in{\step{}_T}$, where $\pi_K(s,a,t)=(\pi_K(s),\pi_K(a),\pi_K(t))$.

\delete{
\begin{proof}
\begin{enumerate}
\item $\Rightarrow$: We suppose that $K$ is a support of $\step{}_T$, then by the definition of support, for every $K$-automorphism $\pi_K$, we have $\pi_K(\step{}_T)=\step{}_T$. Therefore, for every $(s,a,t)\in{\step{}_T}$, we have $\pi_K(s,a,t)\in{\step{}_T}$.
\item $\Leftarrow$: We suppose that for every $(s,a,t)\in{\step{}_T}$ and for every $K$-automorphism $\pi_K$, we have $\pi_K(s,a,t)\in{\step{}_T}$ then we have $\pi_K(\step{}_T)=\step{}_T$, Therefore $K$ is a support of $\step{}_T$.
\end{enumerate}
\end{proof}
}

For example, the transition systems associated with $\pi$-calculus terms are transition systems with atoms by the structural operational semantics \arx{in Figure~\ref{fig:pi-semantics} }\conf{ in Appendix~\ref{app:pi}}. We consider the set of names as the set of atoms and hence all the $\pi$-terms and transitions are sets with atoms. The support of the union of the transition systems associated with $\alpha$-equivalence class of the individual $\pi$-terms is the empty set. The support of the transition system associated with the $\alpha$-equivalence class of an individual $\pi$-term is the set of free names. Note that the set of free names does not grow by transition~\cite{SW01}.

\paragraph*{Effective transition systems with atoms}

In order to define the class of transition systems that can be simulated by \RTMA s, we need a notion of effectiveness that ensures that the transitions of an effective transition system can be enumerated by an \RTMA. A difficulty we face when defining an RTM with atoms that simulates a legal transition system with atoms
is that \RTMA s are not capable, in general, of decoding an encoded set with atoms. In particular, if the \RTMA{} produces the code of a transition, then it cannot always compute the label of that transition. We illustrate this idea with a simple example.

\begin{example}~\label{example:illegal}
Let $\encode{\_}:\Atom\rightarrow\mathbb{N}$ be an encoding from the set of atoms to natural numbers. Consider the a transition system with a sequence of transitions $\uparrow\step{\encode{x}}s_x\step{x}$ for every $x\in\Atom$, where $\step{\encode{x}}$ denotes a sequence of transitions with the encoding of $x$ as their labels.
There does not exist an \RTMA{} to simulate this transition system.

We suppose that an \RTMA{} $\M$ simulates the above transition system. Note that there are infinitely many $x\in\Atom$ that are not in the support of the $\M$. We choose an $x$ that is not in the support, then for any transition that writes $x$ on the tape or does an $x$-labelled transition from a configuration that does not involve $x$ on the tape, every $\pi(x)$ is created by another transition from that configuration, where $\pi$ is an arbitrary atom automorphism preserving the support of $\M$. Hence, $\M$ is not the required \RTMA.
\end{example}

Rather than encoding the states and transitions of a transition system with atoms, orbits of states and transitions will be encoded. Effectiveness is defined on functions over natural numbers. In order to define the notion of effectiveness, we encode the structures (orbits) of sets with atoms into natural numbers by using the terminology of definable sets introduced by Boja\'{n}czyk~\cite{B2016}.

\begin{definition}~\label{def:sete}
Let $V$ be an countably infinite set of variables. We let $\bar{x}$ be a tuple of variables. An \emph{$\bar{x}$-valuation} is a function that maps each variable in $\bar{x}$ to an element in $\Atom$. The notion of \emph{set builder expression} is inductively defined as follows:
\begin{enumerate}
\item The empty set $\emptyset$ is a set builder expression.
\item A variable is a set builder expression.
\item Let $\bar{x}$ and $\bar{y}$ be disjoint tuples of variables and let $\alpha$ be a set builder expression with free variables contained in $\bar{x}\bar{y}$, and let $\phi$ be a first-order formula over $\Atom$ and with free variables contained in $\bar{x}\bar{y}$. Then $\{\alpha(\bar{x}\bar{y})\mid\mbox{ for }\bar{y} \mbox{ such that }\phi(\bar{x}\bar{y})\}$ is a set builder expression with free variables $\bar{x}$ and bound variables $\bar{y}$. (This expression equals to the set $\{\alpha(\bar{x},\bar{a}) \mid \bar{a}\mbox{ is a tuple of atoms such that }\phi(\bar{x},\bar{a})\}$).
\item If $\alpha_1,\ldots,\alpha_n$ are set builder expressions, then so is $\alpha_1\cup\cdots\cup\alpha_n$.
\end{enumerate}
\end{definition}

We use $\mathbb{B}$ to denote the set of all set builder expressions. For a set builder expression $\alpha$ with free variables $\bar{x}$, we define $\setb{\alpha}$ to be the function which inputs a valuation of $\bar{x}$ and outputs the corresponding set (or set of sets, etc.).

\begin{definition}~\label{def:setb}
A set with atoms is \emph{definable} if it is of the form $\setb{\alpha}(\bar{a})$, where $\alpha$ is a set builder expression and $\bar{a}$ denotes an assignment of atoms to the free variables in $\alpha$.
\end{definition}

We have the following lemma~\cite{B2016}:
\begin{lemma}~\label{lemma:definable}
Every legal set of $n$-tuples of atoms and every orbit-finite set with atoms is definable.
\end{lemma}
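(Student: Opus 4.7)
The plan is to handle the two claims in sequence, using the first as the base case to bootstrap the second via well-founded induction.

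For the first claim on legal sets of $n$-tuples, I would fix a finite support $S = \{a_1, \ldots, a_k\}$ of $X \subseteq \Atom^n$ and invoke the standard nominal-sets observation that the $S$-automorphism group acts on $\Atom^n$ with only finitely many orbits, each cut out by an \emph{equality type}: a maximal consistent conjunction of atomic formulas of the form $y_i = y_j$, $y_i \neq y_j$, $y_i = a$, and $y_i \neq a$ for $a \in S$. Since $X$ is $S$-supported, it is the union of those finitely many orbits whose equality types its elements realise. I would then explicitly write $X = \setb{\bigcup_i \{\tau_n(\bar y) \mid \mbox{for } \bar y \mbox{ such that } \phi_i(\bar x\bar y)\}}(\bar a)$, where $\phi_i$ is the first-order formula encoding the $i$th equality type, $\tau_n(\bar y)$ is a set builder expression for the $n$-tuple $(y_1, \ldots, y_n)$ using the Kuratowski encoding $\tau_2(y_1,y_2) = \{\{y_1\},\{y_1,y_2\}\}$ (with singletons and doubletons themselves expanded as set builder expressions of rule (3)), and $\bar a$ assigns the atoms of $S$ to the free variables $\bar x$.

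For the second claim on orbit-finite sets, I would strengthen the statement to hereditary orbit-finiteness so that the inductive hypothesis applies to the members of $X$ (legality already guarantees that every element sits in a well-founded hierarchy of finitely supported sets, so this strengthening is essentially free) and then proceed by well-founded induction on set-theoretic rank. Decompose $X$ into its finitely many orbits $O_1, \ldots, O_m$ under the full atom-automorphism group, pick a representative $x_i \in O_i$ of finite support $S_i$, and by the induction hypothesis assemble a set builder expression $\alpha_i(\bar x_i)$ for $x_i$ whose free variables $\bar x_i$ list $S_i$. The orbit $O_i$ is then realised by letting $\bar y$ range over the orbit of the tuple listing $S_i$ in $\Atom^{|S_i|}$, which is a legal set of $|S_i|$-tuples and hence first-order definable by Part~1 via some formula $\psi_i$. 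Taking the finite union $X = \setb{\bigcup_i \{\alpha_i(\bar y) \mid \mbox{for } \bar y \mbox{ such that } \psi_i(\bar x\bar y)\}}(\bar a)$ yields the required definition, where $\bar a$ lists a common support of $X$.

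The main obstacle I anticipate is the syntactic bookkeeping imposed by the restricted set builder grammar: encoding $n$-tuples and other finitary constructions within rules (1)--(4), threading free-variable names through nested set builder expressions without accidental capture, and ensuring that substituting the outer bound variables $\bar y$ into the inductively obtained $\alpha_i$ yields a well-formed expression with the correct free variables. Once this syntactic uniformity is handled, the combinatorial content — finiteness of the $S$-orbit decomposition of $\Atom^n$ and of the orbit decomposition of an orbit-finite set — is standard from nominal set theory, and both steps of the argument go through directly.
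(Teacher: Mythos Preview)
The paper does not prove this lemma at all: it is quoted from Boja\'{n}czyk's work~\cite{B2016} and stated without proof (the sentence preceding the lemma is ``We have the following lemma~\cite{B2016}:''). So there is no in-paper argument to compare against.

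That said, your proposal is essentially the standard argument one finds in the nominal-sets literature for this fact, and it is correct in outline. The decomposition of an $S$-supported subset of $\Atom^n$ into finitely many $S$-orbits described by equality types over $S$ is exactly right for Part~1, and the well-founded induction on rank combined with orbit decomposition is the expected route for Part~2. Your anticipated obstacle --- the syntactic plumbing of Kuratowski tuples and variable threading through nested set builder expressions --- is real but routine; Boja\'{n}czyk's treatment handles it in the same spirit. One small point: in Part~2 you do not need to separately strengthen to hereditary orbit-finiteness, since legality already gives every element a finite support and hence every element is itself orbit-finite (a singleton orbit under its own support), so the induction hypothesis applies directly to the representatives $x_i$.
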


Taking the $\pi$-calculus as an example, the set of all $\pi$-terms is not definable, since it is not orbit-finite, in other words, infinitely many distinct structures are involved. However, the $\alpha$-equivalence class of a $\pi$-term is definable since it has only one orbit.

\delete{
 The transition systems associated mCRL2 expression is not necessarily definable. Consider a term that is able to input a string with arbitrary length, this term involves infinitely many distinct structures.}

We introduce the encodings of set builder expressions.  We let $\encode{\_}: \mathbb{B}\rightarrow \mathbb{N}$ be an encoding from set builder expressions to natural numbers. Now we consider an arbitrary legal and orbit-finite set with atoms $x$. We let $\sete{x}$ be a set builder expression of $x$, and $\tup{x}$ be the tuple of atoms satisfying $\setb{\sete{x}}(\tup{x})=x$. In an \RTMA{}, we use a pair of a natural number and a tuple of atoms to represent a set with atoms $x$, i.e., $x$ is represented by $\encode{\sete{x}}$ and $\tup{x}$.

To characterise the class of executable transition systems with atoms, we define a notion of effectiveness on transition systems with atoms on orbits of states and transitions up to atom automorphism.
\delete{
Recall the notion of effective transition system: a transition system is effective, if, for every state, the set of its outgoing transitions is recursively enumerable. For transition systems with atoms, however, sets with atoms cannot be enumerated by \RTMA s directly since the operation of enumerating atoms is not legal. Hence, we define the effectiveness on orbits of states and transitions up to atom automorphism.
}
\begin{definition}
Let $T=(\Sta_T,\step{}_T,\uparrow_T)$ be a transition system with atoms. We say that $T$ is \emph{effective} if it satisfies the following conditions:
\begin{enumerate}
\item Every $s\in\Sta_T$ and every $(s,a,t)\in{\step{}_T}$ is orbit-finite.
\item There exists a recursively enumerable function $\mathit{out}:\mathbb{N}\rightarrow\mathbb{N}$ satisfying:
\begin{equation*}
\forall s\in\Sta_T,\,\mathit{out}(\encode{\sete{s}})=\{\encode{\sete{(s,a,t)}}\mid s\step{a}_T t\}
\enskip.
\end{equation*}
\end{enumerate}
\end{definition}

 Intuitively, a transition system with atoms is effective if there is an effective algorithm to enumerate the set builder expression of the set of outgoing transitions given the set builder expression of a state as the input. Moreover, it is necessary to enumerate the structure of the triple $(s,a,t)$  rather than $(a,t)$, since some variables in $\sete{s}$ might also appear in $\sete{a}$ and $\sete{t}$; the same atoms should be assigned to the variables in $e_s$, $e_a$, and $e_t$ in order to obtain a correct transition.

\delete{
\paragraph*{Encoding sets with atoms}

In order to simulate a transition system with atoms with an \RTMA, we need to use an \RTMA{} to enumerate the transitions from every state of a transition system with atoms. To implement the procedure of enumeration, we need a schema to encode sets with atoms into natural numbers. Let $X$ be a countable set with atoms; an encoding is an injection $\encode{\_}: X\rightarrow \mathbb{N}$.

Now we consider an arbitrary legal set with atoms $X$, and fix a minimal support $K$ of $X$. Let $x\in X$ be an element of $X$. The \emph{$x$-orbit} is the set $[x]=\{\pi_K(x)\mid \pi_K \mbox{ is a }K\mbox{-automorphism}\}$. Note that all the elements in the $x$-orbit are in the same equivalence class with $x$ up to atom automorphism.

We will encode an element of a set with atoms by its orbit and the atoms it uses. We presuppose that there exist a function $\atom$ and a set of functions $\{\orb_{[x]}\mid x\in X\}$ satisfying the properties below:

\begin{eqnarray*}
&\atom: X\rightarrow\Atom^{*},\,
\orb_{[x]}:\Atom^{*}\rightarrow [x],\, \mbox{satisfying that}:\\
&\forall y\in[x],\,
\orb_{[x]}(\atom(y))=y
\enskip.
\end{eqnarray*}

 $\atom(x)$ associates with $x$ the atoms that are used in $x$. $\orb_{[x]}$ associates with a sequence of atoms $\atom(y)$ an element $y$ of the $x$-orbit. Note that we have:

\begin{equation*}
\forall x\in X,\,\forall K\mbox{-automorphism}\,\pi_K,
\orb_{[x]}(atom(\pi_K(x)))=\pi_K(x)
\enskip.
\end{equation*}

    We can number every equivalence class (orbit) in $X$ by a natural number as $X$ is countable. We encode $\orb_{[x]}$ by an injection from the equivalence class of $x$ to the natural number that numbers $[x]$. We assume that there is an encoding from $\Atom$ to $\mathbb{N}$, and there is an encoding of lists. We encode $\atom(x)$ by the encoding of lists of encodings of atoms.
   We define the encoding of an element $x$ by a pair consisting of the encoding of the function of $\orb_{[x]}$, and the encoding of the list of atoms $\atom(x)$ as follows:

\begin{equation*}
\encode{x}=\encode{(\encode{orb_{[x]}},\encode{atom(x)})}
\enskip.
\end{equation*}

We remark that if $X$ is, e.g., the set of labels of a $\pi$-calculus process, then one may derive $\atom$ and $\{\orb_{[x]}\mid x\in X\}$ as well as their encodings trivially.
}
\delete{

Let us consider the transition relation, it could be interpreted as follows:
\begin{equation*}
\trans: \mathbb{N}\times\Atom^{*}\rightarrow\mathbb{N}\times\Atom^{*}\times\mathbb{N}\times\Atom^{*}
\enskip.
\end{equation*}

We formalise the following condition $\star$ for the $\Atau$-labelled transition system $T=(\Sta,\step{},\uparrow)$. The condition satisfies if there is a function $\supp:\Sta\rightarrow M$ where $M$ is a finite subset of $\Atom$, and we denote $\supp(s)$ by $M_s$. $\forall s\step{a} s_1$ the following properties holds:
\begin{enumerate}
\item
\begin{equation*}
\forall M_s\mbox{-automorphism }\pi_{M_s},\,\exists s\step{\pi_{M_s}(a)} s_2,\,M_{s_2}=\pi_{M_s}(M_{s_1})
\enskip.
\end{equation*}
\item Let $N(a)$ be the set of atoms appeared in $a$
\begin{equation*}
\forall M_s\cup N(a)\mbox{-automorphism }\pi_{M_s\cup N(a)},\,\exists s\step{\pi_{M_s\cup N(a)}(a)} s_3,\,M_{s_3}=\pi_{M_s\cup N(a)}(M_{s_1})
\enskip.
\end{equation*}
\end{enumerate}

As we assign a set of atoms $M_s$ to every state $s$. We first investigate to what extent, an \RTMA{} can manipulate these sets.

Suppose that there is already a set of atoms $M_s$ written on the tape, and we design an \RTMA{} to create a new set $M_s'$ and write it on the tape. Then it may add either an atom from $M_s$ or an atom which is not from $M_s$. We discuss the two cases:
}

\paragraph*{Nominal executability}

Now we show that every effective transition system with atoms is nominally executable.

The simulation of a transition from a state $s$ consists of three stages. In the initial stage, we suppose that the representation of the state $\encode{\sete{s}}\tup{s}$ is written on the tape. Then, the \RTMA{} enumerates the structure of an outgoing transition $\encode{\sete{(s,a,t)}}$ on the tape. In the second stage, the \RTMA{} compress from $\encode{\sete{(s,a,t)}}$ and $\tup{s}$ appropriate assignments to the variables in $\tup{a}$ and $\tup{t}$. Fresh atoms are created in this stage if necessary. In the final stage, the \RTMA{} uses $\encode{\sete{a}}$ and $\tup{a}$ to produce an $a$-labelled transition, leading to a configuration representing state $t$, or it returns to the initial stage and enumerates another transition. Note that an \RTMA{} cannot enumerate the outgoing transitions directly, rather, it enumerates the orbit of the transition and nondeterministically produces one of the elements in that orbit.

We use the encodings of set builder expressions to represent the structures of the sets with atoms, moreover, we also need a tuple of atoms to instantiate the free variables. Now we introduce some gadgets of \RTMA s to manipulate tuples of atoms.
\delete{
\begin{lemma}\label{lemma:RTMA-transition-relation}
Let $\M=(\Sta_{\M},\step{}_{\M},\uparrow_{\M})$ be an \RTMA. Then there exists a finite set of atoms $K\subset\Atom$ such that,
for every $(s,a,d,e,M,t)\in{\step{}_{\M}}$, and for every $K\mbox{-automorphism}\,\pi_K$, we have
$\pi_K(s,a,d,e,M,t)\in{\step{}_{\M}}$.
\end{lemma}

\arx{
\begin{proof}
As $\step{}_{\M}$ is a legal set. We take $K$ to be its the minimal support. Then we conclude the lemma from the definition of the support.
\end{proof}
}
}

\begin{example}~\label{example:operation}
Let $\bar{a}$ and $\bar{b}$ be two tuples of atoms. We define an \RTMA{} \conf{(We put the details in Appendix~\ref{app:rtma}.) }$\M$ with $\bar{a}\bar{b}$ as its tape instance, and within finitely many steps of execution, it changes its tape instance by either:
\begin{enumerate}
\item duplicating an atom $x$ from $\bar{a}$, and adding it to $\bar{b}$, or
\item nondeterministically creating a fresh atom $x$ which is not in $\bar{a}$, and adding it to $\bar{b}$.
\end{enumerate}
\arx{
We denote the current tape instance by $\bar{a}\bar{b}$, and we show the two ways to add a new atom to $\bar{b}$.

For the first case, we suppose that $x$ is the atom in $\bar{a}$ to be duplicated, and the first empty cell after $\bar{b}$ is the destination of the duplication. The machine could accomplish the task by the transitions $\cop\step{\tau[x/x]R}\cop_x\step{\tau[y/y]R}{}^{*}\step{\tau[\Box/x]}\finish$, which is realized by the following set of transitions.
\begin{eqnarray*}
&\{(\cop,\tau,x,x,R,\cop_x)\mid x\in\Atom\}\\
&\cup\{(\cop_x,\tau,y,y,R,\cop_x)\mid x,y\in\Atom\}\\
&\cup\{(\cop_x,\tau,\Box,x,R,\finish)\mid x\in\Atom\}
\enskip.
\end{eqnarray*}

This is a legal and orbit-finite set of transitions.

For the second case, the machine creates a fresh atom, by the following set of transitions,
\begin{eqnarray*}
&\{(\fresh,\tau,\Box,x,L,\chec_x)\mid x\in\Atom\}\\
&\cup\{(\chec_x,\tau,y,y,L,\chec_x)\mid x\neq y\wedge y\neq \Box\wedge x,y\in\Atom\}\\
&\cup\{(\chec_x,\tau,x,x,L,\refresh_x)\mid x\in\Atom\}\\
&\cup\{(\refresh_x,\tau,y,y,R,\refresh_x)\mid x\neq y\wedge x,y\in\Atom\}\\
&\cup\{(\refresh_x,\tau,x,y,L,\chec_y)\mid x\neq y\wedge x,y\in\Atom\}\\
&\cup\{(\chec_x,\tau,\Box,\Box,R,\finish)\mid x\in\Atom\}
\enskip.
\end{eqnarray*}

 The machine first creates an arbitrary atom $x$, and then it checks every atom on the tape whether it is identical with $x$. We suppose that $\Box$ indicates the end of the sequence of atoms on tape. If the check procedure succeeds, the creation is finished, otherwise, the machine creates another atom and checks again.
 We also verify that the above transitions form a legal and orbit-finite set.

}
\end{example}

Next, we illustrate that an \RTMA{} is able to produce an $x$-labelled transition from the encoding of the its structure and the atoms used in $x$, if $x$ is from an orbit-finite set $X$.

\begin{example}~\label{example:orb}
Let $X$ be an arbitrary legal and orbit-finite set with atoms. We define an \RTMA{} \conf{(We put the details in Appendix~\ref{app:rtma}.) } such that for any arbitrary $x\in X$, if $\encode{\sete{x}}$ and $\tup{x}$ are written on the tape in a configuration, then there is one and only one labelled transition reachable from that configuration, and the transition is labelled by $x$.

\arx{

We define an \RTMA{} $M=(\Sta_M,\step{}_M,\uparrow_M)$, and we show that $M$ suffices the requirement.

According to the assumption, we suppose that in the state $\start$, the tape instance is $\encode{\sete{x}}\tup{x}$. It suffices to show that within finitely many steps, the machine is able to write $x$ as one symbol on the tape.

Note that $X$ is an orbit-finite set, which means that there are finitely many distinct orbits that construct the set $X$. Therefore, there are finitely many distinct values of $\encode{\sete{x}}$ for all the elements in $X$. The machine associates with each value a program that calculates $\setb{\sete{x}}(\tup{x})$, which produces the elements from that specific orbit according to the valuation of $\bar{x}$. The machine enters the programme by entering the state $\encode{\sete{x}}$.

\begin{equation*}
\{(\start,\tau,\encode{\sete{x}},\encode{\sete{x}},R,\encode{\sete{x}})\mid x\in X\}
\enskip.
\end{equation*}

As $X$ is orbit-finite, there is an upper-bound for the length of the tuple $\bar{x}$ for every $x\in X$. The machine can represent a tuple as a data symbol.
We suppose that every element in $x$-orbit uses $n$ free variables in its structure, then $\bar{x}$ is a tuple of $n$ atoms. Now we consider the following set of transitions:
\begin{equation*}
\{(\encode{\sete{x}},\setb{\sete{x}}(\tup{x}),\tup{x},\setb{\sete{x}}(\tup{x}),R,\finish\mid a_1\ldots a_n\in \Atom\}
\enskip.
\end{equation*}
These transitions will create $x$ by $\setb{\sete{x}}(\tup{x})$. We suppose that $\tup{x}$ is of the form $(a_1,\ldots,a_n)$. The above set of transitions is orbit-finite, since there is an upper bound of $n$,

Moreover, we show that the machine is able to create a tuple $\tup{x}$ as one symbol, given that each atom in $\tup{x}$ is written on one tape cell, and ordered from left to right as the order of the atoms in the tuple.

The machine constructs $\bar{x}$ by duplicating the elements from each tape cell to the tuple one by one, using the transitions as follows:
\begin{eqnarray*}
&\{(\encode{\sete{x}},\tau,a,a,R,(\encode{\sete{x}},a))\mid a\in\Atom\}\\
&\cup\{((\encode{\sete{x}},a),\tau,b,b,R,(\encode{\sete{x}},a))\mid a,b\in\Atom\}\\
&\cup\{((\encode{\sete{x}},a),\tau,\Box,(a),L,(\encode{\sete{x}},a)')\mid a\in\Atom\}\\
&\cup\{((\encode{\sete{x}},a_i),\tau,(a_1,\ldots,a_{i-1}),(a_1,\ldots,a_{i-1},a_i),L,(\encode{\sete{x}},a_i)')\mid a_1,\ldots, a_i\in \Atom, 2 \leq i\leq n\}\\
&\cup\{((\encode{\sete{x}},a)',\tau,b,b,L,(\encode{\sete{x}},a)')\mid a,b\in \Atom,\, a\neq b\}\\
&\cup\{((\encode{\sete{x}},a)',\tau,a,a,R,\encode{\sete{x}})\mid a\in\Atom\}\\
\enskip.
\end{eqnarray*}
The machine first finds the atom to duplicate, and uses a state $(\encode{\sete{x}},a)$ to register the atom $a$. Then it moves the tape head to the tuple and adds the atom to that tuple. If the tuple is empty, then the machine produces $(a)$, otherwise, the machine adds $a$ to an existed tuple $(a_1,\ldots,a_{i-1})$, and enters the state $(\encode{\sete{x}},a)'$. Finally, the tape goes back to the atom it duplicated and enters state $\encode{\sete{x}}$ again to start the duplication of the next atom. This procedure ends by finishing the duplication of all the atoms in $\bar{x}$, and entering the state $\encode{\sete{x}}$ with $\tup{x}$ written under the tape head. Hence, the machine is ready to produce $\setb{\sete{x}}(\tup{x})$.

Since there is an upper bound of the length of the atom, this set of transitions is legal and orbit-finite. Moreover, there are finitely many orbits for the set $X$, which means that the machine needs finitely many such programs.
Hence, we have obtained an \RTMA{} $M$ that meets the requirement.
}
\end{example}

\delete{
By applying a $\pi_K$-automorphism on the element $x$, we derive the following corollary.

\begin{corollary}~\label{cor:orb}
For every legal and orbit-finite set with atoms $X$, there exists an \RTMA, such that for every $x\in X$, and for every $K$-automorphism $\pi_K$, it produces an $\pi_K(x)$ labelled transition, and writes $\pi_K(x)$ on the tape in the resulting configuration, given that $\atom(\pi_K(x))$ and $\encode{\orb_{[x]}}$ are written on the tape.
\end{corollary}
}

\delete{
However, the machine could check the atoms from the finite support and the atoms already on the tape. Therefore, at least all the atoms not in the finite support and on the tape have the possibility to be produced in the same configuration of the machine.

Now we take the labels into consideration. As $\Atau$ has finitely many orbits, we only need to consider finitely many set theoretical structures that may construct a label. For the simplicity of the analysis, we consider tuples of atoms.

We suppose there are $n$ orbits. The $k$-th orbit is a tuple $(x_1,x_2,\ldots,x_{m_k})$, where $x_i$ is either an atom from the minimal support $M$ (we call it a constant), or an atom not from the support (and we call it a variable). Note that, we only distinguish atoms with equality. Hence, we only distinguish from pairs $x_i$ and $x_j$ referring to the same element, or not. If $x_i$ and $x_j$ refer to the same element, we assign them to the same variable.

Since $\Atau$ is orbit-finite, we assign a special representative element for every orbit.
\begin{enumerate}
 \item If the elements in this orbit the elements only consists of tuples of atoms from $M$. Then for an element $x$, $\pi(x)=x$ for every $M$-automorphism. We choose $x$ itself as the representative.
 \item Otherwise, the elements consists of tuples like $(x_1,x_2,\ldots,a_1,a_2\ldots)$, where $x_1,x_2,\ldots \in M$ and $a_1,a_2,\ldots\notin M$. Then for every distinct atom $a_i$, we introduce a new symbol $a_i'$ not in $M$ for it. Hence, an representative for this orbit is $(x_1,x_2,\ldots,a_1',a_2',\ldots)$. If we denote this representative by $x$, then for every element $y$ in this orbit, there is a $M$-automorphism $\pi$, such that $\pi(x)=y$.
\end{enumerate}

When the machine needs to produce a label, it is able makes several operations.
There are finitely many representatives, we use $R$ to denote the set of representatives.

\begin{enumerate}
\item It first selects a structure of the label, i.e., it produces a representative of the orbit.
\item It assigns a variable with an atom from the support or the tape if necessary.
\item It assigns a variable with a fresh atom (neither in the support nor on the tape) if necessary.
\end{enumerate}

We show that these operations could be accomplished with an \RTMA.
}

The following lemma shows that effective labelled transition systems with atoms are nominally executable.\conf{ (We put a proof in Appendix~\ref{app:rtma}.)}

\begin{lemma}~\label{lemma:RTMA-effective}
For every legal and orbit-finite set $\Atau$ and every effective $\Atau$-labelled transition system with atoms $T$, there exists an \RTMA{} $\M$ such that $T\bbisim \T(\M)$.
\end{lemma}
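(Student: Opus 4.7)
\smallskip\noindent\textbf{Proof plan.}
The plan is to build $\M$ as a universal simulator that keeps, at every moment, a representation of the current $T$-state on its tape and advances by one $T$-transition per simulation cycle. Concretely, a state $s\in\Sta_T$ will be represented by the tape contents $\encode{\sete{s}}\tup{s}$, where $\encode{\sete{s}}$ is a single data symbol encoding a set builder expression for $s$, and $\tup{s}$ is the tuple of atoms instantiating its free variables (both exist by Lemma~\ref{lemma:definable} applied to the orbits of states, which are orbit-finite by effectiveness). The initial configuration is set up so that $\encode{\sete{\uparrow_T}}\tup{\uparrow_T}$ is written on the tape.

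Each simulation cycle then consists of three phases, corresponding exactly to the three stages sketched just before Example~\ref{example:operation}. First (enumeration phase), starting from $\encode{\sete{s}}\tup{s}$, the machine performs an internal $\tau$-computation implementing the recursively enumerable function $\mathit{out}$ on the input $\encode{\sete{s}}$, nondeterministically stopping at some enumerated code $\encode{\sete{(s,a,t)}}$ and writing it on a scratch region of the tape; since $\mathit{out}$ is a recursively enumerable function on $\mathbb{N}$ acting only on the \emph{code} component, this phase uses only a finite, atom-free control program, hence a trivially legal and orbit-finite set of transitions. Second (assignment phase), the machine inspects $\encode{\sete{(s,a,t)}}$ to locate the free variables shared with $\sete{s}$ and, guided by the positional structure of the triple, copies the corresponding atoms from $\tup{s}$ into the positions of $\tup{a}$ and $\tup{t}$ using the duplicator gadget of Example~\ref{example:operation}; any free variable of $\sete{(s,a,t)}$ not occurring in $\sete{s}$ is filled in by a fresh atom produced nondeterministically by the fresh-atom gadget of the same example. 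Third (emission phase), the machine invokes the gadget of Example~\ref{example:orb} applied to the set with atoms $\Atau$: from $\encode{\sete{a}}$ and $\tup{a}$ it produces a uniquely determined $a$-labelled transition, after which it cleans up the scratch region and leaves $\encode{\sete{t}}\tup{t}$ under the tape head, ready to restart the cycle.

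The correctness argument constructs the branching bisimulation
\[
\R=\{(s,C)\mid C\text{ is a configuration of $\M$ reached between two cycles with tape }\encode{\sete{s}}\tup{s}\}
\]
together with its closure under intermediate $\tau$-configurations of a single cycle; the latter states are all related to $s$ by the first clause of the definition, since every intermediate $\tau$-step of a cycle can be matched in $T$ by an idle move while staying inside the same equivalence class. The transfer conditions follow because (i)~for every $s\step{a}_T t$ the enumeration phase has a nondeterministic branch producing $\encode{\sete{(s,a,t)}}$, and the assignment and emission phases then deterministically yield an $a$-transition to a configuration related to $t$; and (ii)~conversely, every visible transition of $\M$ arises from an enumerated code $\encode{\sete{(s,a,t)}}$ of an element of $\mathit{out}(\encode{\sete{s}})$ and therefore corresponds to a genuine transition in $T$ (up to the atom automorphism determined by the fresh atoms chosen in the assignment phase, which we absorb by choosing representatives in $T$-orbits accordingly).

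The main obstacle I expect is legality and orbit-finiteness of the resulting transition relation of $\M$. The enumeration and emission phases are safe: the former is atom-free, and the latter is handled by Example~\ref{example:orb}, whose construction is already shown to be legal and orbit-finite because $\Atau$ is. The delicate point is the assignment phase: the copy-and-fresh gadgets of Example~\ref{example:operation} are legal and orbit-finite, but they must be orchestrated by a control program whose size depends only on the maximum arity of tuples $\tup{s}$, $\tup{a}$, $\tup{t}$ across all orbits involved. This bound exists because $\Sta_T$, $\step{}_T$ and $\Atau$ each have only orbit-finitely many orbits of relevance (states reachable are sub-orbits of $\Sta_T$, which is legal, and the orbits of $(s,a,t)$ appearing in $\mathit{out}$'s range partition into orbit-finitely many types by effectiveness plus orbit-finiteness of $\Atau$). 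Finally, the overall support of $\M$ is the (finite) union of the supports of the gadgets of Examples~\ref{example:operation} and \ref{example:orb}, which is finite, so $\M$ is indeed an \RTMA.
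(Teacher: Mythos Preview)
Your three-phase architecture and your use of Examples~\ref{example:operation} and~\ref{example:orb} are exactly the paper's approach, and the representation $\encode{\sete{s}}\tup{s}$ is correct. But your construction, as stated, does not yield a branching bisimulation, and the paper's proof contains one ingredient you have omitted.

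The problem is with the intermediate configurations of a cycle after the enumeration has ``stopped'' at a particular code $\encode{\sete{(s,a,t)}}$ and the machine is in the assignment or emission phase. You relate all such configurations to $s$. For the transfer condition from $\M$ to $T$ this is fine (each $\tau$-step is matched by $s$ idling). But branching bisimulation is symmetric: if $C'\,\R\,s$ and $s\step{b}_T t'$ for some \emph{other} transition $(s,b,t')$, then $C'$ must be able to reach, via $\tau$-steps, a configuration that can perform $b$. In your machine, once $C'$ has committed to the orbit of $(s,a,t)$ and is busy copying atoms or generating fresh ones, it can only proceed forward to emit $a$; it has no way to produce $b$. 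Hence your proposed relation $\R$ fails the second clause of Definition~\ref{def:bbisim}. The paper fixes this by giving the machine, at the $\tran$ stage, an explicit nondeterministic choice to \emph{discard} the prepared transition and $\tau$-return to $(\enu,\encode{\sete{s}}\tup{s})$; since every intermediate configuration can $\tau$-reach $\tran$, every intermediate configuration can $\tau$-reach $(\enu,\encode{\sete{s}}\tup{s})$, and from there any outgoing $T$-transition can be simulated. You need to add this abort-and-restart branch.

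A secondary point: your legality argument claims a global bound on the arity of $\tup{s},\tup{a},\tup{t}$ by asserting that $\Sta_T$ and $\step{}_T$ have only orbit-finitely many orbits. That is not guaranteed by the hypotheses: legality of $\Sta_T$ does not imply orbit-finiteness (think of $\Atom^{*}$), and effectiveness only asks that each \emph{individual} state and transition be orbit-finite. So $\tup{s}$ may be unboundedly long across states. This is not fatal, but it means you cannot hard-wire the orchestration for a fixed maximum arity; instead, store tuples one atom per tape cell and let the gadgets of Example~\ref{example:operation} loop over them, exactly as the paper does. Only $\Atau$ is assumed orbit-finite, and that is what Example~\ref{example:orb} exploits for the label $a$.
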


\arx{
\begin{proof}

We let $T=(\Sta_T,\step{}_T,\uparrow_T)$ be an effective $\Atau$-labelled transition system with atoms, and let $K\subset\Atom$ be the minimal support of $T$. We show that there exists an \RTMA{} $\M=(\Sta_{\M},\step{}_{\M},\uparrow_{\M})$ such that $\T(\M)\bbisim T$.

As $T$ is effective, for every state $s\in\Sta_T$, the set $\mathalpha{out}(\encode{\sete{s}})=\{\encode{\sete{(s,a,t)}}\mid s\step{a}_T t\}$ is recursively enumerable. We use this fact to simulate the transition system. We describe the simulation in $3$ stages.

\begin{enumerate}
\item Initially, the tape is empty. Hence the initial configuration is $(\uparrow_{\M},\tphd{\Box})$. For simplicity, we do not denote the position of the tape head in the tape instances. The machine first writes the representation of the initial state $\uparrow_T$, i.e., $\encode{\sete{\uparrow_T}}\tup{\uparrow_T}$ on the tape, satisfying $\setb{\sete{\uparrow_T}}(\tup{\uparrow_T})=\uparrow_T$. As $T$ is legal, $\tup{\uparrow_T}$ consists of finitely many atoms. The initialization procedure is represented as follows,

\begin{equation*}
(\uparrow_{\M},\Box)\step{}^{*}(\enu,\encode{\sete{\uparrow_T}}\tup{\uparrow_T})
\enskip.
\end{equation*}

In the control state $\enu$, we assume that the tape instance is $\encode{\sete{s}}\tup{s}$, satisfying $\setb{\sete{s}}(\tup{s})=s$. As the transition system is effective, the machine is able to enumerate the structure of the outgoing transitions of $s$ as follows,

\begin{equation*}
(\enu,\encode{\sete{s}}\tup{s})\step{}^{*}(\gen,\encode{\sete{s}}\tup{s}\encode{\sete{(s,a,t)}})
\enskip.
\end{equation*}

\delete{
For state $s$, its subsequent transitions are $\{(s,a,t)\in \step{}_T\}$ and by the proposition of transition system with atoms, the set of transitions from $\pi_K(s)$ is $\pi_K{\{(s,a,t)\in \step{}_T\}}$. We use this fact to enumerate the transitions from $s$ and take a $\pi_K$ automorphism by the the atoms $\atom(\pi_K(s))$ which is already on the tape, and hence we obtained the transitions from $\pi_K(s)$.

Then the machine enumerates $\mathalpha{out}(s)$,  and writes $(\encode{a},\encode{t})$ (which are represented by $\encode{\orb_{[a]}},\encode{\atom(a)},\encode{\orb_{[t]}},\encode{\atom(t)}$ respectively) on the tape whenever a transition $(s,a,t)$ is enumerated.

The step of enumeration is represented as follows,

\begin{eqnarray*}
&(\enu,\encode{\orb_{[s]}}\encode{\atom(s)}\atom(\pi_K(s)))\step{}^{*}\\
&(\gen,\encode{\orb_{[s]}}\encode{\atom(s)}\encode{\orb_{[a]}}\encode{\atom(a)}\encode{\orb_{[t]}}\encode{\atom(t)}\atom(\pi_K(s)))
\enskip.
\end{eqnarray*}
}
\item In the second stage, the \RTMA{} produces the tuples of atoms $\tup{a}$ and $\tup{t}$ that valuates the free variables of $a$ and $t$. The valuation creates fresh atoms when necessary and preserves atoms from $K$ and $\tup{s}$.

     We denote the tuple of free variables of $\sete{(s,a,t)}$ by $\bar{x}$, and tuples of free variables of $\sete{s}$, $\sete{a}$ and $\sete{t}$ by $\bar{x_s}$, $\bar{x_a}$ and $\bar{x_t}$ respectively. Note that all the variables in $\bar{x_s}$, $\bar{x_a}$ and $\bar{x_t}$ are also in $\bar{x}$. Since $\encode{\sete{(s,a,t)}}$ is already on the tape, the following terms are computable:
    \begin{enumerate}
        \item the set builder expressions of $a$ and $t$: $\encode{\sete{a}}$ and $\encode{\sete{t}}$;
        \item the tuples of free variables: $\bar{x}$, $\bar{x_s}$, $\bar{x_a}$ and $\bar{x_t}$;
    \end{enumerate}
    We show the above statements as follows. We define a triple $(s,a,t)$ by $\{\{s\},\{s,a\},\{s,a,t\}\}$. We use the standard G\"{o}del numbering on sets and variables. The encoding $\encode{\sete{(s,a,t)}}$ is equal to $\encode{\{\{\sete{s}\},\{\sete{s},\sete{a}\},\{\sete{s},\sete{a},\sete{t}\}\}}$. $\encode{\sete{a}}$ and $\encode{\sete{t}}$ are all computable because the projection operation is computable in G\"{o}del encoding of ordered triples. Moreover, the projection from $\encode{\sete{x}}$ to the free variables used in each elements are computable.

   Then we evaluate $\bar{x}$ to tuples of atoms. $\tup{s}$ is a valuation of $\bar{x_s}$. We evaluate $\bar{x}$ by distinguishing two cases:
  \begin{enumerate}
        \item if a variable $y$ in $\bar{x}$ also appears in $\bar{x_s}$, then we duplicate the valuation of that variable from $\tup{s}$ to valuate $y$;
        \item otherwise, we create a fresh atom to valuate $y$.
    \end{enumerate}
    By Example~\ref{example:operation}, the above two operations are valid by \RTMA s.
    Since $\bar{x_a}$ and $\bar{x_t}$ are both sub-tuples of $\bar{x}$, the machine duplicates the valuation from $\bar{x}$ to create $\tup{a}$ and $\tup{t}$.

    Hence, we get $\encode{\sete{a}}$, $\encode{\sete{t}}$, $\tup{a}$ and $\tup{t}$ satisfying that $\setb{\sete{a}}(\tup{a})=a'$, $\setb{\sete{t}}(\tup{t})=t'$, and there exists an $K\cup\tup{s}$-automorphism $\pi$ which preserves all the atoms in $K\cup\tup{s}$, such that $\pi(s,a',t')=(s,a,t)$. By the property of transition system with atoms, $(s,a',t')\in\step{}_T$. Moreover, by Example~\ref{example:operation}, during the generation of fresh variables, every fresh variable from the universe of $\Atom$ can be generated, the \RTMA{} is able to create all the transitions which are equivalent to $(s,a,t)$ up to $K\cup\tup{s}$-automorphism. We denote this stage as follows:

\begin{equation*}
(\gen,\encode{\sete{s}}\tup{s}\encode{\sete{(s,a,t)}})\step{}^{*}(\act,\encode{\sete{s}}\encode{\sete{a}}\encode{\sete{t}}\tup{s}\tup{a}\tup{t})
\enskip.
\end{equation*}
\delete{
 We use $\pi_{K\cup s}$ to denote some arbitrary atom automorphism that preserves the support $K$ and the atoms from the state $\pi_K(s)$, and $\pi_{K\cup s\cup a}$ to denote some arbitrary atom automorphism that preserves $K$, $\pi_K(s)$ and $\atom(\pi_{K\cup s}(a))$. Then the machine nondeterministically produces $\atom(\pi_{K\cup s}(a))$ and $\atom(\pi_{K\cup s\cup a}(t))$ according to $\atom(s)$, $\encode{\atom(s)}$, $\encode{\atom(a)}$ and $\encode{\atom(t)}$, for some arbitrary $\pi_{K\cup s}$ and $\pi_{K\cup s\cup a}$.

The machine first generates $\atom(\pi_{K\cup s}(a))$. For every element $\encode{x}\in\encode{\atom(a)}$, we distinguish with two cases:

\begin{enumerate}
\item if $\encode{x}$  also in $\encode{\atom(s)}$, then the machine find the the corresponding element from $\atom(\pi_K(s))$ and duplicate that one to fills the position of $\encode{x}$ in $\atom(\pi_{K\cup s}(a))$;
\item if $\encode{x}$ not in $\encode{\atom(s)}$, then the machine create a fresh atom and fills the position of $\encode{x}$ in $\atom(\pi_{K\cup s}(a))$.
\end{enumerate}

Then the machine then generates $\atom(\pi_{K\cup s\cup a}(t))$. For every element $\encode{x}\in\encode{\atom(t)}$, we also distinguish with two cases:
\begin{enumerate}
\item if $\encode{x}$ is also in $\encode{\atom(s)}$ or $\encode{\atom(a)}$ , then the machine find the the corresponding element from $\atom(\pi_K(s))$ or $\atom(\pi_{K\cup s}(a))$ and duplicate that one to fills the position of $\encode{x}$ in $\atom(\pi_{K\cup s\cup a }(t))$;
\item if $\encode{x}$ is not in $\encode{\atom(s)}$ nor in $\encode{\atom(a)}$ , then the machine create a fresh atom and fills the position of $\encode{x}$ in $\atom(\pi_{K\cup s\cup a}(t))$.
\end{enumerate}

By Example~\ref{example:operation}, duplication and creating of atoms are valid operations for \RTMA.

We denote this step as follows,
\begin{eqnarray*}
&(\gen,\encode{\orb_{[s]}}\encode{\orb_{[a]}}\encode{\orb_{[t]}}\encode{\atom(s)}\encode{\atom(a)}\encode{\atom(t)}\atom(\pi_K(s)))\step{}^{*}\\
&(\act,\encode{\orb_{[s]}}\encode{\orb_{[a]}}\encode{\orb_{[t]}}\encode{\atom(s)}\encode{\atom(a)}\encode{\atom(t)}\atom(\pi_K(s))\atom(\pi_{K\cup s}(a))\atom(\pi_{K\cup s\cup a}(t)))
\enskip.
\end{eqnarray*}
}

\item In the third stage, the \RTMA{} generates the action label $a'$ and chooses to execute the transition or to continue the enumeration.

As $\Atau$ is a legal and orbit-finite set, by Example~\ref{example:orb}, the \RTMA{} is able to create a symbol $a'$ such that $\setb{\sete{a}}\tup{a}=a'$.

\begin{equation*}
(\act,\encode{\sete{s}}\encode{\sete{a}}\encode{\sete{t}}\tup{s}\tup{a}\tup{t})\step{}^{*}(\tran,\encode{\sete{s}}\encode{\sete{a}}\encode{\sete{t}}\tup{s}\tup{a}\tup{t}a')
\enskip.
\end{equation*}

Then the \RTMA{} has two choices: executing the $a'$-labelled transition and starting the next round of simulation or returning to the first stage.
\begin{eqnarray*}
&(\tran,\encode{\sete{s}}\encode{\sete{a}}\encode{\sete{t}}\tup{s}\tup{a}\tup{t}a')\step{a'}\step{}^{*}(\enu,\encode{\sete{t}}\tup{t})\\
&(\tran,\encode{\sete{s}}\encode{\sete{a}}\encode{\sete{t}}\tup{s}\tup{a}\tup{t}a')\step{}^{*}(\enu,\encode{\sete{s}}\tup{s})\\
\end{eqnarray*}
\delete{
Hence, the machine will produce a label $\pi_{K\cup s}(a)$ from $\atom(\pi_{K\cup s}(a))$ and $\encode{\orb_{[a]}}$. By Corollary~\ref{cor:orb}, this step is computable by an \RTMA. According to the property of transition systems with atoms, the transition to be simulated $(\pi_K(s),\pi_{K\cup s}(a),\pi_{K\cup s\cup a}(t))\in\step{}_T$. Moreover, the procedure of generating fresh atoms, it is guaranteed that every transition $(\pi_K(s),a',t')\in\step{}_T$ satisfying that there exists some $\pi_{K\cup s}$ such that $\pi_{K\cup s}(\pi_K(s),a,t)=(\pi_K(s),a',t')$ are produced in the previous step.

This procedure of producing the action label is represented as follows,

\begin{eqnarray*}
&(\act,\encode{\orb_{[s]}}\encode{\orb_{[a]}}\encode{\orb_{[t]}}\encode{\atom(s)}\encode{\atom(a)}\encode{\atom(t)}\\
&\atom(\pi_K(s))\atom(\pi_{K\cup s}(a))\atom(\pi_{K\cup s\cup a}(t)))\step{}^{*}\\
&(\tran,\encode{\orb_{[s]}}\encode{\orb_{[a]}}\encode{\orb_{[t]}}\encode{\atom(s)}\encode{\atom(a)}\encode{\atom(t)}\\
&\atom(\pi_K(s))\atom(\pi_{K\cup s}(a))\atom(\pi_{K\cup s\cup a}(t))\pi_{K\cup s}(a))
\enskip.
\end{eqnarray*}

\item $\tran\Rightarrow\enu$:

Finally the machine simulates the transition or continues the enumeration.

\begin{eqnarray*}
&(\tran,\encode{\orb_{[s]}}\encode{\orb_{[a]}}\encode{\orb_{[t]}}\encode{\atom(s)}\encode{\atom(a)}\encode{\atom(t)}\\
&\atom(\pi_K(s))\atom(\pi_{K\cup s}(a))\atom(\pi_{K\cup s\cup a}(t))\pi_{K\cup s}(a))\step{\pi_{K\cup s}(a)}\step{}^{*}\\
&(\enu,\encode{\orb_{[t]}}\encode{\atom(t)}\atom(\pi_{K\cup s\cup a}(t)))\\
&(\tran,\encode{\orb_{[s]}}\encode{\orb_{[a]}}\encode{\orb_{[t]}}\encode{\atom(s)}\encode{\atom(a)}\encode{\atom(t)}\\
&\atom(\pi_K(s))\atom(\pi_{K\cup s}(a))\atom(\pi_{K\cup s\cup a}(t))\pi_{K\cup s}(a))
\step{}^{*}\\
&(\enu,\encode{\orb_{[s]}}\encode{\atom(s)}\atom(\pi_K(s)))
\enskip.
\end{eqnarray*}}
\end{enumerate}
We can verify that before the $a'$-labelled transition, the transition system of the machine preserves its states modulo $\bbisim$ by a sequence of $\tau$-transitions which leads back to the configuration $(\enu,\encode{\sete{s}}\tup{s})$. Moreover, from the above analysis, we have $(s,a',t')\in\step{}_T$; and every transition obtained by an $K\cup\tup{s}$-automorphism from $(s,a,t)$ can be simulated by the \RTMA{} $M$. Therefore, we conclude that $T\bbisim \T(\M)$.
\end{proof}
}
\delete{
\begin{proof}
Let $T=(\Sta_T,\step{}_T,\uparrow_T)$ be an effective $\Atau$-labelled transition system with a minimal support $M\in\Atom$ satisfying condition $\star$. We show that there exists an \RTMA{} $\M=(\Sta,\step{},\uparrow)$ such that $\T(\M)\bbisim T$.

As $T$ is effective, for every state $s\in\Sta_T$, the set $\mathit{Next}(s)=\{(a,t)\mid s\step{a}_T t\}$ is recursively enumerable. We use this fact to simulate the transition system. Note that, the operations we mention are based on the encodings of the elements of sets with atoms unless we actually write or execute elements from sets with atoms.

In the initial state, the machine writes the encoding of $\uparrow_T$ on the tape.
\begin{equation*}
(\uparrow,\tphd{\Box})\step{}^{*}(\mathit{State},\tphd{\encode{\uparrow_T}})
\enskip.
\end{equation*}

In the $\mathit{State}$ state, the machine start to enumerate the encodings of the elements in $\mathit{Next}$. When it gets a label $a_0$, it starts to enumerate the $M$-automorphisms $\pi$, until we get $\pi(a_0)\in R$, namely $x_0$. As there are finitely many representatives, we can write $x_0$ on the tape.

\begin{equation*}
(\mathit{State},\tphd{\encode{s}})\step{}^{*}(\mathit{Rename},\encode{s}\tphd{x_0})
\enskip.
\end{equation*}

Now the machine nondeterministically chooses a renaming operation on $x_0$, using the following set of transitions:
\begin{equation*}
\{\mathit{Rename}\step{\tau[x/\pi_M(x)]R}\mathit{Next}\mid \forall M\mbox{-automorphism }\pi_M\}
\enskip.
\end{equation*}

  This set is legal and orbit-finite. With these transitions, we can rewrite $x_0$ by any element in its own orbit, namely $x_1$. Then the machine nondeterministically chooses one of the transitions with that label and get the subsequent state.

\begin{equation*}
(\mathit{Next},\encode{s}\tphdL{x_1})\step{}^{*}(\mathit{Step},\encode{s}\tphd{x_1}\encode{s_1})
\end{equation*}

Then, the machine has three choices, continue to enumerate a new transition, go back to the configuration $(\mathit{State},\encode{s})$, or simulate the transition.
\begin{eqnarray*}
(\mathit{Step},\encode{s}{x_0}\encode{s_0}\ldots\tphd{{x_i}}\encode{s_i})\step{}^{*}(\mathit{Step},\encode{s}{x_0}\encode{s_0}\ldots{x_i}\encode{s_i}\tphd{{x_{i+1}}}\encode{s_{i+1}})\\
(\mathit{Step},\encode{s}{x_0}\encode{s_0}\ldots\tphd{{x_i}}\encode{s_i})\step{}^{*}(\mathit{State},\tphd{\encode{s}})\\
(\mathit{Step},\encode{s}{x_0}\encode{s_0}\ldots\tphd{{x_i}}\encode{s_i})\step{x_i}(\mathit{Dest},\encode{s}{x_0}\encode{s_0}\ldots{x_i}\tphd{\encode{s_i}})
\enskip.
\end{eqnarray*}

In state $\mathit{Dest}$, the machine would prepare for the next step of transition.
\begin{equation*}
(\mathit{Dest},\encode{s}{x_0}\encode{s_0}\ldots{x_i}\tphd{\encode{s_i}})\step{*}(\mathit{State},\tphd{\encode{s_i}})
\enskip.
\end{equation*}

The following set of transitions are used to simulate any labelled the transitions.
\begin{equation*}
\{\mathit{Step}\step{x[x/x]R}\mathit{Dest}\mid x\in\Atau\}
\enskip.
\end{equation*}

Obviously, it is a legal and orbit-finite set, as every element of this set only consists of a tuple with three atoms and several constants.

Note that all the computations are either using transitions which are legal and orbit-finite sets with atoms, or just using encodings (which are based on finite sets of transitions). Hence, $\M$ uses a legal and orbit-finite set of transitions.
\end{proof}}

We also show that the requirements of effective transition systems with atoms are necessary to prove that a transition system is nominally executable.\conf{ (We put a proof in Appendix~\ref{app:rtma}.)}

\begin{lemma}~\label{lemma:RTMA-effective2}
For every \RTMA{} $\M$, the associated transition system $\T(\M)$ is an effective transition system with atoms.
\end{lemma}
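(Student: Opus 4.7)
The plan is to verify the three conditions defining ``effective transition system with atoms'' for $\T(\M)$, namely (i) that its state set and transition relation are legal sets with atoms, (ii) that each state and each transition is orbit-finite (so that Lemma~\ref{lemma:definable} applies and they have set builder expressions), and (iii) that the function $\mathit{out}$ on encodings of these set builder expressions is recursively enumerable.

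First I would fix a finite support $K \subset \Atom$ of $\M$ (i.e., a common finite support for $\Sta$, $\step{}_{\M}$, $\Dbox$ and $\Atau$) and observe that every configuration $(s,\delta) \in \Conf[\M]$ has finite support, namely $\supp(s)$ together with the atoms actually occurring in the finite tape instance $\delta$. Hence $\Conf[\M]$ is legal. The transition relation of $\T(\M)$ is defined element-wise from $\step{}_{\M}$ by a construction that is itself $K$-equivariant (rewriting under the tape head and shifting the marker commutes with atom automorphisms), so it too is legal with support $K$.

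Second, I would establish orbit-finiteness of individual configurations and transitions. A configuration is a pair of a state from the orbit-finite set $\Sta$ and a finite tape instance over the orbit-finite set $\Dbox \cup \tphd{\Dbox}$; viewed as a set with atoms it is built from finitely many elements drawn from orbit-finite ambient sets, and a standard argument (using Lemma~\ref{lemma:definable}) shows that each configuration, and likewise each transition triple, is orbit-finite and therefore has a set builder expression $\sete{(s,\delta)}$ and $\sete{((s,\delta),a,(s',\delta'))}$ respectively.

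Third, I would exhibit $\mathit{out}$. Because $\step{}_{\M}$ is orbit-finite, it decomposes into finitely many orbits, each representable by a fixed set builder expression; this representation can be hard-coded into the algorithm computing $\mathit{out}$. Given $\encode{\sete{(s,\delta)}}$, the algorithm first recovers symbolically from $\sete{(s,\delta)}$ the set builder expressions for the control state and for the symbol currently under the head. For each of the finitely many orbit-representatives of $\step{}_{\M}$, it checks whether the trigger pattern is unifiable with these expressions (a decidable symbolic operation on set builder expressions over $\Atom$ with equality) and, if so, constructs the set builder expression of the resulting successor configuration by substituting and performing the head movement; it then outputs the encoding of $\sete{((s,\delta),a,(s',\delta'))}$ assembled from the three components. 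Enumerating over all orbits and all compatible instantiations of free variables yields a recursively enumerable $\mathit{out}$.

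The main obstacle, and the place where care is needed, is step three: managing set builder expressions symbolically so that the ``trigger-and-rewrite'' semantics of $\M$ lifts to a computable operation on their encodings. This boils down to showing that pattern unification on set builder expressions with atom variables is decidable and that substitution and projection on such expressions are computable, both of which follow from the syntactic finiteness of set builder expressions and the decidability of first-order logic over $(\Atom,=)$.
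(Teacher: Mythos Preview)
Your proposal is correct and follows the same logical skeleton as the paper's own proof: establish that $\T(\M)$ is a legal transition system with atoms by exhibiting a finite support $K$ inherited from the support of $\step{}_{\M}$, and then argue effectiveness. The paper's proof, however, is extremely terse: it dispatches effectiveness in a single sentence (``It is obvious that $\T(\M)$ is effective'') and gives only the one-line observation that the finite support of $\step{}_{\M}$ lifts to $\T(\M)$.

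Your version is therefore not a different route but a genuine expansion of what the paper leaves implicit. In particular, your third step---reducing the computation of $\mathit{out}$ to symbolic unification and rewriting on set builder expressions over the finitely many orbit-representatives of $\step{}_{\M}$---is exactly the content hidden behind the paper's ``obvious''. Your treatment of condition~(ii) (that each configuration and each transition triple is orbit-finite, hence definable via Lemma~\ref{lemma:definable}) is also something the paper does not spell out. So: same approach, but you have supplied the missing argument rather than merely asserted it.
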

\arx{
\begin{proof}
It is obvious that $\T(\M)$ is effective.

Let $\M=(\Sta_{\M},\step{}_{\M},\uparrow_{\M})$, then there exists a finite set of atoms $K\subset\Atom$ such that, for every $(s,a,d,e,M,t)\in{\step{}_{\M}}$, and for every $K\mbox{-automorphism}\,\pi_K$, we have $\pi_K(s,a,d,e,M,t)\in{\step{}_{\M}}$. It follows that the transition system $\T(\M)$ is legal.
\end{proof}
}

To conclude, we have the following theorem stating that the class of nominally executable transition systems are exactly the set of effective transition systems with atoms.
\begin{theorem}~\label{thm:RTMA-effective}
A transition system $T$ is nominally executable iff there exists a legal and orbit-finite set $\Atau$ and an effective $\Atau$-labelled transition system with atoms $T'$ such that $T\bbisim T'$.
\end{theorem}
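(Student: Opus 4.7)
The plan is to obtain the theorem as an immediate consequence of the two preceding lemmas (Lemma~\ref{lemma:RTMA-effective} and Lemma~\ref{lemma:RTMA-effective2}), Definition~\ref{executability-RTMA}, and transitivity of branching bisimilarity. Essentially all the technical content has already been packaged into those lemmas, so the proof is a short bookkeeping argument.

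For the ($\Rightarrow$) direction I would unfold nominal executability: if $T$ is nominally executable, then by Definition~\ref{executability-RTMA} there exists an \RTMA{} $\M$ with $T\bbisim\T(\M)$. By Definition~\ref{def:rtm-atoms} the action alphabet of $\M$ is legal and orbit-finite, and by Lemma~\ref{lemma:RTMA-effective2} the associated transition system $\T(\M)$ is an effective transition system with atoms over that alphabet. Setting $T':=\T(\M)$ yields the required witness.

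For the ($\Leftarrow$) direction I would assume a legal and orbit-finite $\Atau$ and an effective $\Atau$-labelled transition system with atoms $T'$ such that $T\bbisim T'$. Lemma~\ref{lemma:RTMA-effective} then produces an \RTMA{} $\M$ with $T'\bbisim\T(\M)$, and transitivity of $\bbisim$ gives $T\bbisim\T(\M)$, so $T$ is nominally executable by Definition~\ref{executability-RTMA}.

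I do not expect any real obstacle in this proof: the delicate parts (the three-stage simulation via set-builder expressions with its creation of fresh atoms and duplication of valuations, and the observation that the transition system of an \RTMA{} is both legal and recursively enumerable on encoded orbits) are already absorbed into the two supporting lemmas. The only minor care needed is to observe that the alphabet appearing in the statement of the theorem can be taken to coincide with the alphabet of the \RTMA{} witnessing nominal executability, which is immediate because both notions are parametrized over the same set $\Atau$.
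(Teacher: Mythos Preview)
Your proposal is correct and matches the paper's approach exactly: the paper presents Theorem~\ref{thm:RTMA-effective} as a direct conclusion from Lemma~\ref{lemma:RTMA-effective} and Lemma~\ref{lemma:RTMA-effective2} without giving any further proof, and your bookkeeping argument via Definition~\ref{executability-RTMA} and transitivity of $\bbisim$ is precisely the intended derivation.
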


\delete{ Ignore the following parts!
The above simulation introduces divergence during the procedure of generating new atoms as well as the enumeration of transitions. We also consider the transition systems that can be simulated by \RTMA{} modulo $\bbisimd$. We propose the following requirements:

\begin{enumerate}
\item the transition system is computable, i.e. the set $\mathit{Next}$ is computable;
\item the set of states should be legal sets with atoms, i.e. there is a finite support $K$ for $\Sta_T\cup\Atau$.
\end{enumerate}
    Now we proceed to consider computable transition systems. As a counterparts for boundedly branching transition systems, we define the transition systems that branching with boundedly many orbits. For an orbit-finite set with atoms $A$, we define a function $\orb$, such that $\orb(A)$ is the number of orbits of $A$. For a set of states $S$, we define a function $\num_{\bbisimd}$, such that $\num_{\bbisimd}(S)$ is the number of equivalence classes modulo $\bbisimd$ in $S$.

We formalise another condition $\square$ for the $\Atau$-labelled transition system $T=(\Sta,\step{},\uparrow)$.
\begin{enumerate}
\item $\forall s\in\Sta$ it satisfies that:
\begin{equation*}
s\step{a} \iff \forall M\mbox{-automorphism }\pi_M,\,s\step{\pi_M(a)}
\enskip.
\end{equation*}

\item There exists a universal bound $p$ for all the pair of states and labels $(s,a)$, that bounds the number of equivalence classes modulo $\bbisimd$ from an $a$-labelled transitions from $s$, i.e.,
\begin{equation*}
\exists p\in\mathbb{N}.\,\forall (s,a)\in\Sta\times\Atau,\,\num_{\bbisimd}(\{t\mid s\step{a}t\})\leq p
\enskip.
\end{equation*}

\end{enumerate}

We first show that the above conditions are sufficient for executability of \RTMA{} modulo $\bbisimd$.

\begin{lemma}~\label{lemma:RTMA-computable1}
For every legal and orbit-finite set $\Atau$ and every computable $\Atau$-labelled transition system $T$ satisfying condition $\square$, there exists an \RTMA{} $\M$ such that $T\bbisimd \T(\M)$.
\end{lemma}
\begin{proof}
Let $T=(\Sta_T,\step{}_T,\uparrow_T)$ be a computable $\Atau$-labelled transition system with a minimal support $M\in\Atom$ satisfying condition $\star$. We show that there exists an \RTMA{} $\M=(\Sta,\step{},\uparrow)$ such that $\T(\M)\bbisim T$.

Note that $\Atau$ is orbit-finite, and we assume $\orb(\Atau)=k$. We have that there are at most $k$ representatives.

Hence, we conclude that there exists a universal bound $k$ for all the states, that bounds the number of orbits of the sets of labels of the transitions, i.e.,
\begin{equation*}
\forall s\in\Sta,\, orb(\{a\mid s\step{a}\})\leq k
\enskip.
\end{equation*}

As $T$ is computable, for every state $s\in\Sta_T$, the set $\mathit{Next}(s)=\{(a,t)\mid s\step{a}_T t\}$ is recursive. We use this fact to simulate the transition system. Note that, the operations we mention are based on the encodings of the elements of sets with atoms unless we actually write or execute elements from sets with atoms.

In the initial state, the machine writes the encoding of $\uparrow_T$ on the tape.
\begin{equation*}
(\uparrow,\tphd{\Box})\step{}^{*}(\mathit{State},\tphd{\encode{\uparrow_T}})
\enskip.
\end{equation*}

In the $\mathit{State}$ state, the machine start to compute the representatives of the labels in $\mathit{Next}$. As there are at most $q$ representatives, the machine only needs to verify all the $q$ representatives whether it is a label in $\mathit{Next}$. As $T$ is computable, this procedure is also computable. Finally, the machine writes a the tuples of representatives as symbol $(x_1,x_2,\ldots,x_i)\in R^i$ on the tape, where $(i\leq k)$. Note that there are at most $k$ representatives, we need to introduce at most $k^k$ such symbols.

\begin{equation*}
(\mathit{State},\tphd{\encode{s}})\step{}^{*}(\mathit{Step},\encode{s}\tphd{(x_1,x_2,\ldots,x_i)})
\enskip.
\end{equation*}

Then the machine nondeterministically chooses one of the labels and one of the subsequent states.
\begin{equation*}
(\mathit{Step},\encode{s}\tphd{(x_1,x_2,\ldots,x_i)})\step{\pi_M(x_j)}(\mathit{Dest},\encode{s}\tphdL{(\pi_M(x_j),l)})
\enskip.
\end{equation*}

Where $1\leq j\leq i\leq k$, and $0\leq l \leq p$.

This step is achieved by the following transitions,

\begin{equation*}
\{\mathit{Step}\step{\pi_M(x_j)[(x_1,x_2,\ldots,x_i)/(\pi_M(x_j),l)]}\mathit{Dest}\mid 1\leq j\leq i\leq k,\,0\leq l \leq p,\,\forall M\mbox{-automorphism }\pi_M\}
\enskip.
\end{equation*}

This is a legal and obit-finite set.

We denote $\pi_M(x_j)$ by $y$, then the machine proceed to compute the $l$-th reachable state by a $y$-labelled transition from $s$, if there are less then $l$ distinct reachable state, then the machine take the first state as the destination state. We denote the result state by $s'$.

\begin{equation*}
(\mathit{Dest},\encode{s}\tphdL{(y,l)})\step{}^{*}(\mathit{State},\tphd{\encode{s'}})
\enskip.
\end{equation*}

We can verify that the set of transitions are legal and orbit-finite.
\end{proof}

We also show that it is a necessary condition.
\todo{This is not correct! I will fix it}
\begin{lemma}~\label{lemma:RTMA-computable2}
For every \RTMA{} $\M$, $\T(\M)$ is computable and satisfies condition $\square$.
\end{lemma}

By the above lemmas, we have the following theorem.
\begin{theorem}~\label{thm:RTMA-computable}
A transition system $T$ is executable by an \RTMA{} modulo $\bbisimd$ iff there exists a legal and orbit-finite set $\Atau$ and a computable $\Atau$-labelled transition system $T'$  satisfying condition $\square$, and $T\bbisimd T'$.
\end{theorem}

\delete{
\begin{theorem}~\label{thm:RTM-effective}
For every legal and orbit-finite set $\Atau$ and every effective $\Atau$-labelled transition system $T$, there exists an \RTMA{} $\M$ such that $T\bbisim \T(\M)$.
\end{theorem}

\todo{The proof sketch below is unclear and should be improved. You
  seem to be adapting the proof in \cite{BLT2013}, but this is only clear
  to me because I am an author of that paper. Readers less familiar
  with \cite{BLT2013} will not understand this. You should, therefore,
briefly summarise the proof in \cite{BLT2013} here (also explicitly
say that this is what you do!), and then explain
how this proof can be adapted and why the resulting sets are all legal
and orbit-finite.}
\begin{proof}
Let $T=(\Sta_T,\step{}_T,\uparrow_T)$ be an $\Atau$-labelled effective transition system. We show that there exists an \RTMA{} $\M=(\Sta,\step{},\uparrow)$ such that $\T(\M)\bbisim T$.

As $T$ is effective, for every state $s\in\Sta_T$, the set $\mathit{Next}(s)=\{(a,t)\mid s\step{a}_T t\}$ is recursively enumerable. We use this fact to simulate the transition system.

In the initial state, the machine writes the encoding of $\uparrow_T$ on the tape.
\begin{equation*}
(\uparrow,\tphd{\Box})\step{}^{*}(\mathit{State},\tphd{\encode{\uparrow_T}})
\enskip.
\end{equation*}

In the $\mathit{State}$ state, the machine start to enumerate the elements in $\mathit{Next}$, and write them on the tape. Suppose $(a_0,s_0)$ is the first element enumerated from $\mathit{Next}$.

\begin{equation*}
(\mathit{State},\tphd{\encode{s}})\step{}^{*}(\mathit{Step},\encode{s}\tphd{a_0}\encode{s_0})
\enskip.
\end{equation*}

Then, the machine has three choices, continue to enumerate a new transition, go back to the configuration $(\mathit{State},\encode{s})$, or simulate the transition.
\begin{eqnarray*}
(\mathit{Step},\encode{s}{a_0}\encode{s_0}\ldots\tphd{{a_i}}\encode{s_i})\step{}^{*}(\mathit{Step},\encode{s}{a_0}\encode{s_0}\ldots{a_i}\encode{s_i}\tphd{{a_{i+1}}}\encode{s_{i+1}})\\
(\mathit{Step},\encode{s}{a_0}\encode{s_0}\ldots\tphd{{a_i}}\encode{s_i})\step{}^{*}(\mathit{State},\tphd{\encode{s}})\\
(\mathit{Step},\encode{s}{a_0}\encode{s_0}\ldots\tphd{{a_i}}\encode{s_i})\step{a_i}(\mathit{Dest},\encode{s}{a_0}\encode{s_0}\ldots{a_i}\tphd{\encode{s_i}})
\enskip.
\end{eqnarray*}

In state $\mathit{Dest}$, the machine would prepare for the next step of transition.
\begin{equation*}
(\mathit{Dest},\encode{s}{a_0}\encode{s_0}\ldots{a_i}\tphd{\encode{s_i}})\step{*}(\mathit{State},\tphd{\encode{s_i}})
\enskip.
\end{equation*}

We only illustrate the following set of transitions which are used to simulate any labelled the transitions.
\begin{equation*}
\{\mathit{Step}\step{a[a/a]R}\mathit{Dest}\mid a\in\Atau\}
\enskip.
\end{equation*}

Obviously, it is a legal and orbit-finite set, as every element of this set only consists of a tuple with three atoms and several constants.

For all the other steps, one can verify that they are computable by a
Turing machine with atoms, and hence by an \RTMA. \todo{Which other
  steps? Explain why you need the full power of Turing machines with
  atoms, and why a finitary Turing machine would not do the job. Also,
  there should be a more complete argument that the sets involved are
  legal and orbit-finite. I can see that the set of transitions
  stemming from $\mathit{Step}$ is legal and orbit-finite. But a
  reader would like to understand why this is the only interesting case.}
Hence we have designed an \RTMA{} $\M$, satisfying $\T(\M)\bbisim T$.
\end{proof}

By analogy to the results of RTMs, we also investigate the transition systems that are executable by \RTMA{} modulo $\bbisimd$. A naive requirement for the transition system is let it to be computable and boundedly branching. For a transition system if the cardinality of the set $\mathit{Next}(s)$ is bounded, we call it a boundedly branching transition system, and the maximum cardinality is the branching degree of the transition system. Moreover, if $\mathit{Next}(s)=\{(a,t)\mid s\step{a}_T t\}$ is computable, we call it a computable transition system. For boundedly branching computable transition systems, we have the following theorem.

\todo{Remove the following theorem. Instead, explain that in
  \cite{BLT2013} a refinement of the above theorem in the finitary
  case is obtained showing that boundedly branching computable
  transition systems are executable up to divergence-preserving
  branching bisimilarity. We shall obtain a similar refinement for
  RTMs with atoms.}

\begin{theorem}~\label{thm:RTM-boundedly-computable}
For every legal and orbit-finite set $\Atau$ and every boundedly branching computable $\Atau$-labelled transition system $T$, there exists an \RTMA{} $\M$ such that $T\bbisimd \T(\M)$.
\end{theorem}

We omit the proof since the theorem is actually a corollary of Theorem~\ref{thm:RTM-orbit-finite}. Actually, even if divergence-preserving branching bisimilarity is required, the transition systems associated with \RTMA{} are more than boundedly branching computable transition systems. Consider the transition system $T$ in Lemma~\ref{lemma:RTM-infi}, it is an infinitely branching transition system. Nevertheless, there is an \RTMA{} such that its transition system is divergence-preserving branching bisimilar to $T$.

We design an \RTMA{} $\M=(\Sta,\step{},\uparrow)$ as follows,
\begin{enumerate}
\item $\Sta=\{\uparrow, s_1,s_2,s_3\}$
\item $\step{}=\{\uparrow\step{x[\Box/x]R}s_1\mid x\in\A\}\cup\{s_1\step{\tau[\Box/\Box]L}s_2\}\cup\{s_2\step{x[x/\Box]R}s_3\mid x\in\A\}$.
\end{enumerate}

Obviously, we have $\T(\M)\bbisimd T$.

Now we proceed to figure out the exact class of transition systems that are executable by \RTMA{} modulo $\bbisimd$. By analogy to the boundedly branching transition system, we define the transition systems with transition relations that have boundedly many orbits.

From our assumption, the sets of labels are from legal and orbit-finite sets with atoms, thus we may apply an atom automorphism $\pi$ on any label. Moreover, we introduce the notion of atom automorphism $\pi$ on a transition system $T$ by applying $\pi$ on the labels of every transition of $T$. We denote by $\pi(s)$ for applying an atom automorphism $\pi$ on a transition system with $s$ as its initial state.

For an labelled transition system $T=(\Sta_T,\step{}_T,\uparrow_T)$, we partition the transitions from every state $s\in\Sta_T$ into disjoint orbits. We say two transitions $s\step{a_1}_T s_1$ and $s\step{a_2}_T s_2$ are in the same orbit if there is an atom automorphism $\pi$ such that $\pi(a_1)=a_2$ and $\pi(s_1)=s_2$. If there exists a natural number $k$ such that for every state $s\in\Sta_T$, the transitions from $s$ are partitioned into at most $k$ orbits, then we say that $T$ has a transition relation with boundedly many orbits.

\begin{theorem}~\label{thm:RTM-orbit-finite}
For every legal and orbit-finite set $\Atau$ and every computable $\Atau$-labelled transition system $T$ that has a transition relation with boundedly many orbits, there exists an \RTMA{} $\M$ such that $T\bbisimd \T(\M)$.
\end{theorem}

\begin{proof}
Let $T=(\Sta_T,\step{}_T,\uparrow_T)$ be an $\Atau$-labelled computable transition system that has a transition relation with at most $k$ orbits. We show that there exists an \RTMA{} $\M=(\Sta,\step{},\uparrow)$ such that $\T(\M)\bbisimd T$.

In the initial state, the machine writes the encoding of $\uparrow_T$ on the tape.
\begin{equation*}
(\uparrow,\tphd{\Box})\step{}^{*}(\mathit{State},\tphd{\encode{\uparrow_T}})
\enskip.
\end{equation*}

As the transition relation is computable and has boundedly many orbits, the machine computes and writes a representative for all the orbits of the subsequent transitions on the tape.
\begin{equation*}
(\mathit{State},\tphd{\encode{s}})\step{*}(\mathit{Step},\encode{s}{a_0}\encode{s_0}\ldots{a_i}\tphd{\encode{s_i}})
\enskip.
\end{equation*}

Note that we have $i\leq k$. Then, the machine writes a data symbol as a tuple of $(a_0,\ldots,a_i)$ on the tape.
\begin{equation*}
(\mathit{Step},\encode{s}{a_0}\encode{s_0}\ldots{a_i}\tphd{\encode{s_i}})\step{}^{*}(\mathit{Step'},\encode{s}{a_0}\encode{s_0}\ldots{a_i}\encode{s_i}\tphd{{(a_0,\ldots,a_i)}})
\enskip.
\end{equation*}

The machine then chooses one of the transitions to simulate,
\begin{equation*}
(\mathit{Step'},\encode{s}{a_0}\encode{s_0}\ldots{a_i}\encode{s_i}\tphd{{(a_0,\ldots,a_i)}})\step{\pi(a_j)}(\mathit{Dest},\encode{s}{a_0}\encode{s_0}\ldots{a_i}\encode{s_i}\tphdL{{(\pi(a_j),j)}})
\enskip.
\end{equation*}

 Note that we have $0\leq j\leq i$. $\pi$ is an atom automorphism that maps $a_j$ to any elements from its orbit. Finally, the machine prepares for the simulation for the next state. As the transition relation is computable, the encoding of the next state $\encode{\pi(s_j)}$ is also computable.
\begin{equation*}
(\mathit{Dest},\encode{s}{a_0}\encode{s_0}\ldots{a_i}\encode{s_i}\tphdL{{(\pi(a_j),j)}})\step{}^{*}(\mathit{State},\tphd{\encode{\pi(s_j)}})
\enskip.
\end{equation*}

\RTMA{} needs the following set of transitions to generate the step of $\pi(a_j)$ labelled transition, which is a legal and orbit-finite set.
\begin{equation*}
\{\mathit{Step'}\step{\pi(a_j)[(a_0\ldots,a_i)/(\pi(a_j),j)]R}\mathit{Dest}\mid a_0,\ldots,a_j\in\Atau,\, 0\leq j\leq k,\,\pi\mbox{ is an atom automorphism}\}
\enskip.
\end{equation*}

Moreover, one can verify that all the other steps are computable by a Turing machine with atoms, and hence by an \RTMA{}.
Hence we have designed an \RTMA{} $\M$, satisfying $\T(\M)\bbisimd T$.
\end{proof}
}
We also show that, the set of transition systems that are executable by \RTMA{} modulo $\bbisimd$ does not exceeds the above criterion. Followed by the proof about RTMs with finite sets in~\cite{LY15}, we first establish some auxiliary facts.

\begin{lemma}~\label{lemma:RTM-orbit-finte}
For every \RTMA{} $\M$, the transition relation of $\T(\M)$ has boundedly many orbits.
\end{lemma}

\begin{proof}
As the transition relation of $\M$ is orbit-finite, it has boundedly many orbits. It follows that the transition relation of $\T(\M)$ has boundedly many orbits.
\end{proof}

\begin{lemma}~\label{lemma:bbisimd-orbits}
If $s\bbisimd t$, then the transition relation from $s$ and $t$ has the same number of orbits.
\end{lemma}

Then we have the following theorem of negative result.

\begin{theorem}~\label{thm:RTM-orbit-unboundedly}
For every legal and orbit-finite set $\Atau$ and every computable $\Atau$-labelled transition system $T$ that has a transition relation with unboundedly many orbits and has no divergence up to $\bbisimd$, then there does not exist an \RTMA{} $\M$ such that $T\bbisimd \T(\M)$.
\end{theorem}
} 
\section{Applications}~\label{sec:related work}
\paragraph*{$\pi$-calculus}
\arx{
The $\pi$-calculus was proposed by Milner, Parrow and Walker~\cite{Milner1992} as a language to specify processes with link mobility. In this paper, we shall consider the version presented in the textbook by Sangiorgi and Walker~\cite{SW01}, excluding the match prefix.

We presuppose a countably infinite set $\N$ of names; we use strings of lower case letters for elements of $\N$.
The \emph{prefixes}, \emph{processes} and \emph{summations} of the $\pi$-calculus are, respectively, defined by the following grammar:
\begin{align*}
\pi\      & \coloneqq\ \outcap{x}{y}\ \mid\ \incap{x}{z}\ \mid\ \taucap \qquad (x,y,z\in \N)\\
P\    & \coloneqq\ M\ \mid\  P\parc P\ \mid\ \restr{z}{P}\ \mid\ \repl{P}\\
M\   & \coloneqq\ \nil\ \mid\ \pref{\pi}P \mid\ M \altc M\enskip.
\end{align*}

We use $P\{z/y\}$ to denote a $\pi$-term obtained by substituting every occurrence of $y$ to $z$ in $P$.

An $\alpha$-conversion between $\pi$-terms is defined in~\cite{SW01} as a finite number of renaming of bound names. We write $P\aeq Q$ if $P$ and $Q$ are two $\pi$-terms that are $\alpha$-convertible.

We define the operational behaviour of $\pi$-terms by means of the structural operational semantics in Fig.~\ref{fig:pi-semantics}, in which $\piact{}$ ranges over the set of actions of the $\pi$-calculus.

The transition system associated with a $\pi$-term is defined as follows:

\begin{definition}~\label{lts-piterm}
Let $P$ be a $\pi$-term. $\T(P)=(\Sta_{P},\step{}_{P},\uparrow_{P})$ is the transition system associated with $P$, where
\begin{enumerate}
    \item $\Sta_{P}$ is the set of $\alpha$-equivalence classes of all reachable $\pi$-terms from $P$ by the operational semantics;
    \item $\step{}_{P}$ is the set of transitions between $\alpha$-equivalence classes of all reachable $\pi$-terms; and
    \item $\uparrow_{P}$ is the $\alpha$-equivalence class of $P$.
\end{enumerate}
\end{definition}

\begin{figure}
\begin{center}
\fbox{
\begin{minipage}[t]{0.9\textwidth}
$\mathrm{PREFIX}\quad\inference{\,}{\tau.P\step{\tau}P}\quad \inference{}{\overline{x}y.P\step{\overline{x}y}P}\quad\inference{}{x(y).P\step{xz}P\{z/y\}}$\\
$\mathrm{SUM_L}\quad\inference{P\step{\piact{}}P'}{P+Q\step{\piact{}}P'}
\quad\mathrm{PAR_L}\quad\inference{P\step{\piact{}}P'}{P\parc{Q}\step{\piact{}}P'\parc{Q}}\,\bn{\piact{}}\cap \fn{Q}=\emptyset$\\
$\mathrm{COM_L}\quad\inference{P\step{\overline{x}y}P',\,Q\step{xy}Q'}{P\parc{Q}\step{\tau}P'\parc{Q'}}\quad
\mathrm{CLOSE_L}\quad\inference{P\step{\overline{x}(z)}P',\,Q\step{xz}Q'}{P\parc{Q}\step{\tau}\restr{z}{(P'\parc{Q'})}}\,z\notin \fn{Q}$\\
$\mathrm{RES}\quad\inference{P\step{\piact{}}P'}{\restr{z}{P}\step{\piact{}}\restr{z}{P'}}\,z\notin\piact{}\quad
\mathrm{OPEN}\quad\inference{P\step{\overline{x}z}P'}{(z)P\step{\overline{x}(z)}P'}\,z\neq x$\\
$\mathrm{REP}
  \quad\inference{P\step{\piact{}}P'}{\repl{P}\step{\piact{}}P'\parc\repl{P}}
  \quad\inference{P\step{\outact{x}{y}}P',\,P\step{\inact{x}{y}}P''}{\repl{P}\step{\tauact{}}(P'\parc P'')\parc\repl{P}}
  \quad\inference{P\step{\boutact{x}{z}}P',\,P\step{\inact{x}{z}}P''}{\repl{P}\step{\tauact}\restr{z}{(P'\parc P'')}\parc\repl{P}}$\\
$\mathrm{ALPHA}\quad\inference{P\step{\piact{}}P'}{Q\step{\piact{}}P'}\,Q\aeq P$.
\end{minipage}
}\end{center}
\caption{Operational rules for the $\pi$-calculus}\label{fig:pi-semantics}
\end{figure}
}

The motivation for introducing the notion of RTM with infinite alphabets comes from the discussion of the executability of the $\pi$-calculus~\cite{LY15}.
We show that the transition systems that the $\pi$-calculus associates with are nominally executable.
We consider the set of names $\N$ of a $\pi$-calculus process as the set of atoms. The transition system associated with a $\pi$-term \conf{(We put a detailed introduction in Appendix~\ref{app:pi}.) }is actually an effective transition system with atoms. We get the following result as a corollary to Theorem~\ref{thm:RTMA-effective}. \conf{(We prove this result in Appendix~\ref{app:pi}.)}
\delete{
Note that, trivially, $\N$ itself is then a legal and orbit-finite set with atoms. Moreover, a label from the transition system of a $\pi$-calculus process is a pair two names or $\tau$. Hence the set of labels $\Atau$ for every $\pi$-calculus process is a legal and orbit-finite set with atoms. Moreover, the transition relations of all $\pi$-calculus processes are preserved under $\alpha$-conversion. As mentioned in~\cite{SW01}, the set of free variables of a $\pi$-calculus process is invariant under execution. We consider the set of free names in a $\pi$-term as the support of the associated transition system. The set of free names is finite, thus we get a finite support of the transition system. We conclude that the transition systems associated with $\pi$-calculus processes are legal transition systems with atoms. By using the fact that the operational semantics of the $\pi$-calculus leads to an effective transition relation,
}
\begin{corollary}~\label{cor:rtma-pi}
For every $\pi$-calculus process $P$, the transition system $\T(P)$ is nominally executable.
\end{corollary}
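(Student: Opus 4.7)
The plan is to invoke Theorem~\ref{thm:RTMA-effective}: it suffices to show that $\T(P)$ is an effective transition system with atoms whose label set is legal and orbit-finite. Throughout, I take the set of names $\N$ to play the role of the set of atoms $\Atom$.

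First I would check that the action set is legal and orbit-finite. Every $\pi$-calculus label is either $\tau$ or built from at most two names (free input $xy$, free output $\overline{x}y$, bound output $\overline{x}(z)$). Each of the four syntactic forms gives rise to a single orbit under atom automorphisms, so $\Atau$ is supported by $\emptyset$ and has finitely many orbits.

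Next I would argue that $\Sta_P$ and $\step{}_P$ are legal with $\fn{P}$ as a common finite support. Working with $\alpha$-equivalence classes of $\pi$-terms, each class $Q$ is a set with atoms whose minimal support is $\fn{Q}$. A standard fact about the operational semantics of the $\pi$-calculus (see~\cite{SW01}) is that free names do not grow along transitions, so $\fn{Q}\subseteq\fn{P}$ for every reachable $Q$. Consequently any $\fn{P}$-automorphism $\pi$ fixes every reachable state and, by equivariance of the semantic rules in Fig.~\ref{fig:pi-semantics}, sends transitions to transitions; this exhibits $\fn{P}$ as a support of $\Sta_P$ and $\step{}_P$.

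Finally I would verify effectiveness. Under the action of $\fn{P}$-automorphisms, each reachable state forms a single orbit (its bound names may be renamed to any fresh atoms), and similarly each transition gives an orbit-finite set. For the computability of $\mathit{out}$, I would encode set builder expressions for $\alpha$-equivalence classes of $\pi$-terms via their abstract syntax with free names represented as variables. Since the operational semantics is given by finitely many syntax-directed rules, a straightforward recursive algorithm transforms $\encode{\sete{s}}$ into the list of encodings $\encode{\sete{(s,a,t)}}$ of all outgoing transition orbits, yielding the required recursively enumerable $\mathit{out}$. The main obstacle, though essentially routine, is setting up the encoding of set builder expressions so that $\alpha$-equivalence is handled uniformly and each reachable class is captured by a single expression; once this is in place, Theorem~\ref{thm:RTMA-effective} yields the desired \RTMA{} directly.
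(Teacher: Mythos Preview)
Your overall strategy matches the paper's: both invoke Theorem~\ref{thm:RTMA-effective} after checking that $\Atau$ is legal and orbit-finite, that $\fn{P}$ supports $\Sta_P$ and ${\step{}_P}$, and that the transition relation is effective. However, your argument for the support contains a genuine gap.

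You claim that ``free names do not grow along transitions, so $\fn{Q}\subseteq\fn{P}$ for every reachable $Q$,'' and conclude that every $\fn{P}$-automorphism \emph{fixes} each reachable state individually. This is false: an input transition $x(y).R\step{xz}R\{z/y\}$ introduces a free occurrence of $z$, which need not lie in $\fn{P}$; likewise a bound output extrudes a previously bound name. The relevant lemma in~\cite{SW01} only gives $\fn{Q}\subseteq\fn{P}\cup n(\alpha)$, not $\fn{Q}\subseteq\fn{P}$. So in general a $\fn{P}$-automorphism does \emph{not} fix a reachable $\alpha$-class $Q$; it merely sends it to another reachable class.

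The paper avoids this pitfall by arguing in two steps: first a structural induction on the initial term $P$ shows that $\pi_K(P)\aeq P$ for $K=\fn{P}\cup\{\tau\}$, and then an induction on SOS derivations establishes equivariance of the transition relation, i.e.\ $s\step{\alpha}t$ implies $\pi_K(s)\step{\pi_K(\alpha)}\pi_K(t)$. Together these give that $\Sta_P$ and ${\step{}_P}$ are \emph{setwise} invariant under $K$-automorphisms, which is what ``$K$ supports'' actually requires. You already invoke equivariance for the transitions; the fix is to drop the pointwise-fixing claim and instead combine equivariance with $\pi_K(P)\aeq P$ to conclude that the reachable part is closed under $\pi_K$ as a set.
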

\arx{
\begin{proof}
We use $\N$ to denote the countable set of names used in the $\pi$-calculus, and we suppose that $\N=\Atom$. We suppose that $P$ is an arbitrary $\pi$-calculus process. We use $\fn{P}$ to denote the set of free names involved in $P$ and $\bn{P}$ to denote the set of bound names involved in $P$.

By Theorem~\ref{thm:RTMA-effective}, it is sufficient to show that for every $\pi$-calculus process $P$, the transition system $\T(P)$ is an effective legal transition system with atoms. The transition system is effective by the effectiveness of structural operational semantics of the $\pi$-calculus. Therefore, by Definition~\ref{def:ltsa}, it is sufficient to show that there is a finite set $K\subset\N$ such that $K$ is a support of $\T(P)$. We let $\T(P)=(\Sta_{\pi},\step{}_{\pi},P)$ be an $\Atau$ labelled transition system, and we take $K=\fn{P}\cup\{\tau\}$. Note that $\Sta_P$ is the set of $\pi$-terms and $\Atau$ is the set of labels, and hence $\Sta_{\pi}$ is a set with atoms and $\Atau$ is an orbit-finite set with atoms. To show that $K$ is a support of $\T(P)$, we only need to show that for every $(s,a,t)\in\step{}_{\pi}$, and for every $K$-automorphism $\pi_K$, $\pi_K(s,a,t)\in\step{}_{\pi}$.

We let $\pi_K$ be an arbitrary $K$-automorphism, and we show that $\pi_K(P)$ is an $\alpha$-conversion of $P$, i.e., $\pi_K(P)\aeq P$. We show it by a structural induction on $P$.
In the base case, $P=\nil$, then it is trivial that $\pi_K(P)=\nil\aeq P$.

For the step case, we distinguish with 7 cases.
\begin{enumerate}
\item If $P=\pref{\tau}P'$, by induction hypothesis, we have $\pi_K(P')\aeq P'$. Hence, we have $\pi_K(\pref{\tau}P')=\pref{\tau}\pi_K(P')\aeq \pref{\tau}P'=P$.
\item If $P=\pref{\outcap{x}{y}}P'$, by induction hypothesis, we have $\pi_K(P')\aeq P'$. Moreover, $x,y\in\fn{P}\in K$. Hence, we have $\pi_K(\pref{\outcap{x}{y}}P')=\pref{\outcap{x}{y}}\pi_K(P')\aeq \pref{\outcap{x}{y}}P'=P$.
\item If $P=\pref{\incap{x}{y}}P'$, by induction hypothesis, we have $\pi_K(P')\aeq P'$. Moreover, $x\in\fn{P}\in K$. Hence, we have $\pi_K(\pref{\incap{x}{y}}P')=\pref{\incap{x}{\pi_K(y)}}\pi_K(P')\aeq\pref{\incap{x}{y}}P'=P$.
\item If $P=P_1+P_2$, by induction hypothesis, we have $\pi_K(P_1)\aeq P_1,\,\pi_K(P_2)\aeq P_2$. Hence, we have $\pi_K(P_1+P_2)=\pi_K(P_1)+\pi_K(P_2)\aeq P_1+P_2=P$.
\item If $P=P_1\parc P_2$, by induction hypothesis, we have $\pi_K(P_1)\aeq P_1,\,\pi_K(P_2)\aeq P_2$. Hence, we have $\pi_K(P_1\parc P_2)=\pi_K(P_1)\parc \pi_K(P_2)\aeq P_1\parc P_2=P$.
\item If $P=\restr{z}{P'}$, by induction hypothesis, we have $\pi_K(P')\aeq P'$. Moreover, $\pi_K(z)=z'\notin K$. Hence, we have $\pi_K(\restr{z}{P'})=\restr{z'}{\pi_K(P')}\aeq \restr{z}{P'}=P$.
\item If $P=\repl{P'}$, by induction hypothesis, we have $\pi_K(P')\aeq P$. Hence, we have $\pi_K(\repl{P'})=\repl{\pi_K(P')}\aeq \repl{P'}=P$.
\end{enumerate}

By structural induction, we have shown that for every $K$-automorphism $\pi_K$, $\pi_K(P)\aeq P$. Therefore, we have $\pi_K(P)\in\Sta_{\pi}$. Hence, $K$ is a support of $\Sta_{\pi}$.

Next we still let $\pi_K$ be an arbitrary $K$-automorphism, and we show that for every transition $P\step{\piact}_{\pi}Q\in\step{}_{\pi}$, it satisfies that $\pi_K(P\step{\piact}_{\pi}Q)\in\step{}_{\pi}$ by an induction on the structural operational semantics of the $\pi$ calculus.

We construct a proof tree according to the structural operational semantics in Figure~\ref{fig:pi-semantics} for every transition $(P,\piact,Q)\in\step{}_{\pi}$. The induction hypothesis is that, if $(P,\piact,Q)$ is induced from a set of transitions $\mathit{Pre}(P,\piact,Q)\subset\step{}_{\pi}$ , then for every transition $(P_i,\piact_i,Q_i)\in\mathit{Pre}(P,\piact,Q)$, there is $\pi_K(P_i,\piact_i,Q_i)\in\step{}_{\pi}$.

For the base case, the $\nil$ process cannot do any transition, then the property trivially holds. For the step case, we distinguish with several cases as follows.

\begin{enumerate}
\item If the transition is $\pref{\tauact}P\step{\tauact}_{\pi}P$, then we have $\pi_K(\pref{\tauact}P)\step{\pi_K(\tauact)}_{\pi}\pi_K(P)$.
\item If the transition is $\pref{\outcap{x}{y}}P\step{\outact{x}{y}}_{\pi}P$, then we have $\pi_K(\pref{\outcap{x}{y}}P)\step{\pi_K(\outact{x}{y})}_{\pi}\pi_K(P)$.
\item If the transition is $\pref{\incap{x}{y}}P\step{\outact{x}{z}}_{\pi}P\{z/y\}$, then we have $\pi_K(\pref{\incap{x}{y}}P)\step{\pi_K(\inact{x}{z})}_{\pi}\pi_K(P\{z/y\})$.
\item If the transition is $P_1+P_2\step{\piact}_{\pi}Q$, then there are two cases
\begin{enumerate}
    \item if $P_1\step{\piact}_{\pi}Q$. By induction hypothesis, we have $\pi_K(P_1)\step{\pi_K(\piact)}_{\pi}\pi_K(Q)$. Hence, we have $\pi_K(P_1+P_2)\step{\pi_K(\piact)}_{\pi}\pi_K(Q)$.
    \item If $P_2\step{\piact}_{\pi}Q$. The proof is symmetric with the previous case.
\end{enumerate}
\item If the transition is $P_1\parc P_2\step{\piact}_{\pi}Q_1\parc P_2$, then we have $P_1\step{\piact}_{\pi}Q_1$ and $\bn{\piact}\cap\fn{P_2}=\emptyset$. By induction hypothesis, we have $\pi_K(P_1)\step{\pi_K(\piact)}_{\pi}\pi_K(Q_1)$. Moreover, there is $\bn{\pi_K(\piact)}\cap\fn{\pi_K(P_2)}=\emptyset$. Hence we have $\pi_K(P_1\parc P_2)\step{\pi_K(\piact)}_{\pi}\pi_K(Q_1\parc P_2)$.
\item If the transition is $P_1\parc P_2\step{\piact}_{\pi}P_1\parc Q_2$. The  proof is symmetric with the previous case.
\item If the transition is $P_1\parc P_2\step{\tauact}_{\pi}Q_1\parc Q_2$, then we have $P_1\step{\outact{x}{y}}_{\pi}Q_1,\,P_2\step{\inact{x}{y}}_{\pi}Q_2$ (or the symmetrical case). By induction hypothesis, we have $\pi_K(P_1)\step{\pi_K(\outact{x}{y})}_{\pi}\pi_K(Q_1)$ and $\pi_K(P_2)\step{\pi_K(\inact{x}{y})}_{\pi}\pi_K(Q_2)$. Hence we have $\pi_K(P_1\parc P_2)\step{\pi_K(\tauact)}_{\pi}\pi_K(Q_1\parc Q_2)$.
\item If the transition is $P_1\parc P_2\step{\tauact}_{\pi}\restr{z}{(Q_1\parc Q_2)}$, then we have $P_1\step{\boutact{x}{z}}_{\pi}Q_1$, $P_2\step{\inact{x}{z}}_{\pi}Q_2$, and $z\notin\fn{P_2}$ (or the symmetrical case). By induction hypothesis, we have $\pi_K(P_1)\step{\pi_K(\boutact{x}{z})}_{\pi}\pi_K(Q_1)$ and $\pi_K(P_2)\step{\pi_K(\inact{x}{z})}_{\pi}\pi_K(Q_2)$. Moreover, there is $\pi_K(z)\notin\fn{\pi_K(P_2)}$. Hence, we have $\pi_K(P_1\parc P_2)\step{\pi_K(\tauact)}_{\pi}\pi_K(\restr{z}{(Q_1\parc Q_2)})$.

\item If the transition is $\restr{z}{P}\step{\piact}_{\pi}\restr{z}{Q}$, then we have $P\step{\piact}_{\pi} Q$ and $z\notin \piact$. By induction hypothesis, we have $\pi_K(P)\step{\pi_K(\piact)}_{\pi}\pi_K(Q)$. Moreover, there is $\pi_K(z)\notin\pi_K(\piact)$. Hence, we have $\pi_K(\restr{z}{P})\step{\pi_K(\piact)}_{\pi}\pi_K(\restr{z}{Q})$.
\item If the transition is $\restr{z}{P}\step{\boutact{x}{z}}_{\pi}Q$, then we have $P\step{\outact{x}{z}}Q$, and $z\neq x$.  By induction hypothesis, we have $\pi_K(P)\step{\pi_K(\outact{x}{z})}_{\pi}\pi_K(Q)$. Moreover, there is $\pi_K(z)\neq\pi_K(x)$. Hence, we have $\pi_K(\restr{z}{P})\step{\pi_K(\boutact{x}{z})}_{\pi}\pi_K(Q)$.
\item If the transition is $\repl{P}\step{\piact}_{\pi}Q\parc \repl{P}$, then we have $P\step{\piact}_{\pi}Q$. By induction hypothesis, we have $\pi_K(\repl{P})\step{\pi_K(\piact)}_{\pi}\pi_K(Q)$. Hence we have  $\pi_K(P)\step{\pi_K(\piact)}_{\pi}\pi_K(Q\parc \repl{P})$.
\item If the transition is $\repl{P}\step{\tauact}_{\pi}(Q_1\parc Q_2)\parc \repl{P}$, then we have $P\step{\outact{x}{y}}_{\pi}Q_1$, and $P\step{\inact{x}{y}}_{\pi}Q_2$. By induction hypothesis, we have $\pi_K(P)\step{\pi_K(\outact{x}{y})}_{\pi}\pi_K(Q_1)$, and $\pi_K(P)\step{\pi_K(\inact{x}{y})}_{\pi}\pi_K(Q_2)$. Hence we have  $\pi_K(P)\step{\pi_K(\tauact)}_{\pi}\pi_K((Q_1\parc Q_2)\parc \repl{P})$.
\item If the transition is $\repl{P}\step{\tauact}_{\pi}\restr{z}{(Q_1\parc Q_2)}\parc \repl{P}$, then we have $P\step{\boutact{x}{z}}_{\pi}Q_1$, and $P\step{\inact{x}{z}}_{\pi}Q_2$. By induction hypothesis, we have $\pi_K(P)\step{\pi_K(\boutact{x}{z})}_{\pi}\pi_K(Q_1)$, and $\pi_K(P)\step{\pi_K(\inact{x}{z})}_{\pi}\pi_K(Q_2)$. Hence we have  $\pi_K(P)\step{\pi_K(\tauact)}_{\pi}\pi_K(\restr{z}{(Q_1\parc Q_2)}\parc \repl{P})$.

\item If the transition is $P\step{\piact}_{\pi} Q$, and $P\aeq P_1$, then $P_1\step{\piact}_{\pi} Q$. By induction hypothesis $\pi_K(P_1)\step{\pi_K(\piact)}_{\pi} \pi_K(Q)$. Moreover, using the statement $\pi_K(P)\aeq P$, we have $\pi_K(P)\aeq P\aeq P_1\aeq\pi_K(P_1)$. Hence, we have $\pi_K(P)\step{\pi_K(\piact)}_{\pi}\pi_K(Q)$.
\end{enumerate}

Using the induction on the depth of the proof tree of the transition, we have shown that for every transition $P\step{\piact}_{\pi}Q$, we have $\pi_K(P)\step{\pi_K(\piact)}_{\pi}\pi_K(Q)$. We conclude that $K$ is a support of $\step{}_{\pi}$. Therefore, the transition system $\T(P)$ is an effective legal transition system with atoms.

As a consequence of Theorem~\ref{thm:RTMA-effective}, for every $\pi$-calculus process $P$, the transition system $\T(P)$ is nominally executable.
\end{proof}
}
\paragraph*{mCRL2}

The formal specification language mCRL2~\cite{groote_et_al:DSP:2007:862,GM2014} is widely used to specify and analyze the behaviour of distributed systems. The question arises to what extent the transition systems specified by mCRL2 are executable. The actions in an mCRL2 specification may contain a tuple of integers of any arbitrary lengths, which leads to a set of actions with infinitely many orbits. Moreover, we can also specify transition systems that do not have a finite support in mCRL2. Therefore, we conclude that such transition systems are not nominally executable. \conf{(We prove this result in Appendix~\ref{app:mcrl2}.)}

\begin{corollary}~\label{cor:mCRL2}
There exists an mCRL2 specification $P$, such that the transition system $\T(P)$ is not nominally executable.
\end{corollary}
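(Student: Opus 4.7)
The plan is to exhibit an explicit mCRL2 specification whose associated transition system fails one of the structural constraints required of nominally executable transition systems, namely orbit-finiteness of the label set, as already flagged in Example~\ref{example:orbit-finiteness}. Viewing the set of natural numbers used as parameter values in mCRL2 as (a subset of) the set of atoms~$\Atom$, I would take a one-line recursive specification of the form $P = \sum_{\ell : \mathrm{List}(\mathbb{N})} a(\ell) \cdot P$, which gives a transition $P \step{a(\ell)} P$ for every finite list $\ell$ of natural numbers.

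First I would make explicit the set of action labels that actually occur on outgoing transitions from the initial state of $\T(P)$: it is $\{a(\ell) \mid \ell \in \mathbb{N}^{*}\}$. Exactly as in the discussion of $\Atom^{*}$ in Section~\ref{sec:atoms}, this set decomposes into infinitely many orbits under atom automorphisms (one orbit per list length), so it is not orbit-finite, although it is legal with empty support.

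Second, I would argue by contradiction using Theorem~\ref{thm:RTMA-effective}: if $\T(P)$ were nominally executable, it would be branching bisimilar to a transition system $\T(\M)$ for some \RTMA{} $\M = (\Sta_\M, \step{}_\M, \uparrow_\M)$ whose set of action labels $\Atau$ is, by Definition~\ref{def:rtm-atoms}, legal and orbit-finite. From the initial state $\uparrow_P$ of $\T(P)$, infinitely many distinct non-$\tau$ labels $a(\ell)$, lying in pairwise distinct orbits, are enabled. The branching bisimulation clause forces each such $a(\ell)$ to occur as the label of some transition in $\T(\M)$ reachable from $(\uparrow_\M, \check{\Box})$ via a (possibly empty) sequence of $\tau$-steps; hence all these labels belong to $\Atau$. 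This contradicts orbit-finiteness of $\Atau$, because the union (over reachable states) of labels of outgoing transitions is a subset of $\Atau$ and would itself have infinitely many orbits.

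The main technical point that needs care is the step where branching bisimilarity is used to transport the orbit-infinite collection of labels from $\T(P)$ into the label set of $\T(\M)$; here one must verify that each $a(\ell)$ really does appear as a transition label in $\T(\M)$ (rather than being absorbed into $\tau$-closure), which follows directly from the standard branching bisimulation game since $a(\ell) \neq \tau$. Once this observation is secured, orbit-finiteness of $\Atau$ is violated, contradicting the assumption that $\T(P) \bbisim \T(\M)$ for some \RTMA{} $\M$, and the corollary follows.
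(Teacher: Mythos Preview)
Your argument is correct but follows a different route from the paper. The paper exhibits the mCRL2 specification $\mathit{act}\ \mathit{num}:\mathit{Nat};\ \mathit{init}\ \mathit{sum}\ v:\mathit{Nat}\,.\,\mathit{num}(2*v)$, whose initial state can perform exactly the actions labelled by the even natural numbers, and observes that this transition system is not \emph{legal}: the label set $\{2n\mid n\in\mathbb{N}\}$ has no finite support, so (via Lemma~\ref{lemma:RTMA-effective2}) no \RTMA{} can be branching bisimilar to it. You instead attack the other structural constraint, \emph{orbit-finiteness}, using labels that are lists of arbitrary length, in the spirit of Example~\ref{example:orbit-finiteness} rather than Example~\ref{example:legality}. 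Both approaches are sound; the paper's example is syntactically lighter and appeals to the perhaps more striking failure of legality, while your argument has the merit of spelling out explicitly how branching bisimilarity forces each non-$\tau$ label into the action alphabet $\Atau$ of the hypothetical \RTMA{} --- a step the paper's one-line proof leaves implicit. (Incidentally, you do not really need Theorem~\ref{thm:RTMA-effective} for your contradiction; it already follows directly from Definition~\ref{def:rtm-atoms} and the definition of nominal executability.)
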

\arx{\begin{proof}
Consider the following mCRL2 specification:
\begin{eqnarray*}
&&\mathit{act\, num:\, Nat};\\
&&\mathit{init\,sum\, v:\, Nat\, .\, num(2 * v)};
\end{eqnarray*}

It defines a transition system that includes a set of transitions from the initial state labelled by all even natural numbers as follows:
\begin{equation*}
\{(\uparrow,2n,\downarrow)\mid n\in\mathbb{N}\}
\enskip.
\end{equation*}

This transition system does not have a finite support, therefore, it is not nominally executable.
\end{proof}
}

\delete{
Moreover, the transition systems specified in mCRL2 are countable but not necessarily effective, therefore, we have the following corollary.
\begin{corollary}~\label{cor:rtmi-mCRL2}
\begin{enumerate}
\item For every mCRL2 specification $P$, the transition system $\T(P)$ is executable by an \RTMI{} modulo $\bbisimd$.
\item There exists an mCRL2 specification $P$, such that the transition system $\T(P)$ is not executable by any \RTMI{} with an effective/computable transition relation modulo $\bbisimd$.
\end{enumerate}
\end{corollary}
}

\section{Conclusions}~\label{sec:conclusion}

In this paper, we investigated the executable transition systems associated with \RTMI s and \RTMA s.
We summarize the executable transition systems associated with \RTMI s, \RTMA s and RTMs as a hierarchy in Figure~\ref{fig:hierarchy}. Note that all the inclusion relations over the above sets of transition systems are interpreted as the inclusion of the regions in Figure~\ref{fig:hierarchy}.

\begin{figure}
\centering
\begin{tikzpicture}
\pgftransformscale{0.9}
\draw[very thick] (6.6,0) -- (-5.6,0);

\draw (3.2,0) parabola bend (2,0.9) (0.8,0);
\node at (2,0.3) {RTM $\bbisimd$};

\draw (4,0) parabola bend (-0.5,1.5) (-0.5,0);
\node at (0.3,1.1) {RTM $\bbisim$};

\draw (-2,0) parabola bend (-2,2.1) (4.5,0);
\node at (-1,1.8) {\RTMA{} $\bbisim$};

\draw (0,0) parabola bend (6,2.8) (6,0);
\node at (4.1,1.5) {Computable \RTMI{} $\bbisimd$};

\draw (-4.7,0) parabola bend (6,5) (6,0);
\node[anchor=north] at (4.1,4) {Effective \RTMI{} $\bbisimd$};

\draw[very thick] (-5.2,0) parabola bend (6,6) (6,0);
\node at (4.7,5.3) {\RTMI{} $\bbisimd$};

\end{tikzpicture}
\caption{A Hierarchy of Executability}~\label{fig:hierarchy}
\end{figure}

We summarize the corresponding transition system of each notion of executability as follows:

\begin{enumerate}
\item by Theorem~\ref{thm-blt}, the class of executable transition systems by RTMs modulo $\bbisimd$ is the boundedly branching computable transition system with a finite set of labels;
\item the class of executable transition systems by RTMs modulo $\bbisim$ is the effective transition system with a finite set of labels;
\item by Theorem~\ref{thm:RTMA-effective}, the class of nominally executable transition systems is the effective transition system with atoms;
\item by Corollary~\ref{cor:RTM-infi-computable}, the class of executable transition systems by \RTMI s with a computable transition relation modulo $\bbisimd$ is the computable transition system;
\item by Corollary~\ref{cor:RTM-infi-effective},the class of executable transition systems by \RTMI s with an effective transition relation modulo $\bbisimd$ is the effective transition system; and
\item by Theorem~\ref{thm:RTM-infi}, the class of executable transition systems by \RTMI s modulo $\bbisimd$ is the countable transition system.
\end{enumerate}

Finally, we propose some future work on this issue.

\begin{enumerate}
\item The precise characterisation of the transition systems executable by \RTMA s modulo $\bbisimd$ is still open. Further restrictions should be imposed to make it possible to generate all possible transitions of an arbitrary state in the transition system from a single configuration of an \RTMA.

\item It would be interesting to show the existence of a universal \RTMA, such that it is able to simulate the behaviour of every \RTMA{} with its encoding.

\item Psi-calculi~\cite{bengtson2009psi} were introduced as to characterise transition systems with nominal data types for data structures and with logical assertions representing facts about data. The adoption of nominal data types provides a natural characterisation of the behaviour executed by \RTMA.\arx{ An encoding of the $\pi$-calculus was proposed in the psi-calculus~\cite{bengtson2011psi}, proving that the psi-calculus is at least as expressive as the $\pi$-calculus.} It would be interesting to figure out the relationship between the transition systems associated with the psi-calculus and the transition systems associated with \RTMA. We conjecture that as long as the logical assertions used in psi-calculus are semi-decidable, the transition systems of the psi-calculus processes are nominally executable.

\item A notion of nominal transition system was proposed by Parrow et al.~\cite{DBLP:conf/concur/ParrowBEGW15}. Nominal transition systems satisfy the requirements of transition systems with atoms naturally. We did not use the notion of nominal transition system since the predicates for states in Hennessy-Milner logic is ignored in proving the executability. By assuming a appropriate definition of effectiveness, we conjecture that the effective nominal transition systems are nominally executable.

\item
The value-passing calculus~\cite{fu2013value} is a process calculus in which the contents of communications are values chosen from natural numbers. It can be used to specify transition systems that are not nominally executable (such as the one we used in the proof of Corollary~\ref{cor:mCRL2}). It would be interesting to impose some structures on the sets of atoms, e.g., by considering the natural numbers. Moreover, we could investigate a notion of RTM with natural numbers, and make a comparison to the value-passing calculus.
\end{enumerate}

\delete{
\begin{figure}

\begin{tikzpicture}
\pgftransformscale{0.75}

\draw[very thick] (9.5,0) -- (-9.5,0);

\draw (2.8,0) parabola bend (0,2.8) (-2.8,0);
\node at (0,0.5) {
	\begin{tabular}{c}
	Boundedly Branching \\Computable LTS
	\end{tabular}
};

\draw (3.5,0) parabola bend (-3.5,7) (-5.5,0);
\node[rotate=45] at (-3.7,5) {Effective LTS};

\draw (-3.5,0) parabola bend (1.5,5.5) (4.8,0);
\node[rotate=-45] at (2.3,3.5) {\begin{tabular}{c}
	Boundedly Branching \\Computable LTSA
	\end{tabular}};
\draw (-4.5,0) parabola bend (2.5,6.5) (6.5,0);
\node[rotate=-45] at (3.5,5) {\begin{tabular}{c}
	Boundedly Orbit \\Computable LTSA
	\end{tabular}};

\draw (-7,0) parabola bend (-1.5,10) (7.5,0);
\node [rotate=45] at (-3.5,7.5) {Effective LTSA};

\draw (-5,0) parabola bend (1.5,10) (8,0);
\node [rotate=-45] at (3.5,7.5) {Computable LTSI};

\draw (-8.5,0) parabola bend (0,11.5) (8.5,0);
\node[anchor=north] at (0,11) {Effective LTSI};

\draw[very thick] (-9,0) parabola bend (0,12.5) (9,0);
\node at (0,12) {Countable LTSI};

\draw (2.8,0) parabola bend (0,-1.5) (-2.8,0);
\node at (0,-0.5) {RTM $\bbisimd$};

\draw (3.5,0) parabola bend (-3.2,-3.2) (-5.5,0);
\node[rotate=-45] at (-3.7,-2.2) {RTM $\bbisim$};

\draw (-4.5,0) parabola bend (2.5,-2.5) (6.5,0);
\node[rotate=45] at (3.5,-1.5) {RTMA $\bbisimd$};

\draw (-7,0) parabola bend (-2.5,-5) (7.5,0);
\node [rotate=-45] at (-3.7,-4) {RTMA $\bbisim$};

\draw (-5,0) parabola bend (2.5,-5) (8,0);
\node [rotate=45] at (4.2,-3.5) {Computable RTMI $\bbisimd$};

\draw (-8.5,0) parabola bend (0,-6.5) (8.5,0);
\node[anchor=north] at (0,-5.5) {Effective RTMI $\bbisimd$};

\draw[very thick] (-9,0) parabola bend (0,-7.5) (9,0);
\node at (0,-7) {RTMI $\bbisimd$};
\end{tikzpicture}
\caption{A Hierarchy of Executability}~\label{fig:hierarchy}
\end{figure}
} 

\bibliographystyle{splncs03}
\bibliography{RTMInfi}

\conf{\newpage
\appendix

\section{Branching bisimilarity}~\label{app:bbisim}
In the definition of (divergence-preserving) branching bisimilarity we need the following notation: let $\step{}$ be an $\Atau$-labelled transition relation on a set $\Sta$, and let $a\in\Atau$; we write $s\step{(a)}t$ for ``$s\step{a}t$ or $a=\tau$ and $s=t$''. Furthermore, we denote the transitive closure of $\step{\tau}$ by $\step{}^{+}$ and the reflexive-transitive closure of $\step{\tau}$ by $\step{}^{*}$.

\begin{definition}
[Branching Bisimilarity]\label{def:bbisim}
Let $T_1=(\Sta_1,\step{}_1,\uparrow_1)$ and $T_2=(\Sta_2,\step{}_2,\uparrow_2)$ be $\Atau$-labelled transition systems. A \emph{branching bisimulation} from $T_1$ to $T_2$ is a binary relation $\R\subseteq\Sta_1\times\Sta_2$ such that for all states $s_1$ and $s_2$, $s_1\R s_2$ implies
\begin{enumerate}
    \item if $s_1\step{a}_1s_1'$, then there exist $s_2',s_2''\in\Sta_2$, such that $s_2\step{}_2^{*}s_2''\step{(a)}s_2'$, $s_1\R s_2''$ and $s_1'\R s_2'$;
    \item if $s_2\step{a}_2s_2'$, then there exist $s_1',s_1''\in\Sta_1$, such that $s_1\step{}_1^{*}s_1''\step{(a)}s_1'$, $s_1''\R s_2$ and $s_1'\R s_2'$.
\end{enumerate}
The transition systems $T_1$ and $T_2$ are \emph{branching bisimilar} (notation: $T_1\bbisim T_2$) if there exists a branching bisimulation $\R$ from $T_1$ to $T_2$ s.t. $\uparrow_1\R\uparrow_2$.

A branching bisimulation $\R$ from $T_1$ to $T_2$ is \emph{divergence-preserving} if, for all states $s_1$ and $s_2$, $s_1\R s_2$ implies
\begin{enumerate}
\setcounter{enumi}{2}
    \item if there exists an infinite sequence $(s_{1,i})_{i\in\mathbb{N}}$ such that $s_1=s_{1,0},\,s_{1,i}\step{\tau}s_{1,i+1}$ and $s_{1,i}\R s_2$ for all $i\in\mathbb{N}$, then there exists a state $s_2'$ such that $s_2\step{}^{+}s_2'$ and $s_{1,i}\R s_2'$ for some $i\in\mathbb{N}$; and
    \item if there exists an infinite sequence $(s_{2,i})_{i\in\mathbb{N}}$ such that $s_2=s_{2,0},\,s_{2,i}\step{\tau}s_{2,i+1}$ and $s_1\R s_{2,i}$ for all $i\in\mathbb{N}$, then there exists a state $s_1'$ such that $s_1\step{}^{+}s_1'$ and $s_1'\R s_{2,i}$ for some $i\in\mathbb{N}$.
\end{enumerate}
The transition systems $T_1$ and $T_2$ are \emph{divergence-preserving branching bisimilar} (notation: $T_1\bbisimd T_2$) if there exists a divergence-preserving branching bisimulation $\R$ from $T_1$ to $T_2$ such that $\uparrow_1\R\uparrow_2$.
\end{definition}

\section{Supplementary material for \RTMI}~\label{app:rtmi}
\paragraph*{Proof of Theorem~\ref{thm:RTM-infi}}
\begin{proof}
 Let $T=(\Sta_T,\step{}_T,\uparrow_T)$ be an $\Atau$-labelled countable transition system, and let $\encode{\_}: \Sta_T\rightarrow \mathbb{N}$ be an injective function encoding its states as natural numbers. Then, an RTM with infinite sets of action symbols and data symbols $\M(T)=(\Sta,\step{},\uparrow)$ is defined as follows.
\begin{enumerate}
    \item $\Sta=\{s,t,\uparrow\}$ is the set of control states.
    \item $\step{}$ is a finite $(\Dbox\times\A\times\Dbox\times\{L,R\})$-labelled \emph{transition relation}, and it consists of the following transitions:
        \begin{enumerate}
            \item $(\uparrow,\tau,\Box,\encode{\uparrow_T},R,s)$,
            \item $(s,\tau,\Box,\Box,L,t)$, and
            \item $(t,a,\encode{s_1},\encode{s_2},R,s)$ for every transition $s_1\step{a}_T s_2$.
        \end{enumerate}
    \item ${\uparrow}\in\Sta$ is the \emph{initial state}.
\end{enumerate}

Note that a transition step $s_1\step{a}s_2$ is simulated by a sequence
\begin{equation*}
(t,\tphd{\encode{s_1}}\Box)\step{a}(s,\encode{s_2}\tphd{\Box})\step{\tau}(t,\tphd{\encode{s_2}}\Box)
\enskip.
\end{equation*}
Then one can verify that $\T(\M(T))\bbisimd T$.
\end{proof}
\delete{
\paragraph*{Proof of Lemma~\ref{lemma:k-supported}}
\begin{proof}
\begin{enumerate}
\item $\Rightarrow$: We suppose that $K$ is a support of $\step{}_T$, then by the definition of support, for every $K$-automorphism $\pi_K$, we have $\pi_K(\step{}_T)=\step{}_T$. Therefore, for every $(s,a,t)\in{\step{}_T}$, we have $\pi_K(s,a,t)\in{\step{}_T}$.
\item $\Leftarrow$: We suppose that for every $(s,a,t)\in{\step{}_T}$ and for every $K$-automorphism $\pi_K$, we have $\pi_K(s,a,t)\in{\step{}_T}$ then we have $\pi_K(\step{}_T)=\step{}_T$, Therefore $K$ is a support of $\step{}_T$.
\end{enumerate}
\end{proof}

\paragraph*{Proof of Lemma~\ref{lemma:RTMA-transition-relation}}
\begin{proof}
As $\step{}_{\M}$ is a legal set. We take $K$ to be its the minimal support. Then we conclude the lemma from the definition of the support.
\end{proof}

\paragraph*{Proof of Lemma~\ref{lemma:illegal}}

\begin{proof}
Suppose that there exists such an \RTMA. If $x$ is in the support of the transitions of \RTMA, then the function can be trivially realized. However, introducing infinitely many atoms to the support will lead to an illegal transition relation. Therefore, $x$ is not in the support of the transitions of \RTMA.

As $x$ is not in the support, then for any transition that creates $x$, every $\pi(x)$ is created by another branch of execution, where $\pi$ is an arbitrary atom automorphism that preserves the support of the transitions of \RTMA. Hence, whenever an $x$-labelled transition is produced, some $\pi(x)$-labelled transition can be produced in some other branch of execution. Hence, this is not the required \RTMA.
\end{proof}
}
\section{Supplementary material for \RTMA}~\label{app:rtma}

\paragraph*{Details of Example~\ref{example:operation}}

We denote the current tape instance by $\bar{a}\bar{b}$, and we show the two ways to add a new atom to $\bar{b}$.

For the first case, we suppose that $x$ is the atom in $\bar{a}$ to be duplicated, and the first empty cell after $\bar{b}$ is the destination of the duplication. The machine could accomplish the task by the transitions $\cop\step{\tau[x/x]R}\cop_x\step{\tau[y/y]R}{}^{*}\step{\tau[\Box/x]}\finish$, which is realized by the following set of transitions.
\begin{eqnarray*}
&\{(\cop,\tau,x,x,R,\cop_x)\mid x\in\Atom\}\\
&\cup\{(\cop_x,\tau,y,y,R,\cop_x)\mid x,y\in\Atom\}\\
&\cup\{(\cop_x,\tau,\Box,x,R,\finish)\mid x\in\Atom\}
\enskip.
\end{eqnarray*}

This is a legal and orbit-finite set of transitions.

For the second case, the machine creates a fresh atom, by the following set of transitions,
\begin{eqnarray*}
&\{(\fresh,\tau,\Box,x,L,\chec_x)\mid x\in\Atom\}\\
&\cup\{(\chec_x,\tau,y,y,L,\chec_x)\mid x\neq y\wedge y\neq \Box\wedge x,y\in\Atom\}\\
&\cup\{(\chec_x,\tau,x,x,L,\refresh_x)\mid x\in\Atom\}\\
&\cup\{(\refresh_x,\tau,y,y,R,\refresh_x)\mid x\neq y\wedge x,y\in\Atom\}\\
&\cup\{(\refresh_x,\tau,x,y,L,\chec_y)\mid x\neq y\wedge x,y\in\Atom\}\\
&\cup\{(\chec_x,\tau,\Box,\Box,R,\finish)\mid x\in\Atom\}
\enskip.
\end{eqnarray*}

 The machine first creates an arbitrary atom $x$, and then it checks every atom on the tape whether it is identical with $x$. We suppose that $\Box$ indicates the end of the sequence of atoms on tape. If the check procedure succeeds, the creation is finished, otherwise, the machine creates another atom and checks again.
 We also verify that the above transitions form a legal and orbit-finite set.

\paragraph*{Details of Example~\ref{example:orb}}
We define an \RTMA{} $M=(\Sta_M,\step{}_M,\uparrow_M)$, and we show that $M$ suffices the requirement.

According to the assumption, we suppose that in the state $\start$, the tape instance is $\encode{\sete{x}}\tup{x}$. It suffices to show that within finitely many steps, the machine is able to write $x$ as one symbol on the tape.

Note that $X$ is an orbit-finite set, which means that there are finitely many distinct orbits that construct the set $X$. Therefore, there are finitely many distinct values of $\encode{\sete{x}}$ for all the elements in $X$. The machine associates with each value a program that calculates $\setb{\sete{x}}(\tup{x})$, which produces the elements from that specific orbit according to the valuation of $\bar{x}$. The machine enters the programme by entering the state $\encode{\sete{x}}$.

\begin{equation*}
\{(\start,\tau,\encode{\sete{x}},\encode{\sete{x}},R,\encode{\sete{x}})\mid x\in X\}
\enskip.
\end{equation*}

As $X$ is orbit-finite, there is an upper-bound for the length of the tuple $\bar{x}$ for every $x\in X$. The machine can represent a tuple as a data symbol.
We suppose that every element in $x$-orbit uses $n$ free variables in its structure, then $\bar{x}$ is a tuple of $n$ atoms. Now we consider the following set of transitions:
\begin{equation*}
\{(\encode{\sete{x}},\setb{\sete{x}}(\tup{x}),\tup{x},\setb{\sete{x}}(\tup{x}),R,\finish\mid a_1\ldots a_n\in \Atom\}
\enskip.
\end{equation*}
These transitions will create $x$ by $\setb{\sete{x}}(\tup{x})$. We suppose that $\tup{x}$ is of the form $(a_1,\ldots,a_n)$. The above set of transitions is orbit-finite, since there is an upper bound of $n$,

Moreover, we show that the machine is able to create a tuple $\tup{x}$ as one symbol, given that each atom in $\tup{x}$ is written on one tape cell, and ordered from left to right as the order of the atoms in the tuple.

The machine constructs $\bar{x}$ by duplicating the elements from each tape cell to the tuple one by one, using the transitions as follows:
\begin{eqnarray*}
&\{(\encode{\sete{x}},\tau,a,a,R,(\encode{\sete{x}},a))\mid a\in\Atom\}\\
&\cup\{((\encode{\sete{x}},a),\tau,b,b,R,(\encode{\sete{x}},a))\mid a,b\in\Atom\}\\
&\cup\{((\encode{\sete{x}},a),\tau,\Box,(a),L,(\encode{\sete{x}},a)')\mid a\in\Atom\}\\
&\cup\{((\encode{\sete{x}},a_i),\tau,(a_1,\ldots,a_{i-1}),(a_1,\ldots,a_{i-1},a_i),L,(\encode{\sete{x}},a_i)')\mid a_1,\ldots, a_i\in \Atom, 2 \leq i\leq n\}\\
&\cup\{((\encode{\sete{x}},a)',\tau,b,b,L,(\encode{\sete{x}},a)')\mid a,b\in \Atom,\, a\neq b\}\\
&\cup\{((\encode{\sete{x}},a)',\tau,a,a,R,\encode{\sete{x}})\mid a\in\Atom\}\\
\enskip.
\end{eqnarray*}
The machine first finds the atom to duplicate, and uses a state $(\encode{\sete{x}},a)$ to register the atom $a$. Then it moves the tape head to the tuple and adds the atom to that tuple. If the tuple is empty, then the machine produces $(a)$, otherwise, the machine adds $a$ to an existed tuple $(a_1,\ldots,a_{i-1})$, and enters the state $(\encode{\sete{x}},a)'$. Finally, the tape goes back to the atom it duplicated and enters state $\encode{\sete{x}}$ again to start the duplication of the next atom. This procedure ends by finishing the duplication of all the atoms in $\bar{x}$, and entering the state $\encode{\sete{x}}$ with $\tup{x}$ written under the tape head. Hence, the machine is ready to produce $\setb{\sete{x}}(\tup{x})$.

Since there is an upper bound of the length of the atom, this set of transitions is legal and orbit-finite. Moreover, there are finitely many orbits for the set $X$, which means that the machine needs finitely many such programs.
Hence, we have obtained an \RTMA{} $M$ that meets the requirement.

\paragraph*{Proof of Lemma~\ref{lemma:RTMA-effective}}
\begin{proof}

We let $T=(\Sta_T,\step{}_T,\uparrow_T)$ be an effective $\Atau$-labelled transition system with atoms, and let $K\subset\Atom$ be the minimal support of $T$. We show that there exists an \RTMA{} $\M=(\Sta_{\M},\step{}_{\M},\uparrow_{\M})$ such that $\T(\M)\bbisim T$.

As $T$ is effective, for every state $s\in\Sta_T$, the set $\mathalpha{out}(\encode{\sete{s}})=\{\encode{\sete{(s,a,t)}}\mid s\step{a}_T t\}$ is recursively enumerable. We use this fact to simulate the transition system. We describe the simulation in $3$ stages.

\begin{enumerate}
\item Initially, the tape is empty. Hence the initial configuration is $(\uparrow_{\M},\tphd{\Box})$. For simplicity, we do not denote the position of the tape head in the tape instances. The machine first writes the representation of the initial state $\uparrow_T$, i.e., $\encode{\sete{\uparrow_T}}\tup{\uparrow_T}$ on the tape, satisfying $\setb{\sete{\uparrow_T}}(\tup{\uparrow_T})=\uparrow_T$. As $T$ is legal, $\tup{\uparrow_T}$ consists of finitely many atoms. The initialization procedure is represented as follows,

\begin{equation*}
(\uparrow_{\M},\Box)\step{}^{*}(\enu,\encode{\sete{\uparrow_T}}\tup{\uparrow_T})
\enskip.
\end{equation*}

In the control state $\enu$, we assume that the tape instance is $\encode{\sete{s}}\tup{s}$, satisfying $\setb{\sete{s}}(\tup{s})=s$. As the transition system is effective, the machine is able to enumerate the structure of the outgoing transitions of $s$ as follows,

\begin{equation*}
(\enu,\encode{\sete{s}}\tup{s})\step{}^{*}(\gen,\encode{\sete{s}}\tup{s}\encode{\sete{(s,a,t)}})
\enskip.
\end{equation*}

\delete{
For state $s$, its subsequent transitions are $\{(s,a,t)\in \step{}_T\}$ and by the proposition of transition system with atoms, the set of transitions from $\pi_K(s)$ is $\pi_K{\{(s,a,t)\in \step{}_T\}}$. We use this fact to enumerate the transitions from $s$ and take a $\pi_K$ automorphism by the the atoms $\atom(\pi_K(s))$ which is already on the tape, and hence we obtained the transitions from $\pi_K(s)$.

Then the machine enumerates $\mathalpha{out}(s)$,  and writes $(\encode{a},\encode{t})$ (which are represented by $\encode{\orb_{[a]}},\encode{\atom(a)},\encode{\orb_{[t]}},\encode{\atom(t)}$ respectively) on the tape whenever a transition $(s,a,t)$ is enumerated.

The step of enumeration is represented as follows,

\begin{eqnarray*}
&(\enu,\encode{\orb_{[s]}}\encode{\atom(s)}\atom(\pi_K(s)))\step{}^{*}\\
&(\gen,\encode{\orb_{[s]}}\encode{\atom(s)}\encode{\orb_{[a]}}\encode{\atom(a)}\encode{\orb_{[t]}}\encode{\atom(t)}\atom(\pi_K(s)))
\enskip.
\end{eqnarray*}
}
\item In the second stage, the \RTMA{} produces the tuples of atoms $\tup{a}$ and $\tup{t}$ that valuates the free variables of $a$ and $t$. The valuation creates fresh atoms when necessary and preserves atoms from $K$ and $\tup{s}$.

     We denote the tuple of free variables of $\sete{(s,a,t)}$ by $\bar{x}$, and tuples of free variables of $\sete{s}$, $\sete{a}$ and $\sete{t}$ by $\bar{x_s}$, $\bar{x_a}$ and $\bar{x_t}$ respectively. Note that all the variables in $\bar{x_s}$, $\bar{x_a}$ and $\bar{x_t}$ are also in $\bar{x}$. Since $\encode{\sete{(s,a,t)}}$ is already on the tape, the following terms are computable:
    \begin{enumerate}
        \item the set builder expressions of $a$ and $t$: $\encode{\sete{a}}$ and $\encode{\sete{t}}$;
        \item the tuples of free variables: $\bar{x}$, $\bar{x_s}$, $\bar{x_a}$ and $\bar{x_t}$;
    \end{enumerate}
    We show the above statements as follows. We define a triple $(s,a,t)$ by $\{\{s\},\{s,a\},\{s,a,t\}\}$. We use the standard G\"{o}del numbering on sets and variables. The encoding $\encode{\sete{(s,a,t)}}$ is equal to $\encode{\{\{\sete{s}\},\{\sete{s},\sete{a}\},\{\sete{s},\sete{a},\sete{t}\}\}}$. $\encode{\sete{a}}$ and $\encode{\sete{t}}$ are all computable because the projection operation is computable in G\"{o}del encoding of ordered triples. Moreover, the projection from $\encode{\sete{x}}$ to the free variables used in each elements are computable.

   Then we evaluate $\bar{x}$ to tuples of atoms. $\tup{s}$ is a valuation of $\bar{x_s}$. We evaluate $\bar{x}$ by distinguishing two cases:
  \begin{enumerate}
        \item if a variable $y$ in $\bar{x}$ also appears in $\bar{x_s}$, then we duplicate the valuation of that variable from $\tup{s}$ to valuate $y$;
        \item otherwise, we create a fresh atom to valuate $y$.
    \end{enumerate}
    By Example~\ref{example:operation}, the above two operations are valid by \RTMA s.
    Since $\bar{x_a}$ and $\bar{x_t}$ are both sub-tuples of $\bar{x}$, the machine duplicates the valuation from $\bar{x}$ to create $\tup{a}$ and $\tup{t}$.

    Hence, we get $\encode{\sete{a}}$, $\encode{\sete{t}}$, $\tup{a}$ and $\tup{t}$ satisfying that $\setb{\sete{a}}(\tup{a})=a'$, $\setb{\sete{t}}(\tup{t})=t'$, and there exists an $K\cup\tup{s}$-automorphism $\pi$ which preserves all the atoms in $K\cup\tup{s}$, such that $\pi(s,a',t')=(s,a,t)$. By the property of transition system with atoms, $(s,a',t')\in\step{}_T$. Moreover, by Example~\ref{example:operation}, during the generation of fresh variables, every fresh variable from the universe of $\Atom$ can be generated, the \RTMA{} is able to create all the transitions which are equivalent to $(s,a,t)$ up to $K\cup\tup{s}$-automorphism. We denote this stage as follows:

\begin{equation*}
(\gen,\encode{\sete{s}}\tup{s}\encode{\sete{(s,a,t)}})\step{}^{*}(\act,\encode{\sete{s}}\encode{\sete{a}}\encode{\sete{t}}\tup{s}\tup{a}\tup{t})
\enskip.
\end{equation*}
\delete{
 We use $\pi_{K\cup s}$ to denote some arbitrary atom automorphism that preserves the support $K$ and the atoms from the state $\pi_K(s)$, and $\pi_{K\cup s\cup a}$ to denote some arbitrary atom automorphism that preserves $K$, $\pi_K(s)$ and $\atom(\pi_{K\cup s}(a))$. Then the machine nondeterministically produces $\atom(\pi_{K\cup s}(a))$ and $\atom(\pi_{K\cup s\cup a}(t))$ according to $\atom(s)$, $\encode{\atom(s)}$, $\encode{\atom(a)}$ and $\encode{\atom(t)}$, for some arbitrary $\pi_{K\cup s}$ and $\pi_{K\cup s\cup a}$.

The machine first generates $\atom(\pi_{K\cup s}(a))$. For every element $\encode{x}\in\encode{\atom(a)}$, we distinguish with two cases:

\begin{enumerate}
\item if $\encode{x}$  also in $\encode{\atom(s)}$, then the machine find the the corresponding element from $\atom(\pi_K(s))$ and duplicate that one to fills the position of $\encode{x}$ in $\atom(\pi_{K\cup s}(a))$;
\item if $\encode{x}$ not in $\encode{\atom(s)}$, then the machine create a fresh atom and fills the position of $\encode{x}$ in $\atom(\pi_{K\cup s}(a))$.
\end{enumerate}

Then the machine then generates $\atom(\pi_{K\cup s\cup a}(t))$. For every element $\encode{x}\in\encode{\atom(t)}$, we also distinguish with two cases:
\begin{enumerate}
\item if $\encode{x}$ is also in $\encode{\atom(s)}$ or $\encode{\atom(a)}$ , then the machine find the the corresponding element from $\atom(\pi_K(s))$ or $\atom(\pi_{K\cup s}(a))$ and duplicate that one to fills the position of $\encode{x}$ in $\atom(\pi_{K\cup s\cup a }(t))$;
\item if $\encode{x}$ is not in $\encode{\atom(s)}$ nor in $\encode{\atom(a)}$ , then the machine create a fresh atom and fills the position of $\encode{x}$ in $\atom(\pi_{K\cup s\cup a}(t))$.
\end{enumerate}

By Example~\ref{example:operation}, duplication and creating of atoms are valid operations for \RTMA.

We denote this step as follows,
\begin{eqnarray*}
&(\gen,\encode{\orb_{[s]}}\encode{\orb_{[a]}}\encode{\orb_{[t]}}\encode{\atom(s)}\encode{\atom(a)}\encode{\atom(t)}\atom(\pi_K(s)))\step{}^{*}\\
&(\act,\encode{\orb_{[s]}}\encode{\orb_{[a]}}\encode{\orb_{[t]}}\encode{\atom(s)}\encode{\atom(a)}\encode{\atom(t)}\atom(\pi_K(s))\atom(\pi_{K\cup s}(a))\atom(\pi_{K\cup s\cup a}(t)))
\enskip.
\end{eqnarray*}
}

\item In the third stage, the \RTMA{} generates the action label $a'$ and chooses to execute the transition or to continue the enumeration.

As $\Atau$ is a legal and orbit-finite set, by Example~\ref{example:orb}, the \RTMA{} is able to create a symbol $a'$ such that $\setb{\sete{a}}\tup{a}=a'$.

\begin{equation*}
(\act,\encode{\sete{s}}\encode{\sete{a}}\encode{\sete{t}}\tup{s}\tup{a}\tup{t})\step{}^{*}(\tran,\encode{\sete{s}}\encode{\sete{a}}\encode{\sete{t}}\tup{s}\tup{a}\tup{t}a')
\enskip.
\end{equation*}

Then the \RTMA{} has two choices: executing the $a'$-labelled transition and starting the next round of simulation or returning to the first stage.
\begin{eqnarray*}
&(\tran,\encode{\sete{s}}\encode{\sete{a}}\encode{\sete{t}}\tup{s}\tup{a}\tup{t}a')\step{a'}\step{}^{*}(\enu,\encode{\sete{t}}\tup{t})\\
&(\tran,\encode{\sete{s}}\encode{\sete{a}}\encode{\sete{t}}\tup{s}\tup{a}\tup{t}a')\step{}^{*}(\enu,\encode{\sete{s}}\tup{s})\\
\end{eqnarray*}
\delete{
Hence, the machine will produce a label $\pi_{K\cup s}(a)$ from $\atom(\pi_{K\cup s}(a))$ and $\encode{\orb_{[a]}}$. By Corollary~\ref{cor:orb}, this step is computable by an \RTMA. According to the property of transition systems with atoms, the transition to be simulated $(\pi_K(s),\pi_{K\cup s}(a),\pi_{K\cup s\cup a}(t))\in\step{}_T$. Moreover, the procedure of generating fresh atoms, it is guaranteed that every transition $(\pi_K(s),a',t')\in\step{}_T$ satisfying that there exists some $\pi_{K\cup s}$ such that $\pi_{K\cup s}(\pi_K(s),a,t)=(\pi_K(s),a',t')$ are produced in the previous step.

This procedure of producing the action label is represented as follows,

\begin{eqnarray*}
&(\act,\encode{\orb_{[s]}}\encode{\orb_{[a]}}\encode{\orb_{[t]}}\encode{\atom(s)}\encode{\atom(a)}\encode{\atom(t)}\\
&\atom(\pi_K(s))\atom(\pi_{K\cup s}(a))\atom(\pi_{K\cup s\cup a}(t)))\step{}^{*}\\
&(\tran,\encode{\orb_{[s]}}\encode{\orb_{[a]}}\encode{\orb_{[t]}}\encode{\atom(s)}\encode{\atom(a)}\encode{\atom(t)}\\
&\atom(\pi_K(s))\atom(\pi_{K\cup s}(a))\atom(\pi_{K\cup s\cup a}(t))\pi_{K\cup s}(a))
\enskip.
\end{eqnarray*}

\item $\tran\Rightarrow\enu$:

Finally the machine simulates the transition or continues the enumeration.

\begin{eqnarray*}
&(\tran,\encode{\orb_{[s]}}\encode{\orb_{[a]}}\encode{\orb_{[t]}}\encode{\atom(s)}\encode{\atom(a)}\encode{\atom(t)}\\
&\atom(\pi_K(s))\atom(\pi_{K\cup s}(a))\atom(\pi_{K\cup s\cup a}(t))\pi_{K\cup s}(a))\step{\pi_{K\cup s}(a)}\step{}^{*}\\
&(\enu,\encode{\orb_{[t]}}\encode{\atom(t)}\atom(\pi_{K\cup s\cup a}(t)))\\
&(\tran,\encode{\orb_{[s]}}\encode{\orb_{[a]}}\encode{\orb_{[t]}}\encode{\atom(s)}\encode{\atom(a)}\encode{\atom(t)}\\
&\atom(\pi_K(s))\atom(\pi_{K\cup s}(a))\atom(\pi_{K\cup s\cup a}(t))\pi_{K\cup s}(a))
\step{}^{*}\\
&(\enu,\encode{\orb_{[s]}}\encode{\atom(s)}\atom(\pi_K(s)))
\enskip.
\end{eqnarray*}}
\end{enumerate}
We can verify that before the $a'$-labelled transition, the transition system of the machine preserves its states modulo $\bbisim$ by a sequence of $\tau$-transitions which leads back to the configuration $(\enu,\encode{\sete{s}}\tup{s})$. Moreover, from the above analysis, we have $(s,a',t')\in\step{}_T$; and every transition obtained by an $K\cup\tup{s}$-automorphism from $(s,a,t)$ can be simulated by the \RTMA{} $M$. Therefore, we conclude that $T\bbisim \T(\M)$.
\end{proof}

\paragraph*{Proof of Lemma~\ref{lemma:RTMA-effective2}}
\begin{proof}
It is obvious that $\T(\M)$ is effective.

Let $\M=(\Sta_{\M},\step{}_{\M},\uparrow_{\M})$, then there exists a finite set of atoms $K\subset\Atom$ such that, for every $(s,a,d,e,M,t)\in{\step{}_{\M}}$, and for every $K\mbox{-automorphism}\,\pi_K$, we have $\pi_K(s,a,d,e,M,t)\in{\step{}_{\M}}$. It follows that the transition system $\T(\M)$ is legal.
\end{proof}

\section{The $\pi$-calculus}~\label{app:pi}

\paragraph*{Transition systems of the $\pi$-calculus}
The $\pi$-calculus was proposed by Milner, Parrow and Walker~\cite{Milner1992} as a language to specify processes with link mobility. In this paper, we shall consider the version presented in the textbook by Sangiorgi and Walker~\cite{SW01}, excluding the match prefix.

We presuppose a countably infinite set $\N$ of names; we use strings of lower case letters for elements of $\N$.
The \emph{prefixes}, \emph{processes} and \emph{summations} of the $\pi$-calculus are, respectively, defined by the following grammar:
\begin{align*}
\pi\      & \coloneqq\ \outcap{x}{y}\ \mid\ \incap{x}{z}\ \mid\ \taucap \qquad (x,y,z\in \N)\\
P\    & \coloneqq\ M\ \mid\  P\parc P\ \mid\ \restr{z}{P}\ \mid\ \repl{P}\\
M\   & \coloneqq\ \nil\ \mid\ \pref{\pi}P \mid\ M \altc M\enskip.
\end{align*}

We use $P\{z/y\}$ to denote a $\pi$-term obtained by substituting every occurrence of $y$ to $z$ in $P$.

An $\alpha$-conversion between $\pi$-terms is defined in~\cite{SW01} as a finite number of renaming of bound names. We write $P\aeq Q$ if $P$ and $Q$ are two $\pi$-terms that are $\alpha$-convertible.

We define the operational behaviour of $\pi$-terms by means of the structural operational semantics in Fig.~\ref{fig:pi-semantics}, in which $\piact{}$ ranges over the set of actions of the $\pi$-calculus.

The transition system associated with a $\pi$-term is defined as follows:

\begin{definition}~\label{lts-piterm}
Let $P$ be a $\pi$-term. $\T(P)=(\Sta_{P},\step{}_{P},\uparrow_{P})$ is the transition system associated with $P$, where
\begin{enumerate}
    \item $\Sta_{P}$ is the set of $\alpha$-equivalence classes of all reachable $\pi$-terms from $P$ by the operational semantics;
    \item $\step{}_{P}$ is the set of transitions between $\alpha$-equivalence classes of all reachable $\pi$-terms; and
    \item $\uparrow_{P}$ is the $\alpha$-equivalence class of $P$.
\end{enumerate}
\end{definition}

\begin{figure}
\begin{center}
\fbox{
\begin{minipage}[t]{0.9\textwidth}
$\mathrm{PREFIX}\quad\inference{\,}{\tau.P\step{\tau}P}\quad \inference{}{\overline{x}y.P\step{\overline{x}y}P}\quad\inference{}{x(y).P\step{xz}P\{z/y\}}$\\
$\mathrm{SUM_L}\quad\inference{P\step{\piact{}}P'}{P+Q\step{\piact{}}P'}
\quad\mathrm{PAR_L}\quad\inference{P\step{\piact{}}P'}{P\parc{Q}\step{\piact{}}P'\parc{Q}}\,\bn{\piact{}}\cap \fn{Q}=\emptyset$\\
$\mathrm{COM_L}\quad\inference{P\step{\overline{x}y}P',\,Q\step{xy}Q'}{P\parc{Q}\step{\tau}P'\parc{Q'}}\quad
\mathrm{CLOSE_L}\quad\inference{P\step{\overline{x}(z)}P',\,Q\step{xz}Q'}{P\parc{Q}\step{\tau}\restr{z}{(P'\parc{Q'})}}\,z\notin \fn{Q}$\\
$\mathrm{RES}\quad\inference{P\step{\piact{}}P'}{\restr{z}{P}\step{\piact{}}\restr{z}{P'}}\,z\notin\piact{}\quad
\mathrm{OPEN}\quad\inference{P\step{\overline{x}z}P'}{(z)P\step{\overline{x}(z)}P'}\,z\neq x$\\
$\mathrm{REP}
  \quad\inference{P\step{\piact{}}P'}{\repl{P}\step{\piact{}}P'\parc\repl{P}}
  \quad\inference{P\step{\outact{x}{y}}P',\,P\step{\inact{x}{y}}P''}{\repl{P}\step{\tauact{}}(P'\parc P'')\parc\repl{P}}
  \quad\inference{P\step{\boutact{x}{z}}P',\,P\step{\inact{x}{z}}P''}{\repl{P}\step{\tauact}\restr{z}{(P'\parc P'')}\parc\repl{P}}$\\
$\mathrm{ALPHA}\quad\inference{P\step{\piact{}}P'}{Q\step{\piact{}}P'}\,Q\aeq P$.
\end{minipage}
}\end{center}
\caption{Operational rules for the $\pi$-calculus}\label{fig:pi-semantics}
\end{figure}

\paragraph*{Proof of Corollary~\ref{cor:rtma-pi}}

\begin{proof}
We use $\N$ to denote the countable set of names used in the $\pi$-calculus, and we suppose that $\N=\Atom$. We suppose that $P$ is an arbitrary $\pi$-calculus process. We use $\fn{P}$ to denote the set of free names involved in $P$ and $\bn{P}$ to denote the set of bound names involved in $P$.

By Theorem~\ref{thm:RTMA-effective}, it is sufficient to show that for every $\pi$-calculus process $P$, the transition system $\T(P)$ is an effective legal transition system with atoms. The transition system is effective by the effectiveness of structural operational semantics of the $\pi$-calculus. Therefore, by Definition~\ref{def:ltsa}, it is sufficient to show that there is a finite set $K\subset\N$ such that $K$ is a support of $\T(P)$. We let $\T(P)=(\Sta_{\pi},\step{}_{\pi},P)$ be an $\Atau$ labelled transition system, and we take $K=\fn{P}\cup\{\tau\}$. Note that $\Sta_P$ is the set of $\pi$-terms and $\Atau$ is the set of labels, and hence $\Sta_{\pi}$ is a set with atoms and $\Atau$ is an orbit-finite set with atoms. To show that $K$ is a support of $\T(P)$, we only need to show that for every $(s,a,t)\in\step{}_{\pi}$, and for every $K$-automorphism $\pi_K$, $\pi_K(s,a,t)\in\step{}_{\pi}$.

We let $\pi_K$ be an arbitrary $K$-automorphism, and we show that $\pi_K(P)$ is an $\alpha$-conversion of $P$, i.e., $\pi_K(P)\aeq P$. We show it by a structural induction on $P$.
In the base case, $P=\nil$, then it is trivial that $\pi_K(P)=\nil\aeq P$.

For the step case, we distinguish with 7 cases.
\begin{enumerate}
\item If $P=\pref{\tau}P'$, by induction hypothesis, we have $\pi_K(P')\aeq P'$. Hence, we have $\pi_K(\pref{\tau}P')=\pref{\tau}\pi_K(P')\aeq \pref{\tau}P'=P$.
\item If $P=\pref{\outcap{x}{y}}P'$, by induction hypothesis, we have $\pi_K(P')\aeq P'$. Moreover, $x,y\in\fn{P}\in K$. Hence, we have $\pi_K(\pref{\outcap{x}{y}}P')=\pref{\outcap{x}{y}}\pi_K(P')\aeq \pref{\outcap{x}{y}}P'=P$.
\item If $P=\pref{\incap{x}{y}}P'$, by induction hypothesis, we have $\pi_K(P')\aeq P'$. Moreover, $x\in\fn{P}\in K$. Hence, we have $\pi_K(\pref{\incap{x}{y}}P')=\pref{\incap{x}{\pi_K(y)}}\pi_K(P')\aeq\pref{\incap{x}{y}}P'=P$.
\item If $P=P_1+P_2$, by induction hypothesis, we have $\pi_K(P_1)\aeq P_1,\,\pi_K(P_2)\aeq P_2$. Hence, we have $\pi_K(P_1+P_2)=\pi_K(P_1)+\pi_K(P_2)\aeq P_1+P_2=P$.
\item If $P=P_1\parc P_2$, by induction hypothesis, we have $\pi_K(P_1)\aeq P_1,\,\pi_K(P_2)\aeq P_2$. Hence, we have $\pi_K(P_1\parc P_2)=\pi_K(P_1)\parc \pi_K(P_2)\aeq P_1\parc P_2=P$.
\item If $P=\restr{z}{P'}$, by induction hypothesis, we have $\pi_K(P')\aeq P'$. Moreover, $\pi_K(z)=z'\notin K$. Hence, we have $\pi_K(\restr{z}{P'})=\restr{z'}{\pi_K(P')}\aeq \restr{z}{P'}=P$.
\item If $P=\repl{P'}$, by induction hypothesis, we have $\pi_K(P')\aeq P$. Hence, we have $\pi_K(\repl{P'})=\repl{\pi_K(P')}\aeq \repl{P'}=P$.
\end{enumerate}

By structural induction, we have shown that for every $K$-automorphism $\pi_K$, $\pi_K(P)\aeq P$. Therefore, we have $\pi_K(P)\in\Sta_{\pi}$. Hence, $K$ is a support of $\Sta_{\pi}$.

Next we still let $\pi_K$ be an arbitrary $K$-automorphism, and we show that for every transition $P\step{\piact}_{\pi}Q\in\step{}_{\pi}$, it satisfies that $\pi_K(P\step{\piact}_{\pi}Q)\in\step{}_{\pi}$ by an induction on the structural operational semantics of the $\pi$ calculus.

We construct a proof tree according to the structural operational semantics in Figure~\ref{fig:pi-semantics} for every transition $(P,\piact,Q)\in\step{}_{\pi}$. The induction hypothesis is that, if $(P,\piact,Q)$ is induced from a set of transitions $\mathit{Pre}(P,\piact,Q)\subset\step{}_{\pi}$ , then for every transition $(P_i,\piact_i,Q_i)\in\mathit{Pre}(P,\piact,Q)$, there is $\pi_K(P_i,\piact_i,Q_i)\in\step{}_{\pi}$.
For the base case, the $\nil$ process cannot do any transition, then the property trivially holds. For the step case, we distinguish with several cases as follows.

\begin{enumerate}
\item If the transition is $\pref{\tauact}P\step{\tauact}_{\pi}P$, then we have $\pi_K(\pref{\tauact}P)\step{\pi_K(\tauact)}_{\pi}\pi_K(P)$.
\item If the transition is $\pref{\outcap{x}{y}}P\step{\outact{x}{y}}_{\pi}P$, then we have $\pi_K(\pref{\outcap{x}{y}}P)\step{\pi_K(\outact{x}{y})}_{\pi}\pi_K(P)$.
\item If the transition is $\pref{\incap{x}{y}}P\step{\outact{x}{z}}_{\pi}P\{z/y\}$, then we have $\pi_K(\pref{\incap{x}{y}}P)\step{\pi_K(\inact{x}{z})}_{\pi}\pi_K(P\{z/y\})$.
\item If the transition is $P_1+P_2\step{\piact}_{\pi}Q$, then there are two cases
\begin{enumerate}
    \item if $P_1\step{\piact}_{\pi}Q$. By induction hypothesis, we have $\pi_K(P_1)\step{\pi_K(\piact)}_{\pi}\pi_K(Q)$. Hence, we have $\pi_K(P_1+P_2)\step{\pi_K(\piact)}_{\pi}\pi_K(Q)$.
    \item If $P_2\step{\piact}_{\pi}Q$. The proof is symmetric with the previous case.
\end{enumerate}
\item If the transition is $P_1\parc P_2\step{\piact}_{\pi}Q_1\parc P_2$, then we have $P_1\step{\piact}_{\pi}Q_1$ and $\bn{\piact}\cap\fn{P_2}=\emptyset$. By induction hypothesis, we have $\pi_K(P_1)\step{\pi_K(\piact)}_{\pi}\pi_K(Q_1)$. Moreover, there is $\bn{\pi_K(\piact)}\cap\fn{\pi_K(P_2)}=\emptyset$. Hence we have $\pi_K(P_1\parc P_2)\step{\pi_K(\piact)}_{\pi}\pi_K(Q_1\parc P_2)$.
\item If the transition is $P_1\parc P_2\step{\piact}_{\pi}P_1\parc Q_2$. The  proof is symmetric with the previous case.
\item If the transition is $P_1\parc P_2\step{\tauact}_{\pi}Q_1\parc Q_2$, then we have $P_1\step{\outact{x}{y}}_{\pi}Q_1,\,P_2\step{\inact{x}{y}}_{\pi}Q_2$ (or the symmetrical case). By induction hypothesis, we have $\pi_K(P_1)\step{\pi_K(\outact{x}{y})}_{\pi}\pi_K(Q_1)$ and $\pi_K(P_2)\step{\pi_K(\inact{x}{y})}_{\pi}\pi_K(Q_2)$. Hence we have $\pi_K(P_1\parc P_2)\step{\pi_K(\tauact)}_{\pi}\pi_K(Q_1\parc Q_2)$.
\item If the transition is $P_1\parc P_2\step{\tauact}_{\pi}\restr{z}{(Q_1\parc Q_2)}$, then we have $P_1\step{\boutact{x}{z}}_{\pi}Q_1$, $P_2\step{\inact{x}{z}}_{\pi}Q_2$, and $z\notin\fn{P_2}$ (or the symmetrical case). By induction hypothesis, we have $\pi_K(P_1)\step{\pi_K(\boutact{x}{z})}_{\pi}\pi_K(Q_1)$ and $\pi_K(P_2)\step{\pi_K(\inact{x}{z})}_{\pi}\pi_K(Q_2)$. Moreover, there is $\pi_K(z)\notin\fn{\pi_K(P_2)}$. Hence, we have $\pi_K(P_1\parc P_2)\step{\pi_K(\tauact)}_{\pi}\pi_K(\restr{z}{(Q_1\parc Q_2)})$.

\item If the transition is $\restr{z}{P}\step{\piact}_{\pi}\restr{z}{Q}$, then we have $P\step{\piact}_{\pi} Q$ and $z\notin \piact$. By induction hypothesis, we have $\pi_K(P)\step{\pi_K(\piact)}_{\pi}\pi_K(Q)$. Moreover, there is $\pi_K(z)\notin\pi_K(\piact)$. Hence, we have $\pi_K(\restr{z}{P})\step{\pi_K(\piact)}_{\pi}\pi_K(\restr{z}{Q})$.
\item If the transition is $\restr{z}{P}\step{\boutact{x}{z}}_{\pi}Q$, then we have $P\step{\outact{x}{z}}Q$, and $z\neq x$.  By induction hypothesis, we have $\pi_K(P)\step{\pi_K(\outact{x}{z})}_{\pi}\pi_K(Q)$. Moreover, there is $\pi_K(z)\neq\pi_K(x)$. Hence, we have $\pi_K(\restr{z}{P})\step{\pi_K(\boutact{x}{z})}_{\pi}\pi_K(Q)$.
\item If the transition is $\repl{P}\step{\piact}_{\pi}Q\parc \repl{P}$, then we have $P\step{\piact}_{\pi}Q$. By induction hypothesis, we have $\pi_K(\repl{P})\step{\pi_K(\piact)}_{\pi}\pi_K(Q)$. Hence we have  $\pi_K(P)\step{\pi_K(\piact)}_{\pi}\pi_K(Q\parc \repl{P})$.
\item If the transition is $\repl{P}\step{\tauact}_{\pi}(Q_1\parc Q_2)\parc \repl{P}$, then we have $P\step{\outact{x}{y}}_{\pi}Q_1$, and $P\step{\inact{x}{y}}_{\pi}Q_2$. By induction hypothesis, we have $\pi_K(P)\step{\pi_K(\outact{x}{y})}_{\pi}\pi_K(Q_1)$, and $\pi_K(P)\step{\pi_K(\inact{x}{y})}_{\pi}\pi_K(Q_2)$. Hence we have  $\pi_K(P)\step{\pi_K(\tauact)}_{\pi}\pi_K((Q_1\parc Q_2)\parc \repl{P})$.
\item If the transition is $\repl{P}\step{\tauact}_{\pi}\restr{z}{(Q_1\parc Q_2)}\parc \repl{P}$, then we have $P\step{\boutact{x}{z}}_{\pi}Q_1$, and $P\step{\inact{x}{z}}_{\pi}Q_2$. By induction hypothesis, we have $\pi_K(P)\step{\pi_K(\boutact{x}{z})}_{\pi}\pi_K(Q_1)$, and $\pi_K(P)\step{\pi_K(\inact{x}{z})}_{\pi}\pi_K(Q_2)$. Hence we have  $\pi_K(P)\step{\pi_K(\tauact)}_{\pi}\pi_K(\restr{z}{(Q_1\parc Q_2)}\parc \repl{P})$.

\item If the transition is $P\step{\piact}_{\pi} Q$, and $P\aeq P_1$, then $P_1\step{\piact}_{\pi} Q$. By induction hypothesis $\pi_K(P_1)\step{\pi_K(\piact)}_{\pi} \pi_K(Q)$. Moreover, using the statement $\pi_K(P)\aeq P$, we have $\pi_K(P)\aeq P\aeq P_1\aeq\pi_K(P_1)$. Hence, we have $\pi_K(P)\step{\pi_K(\piact)}_{\pi}\pi_K(Q)$.
\end{enumerate}

Using the induction on the depth of the proof tree of the transition, we have shown that for every transition $P\step{\piact}_{\pi}Q$, we have $\pi_K(P)\step{\pi_K(\piact)}_{\pi}\pi_K(Q)$. We conclude that $K$ is a support of $\step{}_{\pi}$. Therefore, the transition system $\T(P)$ is an effective legal transition system with atoms.

As a consequence of Theorem~\ref{thm:RTMA-effective}, for every $\pi$-calculus process $P$, the transition system $\T(P)$ is nominally executable.
\end{proof}

\section{mCRL2}~\label{app:mcrl2}
\paragraph*{Proof of Corollary~\ref{cor:mCRL2}}
\begin{proof}
Consider the following mCRL2 specification:
\begin{eqnarray*}
&&\mathit{act\, num:\, Nat};\\
&&\mathit{init\,sum\, v:\, Nat\, .\, num(2 * v)};
\end{eqnarray*}

It defines a transition system that includes a set of transitions from the initial state labelled by all even natural numbers as follows:
\begin{equation*}
\{(\uparrow,2n,\downarrow)\mid n\in\mathbb{N}\}
\enskip.
\end{equation*}

This transition system does not have a finite support, therefore, it is not nominally executable.
\end{proof}
}

\end{document}